\newtheorem{theorem}{Theorem}
\newtheorem{conjecture}{Conjecture}
\newtheorem{lemma}[theorem]{Lemma}
\newtheorem{observation}[theorem]{Observation}
\newtheorem{proposition}[theorem]{Proposition}
\newcommand{\exc}{{\rm exc}}
\newcommand{\minexc}{{\rm minexc}}
\newcommand{\tsp}{{\rm tsp}}
\newcommand{\Prob}{{\mathbb P}}
\newcommand{\EE}{{\mathbb E}\;}
\newcommand{\RR}{{\mathbb R}}
\title{Graphic TSP in cubic graphs}
\author{Zden\v ek Dvo\v r\'ak\thanks{Charles University, Prague, Czech Republic.
E-mail: {\tt rakdver@iuuk.mff.cuni.cz}. Supported by the Center of Excellence -- Inst.~for Theor.~Comp.~Sci., Prague, project P202/12/G061 of Czech Science Foundation.}\and
Daniel Kr{\'a}l'\thanks{Mathematics Institute, DIMAP and Department of Computer Science, University of Warwick, Coventry CV4 7AL, UK. E-mail: {\tt d.kral@warwick.ac.uk}. This work has received funding from the European Research Council (ERC) under the European Union’s Horizon 2020 research and innovation programme (grant agreement No 648509). This publication reflects only its authors' view; the European Research Council Executive Agency is not responsible for any use that may be made of the information it contains.}\and
Bojan Mohar\thanks{Department of Mathematics, Simon Fraser University, Burnaby, B.C. V5A 1S6.
E-mail: {\tt mohar@sfu.ca}.  Supported in part by an NSERC Discovery Grant (Canada),
by the Canada Research Chairs program, and by a Research Grant of ARRS (Slovenia).}
}
\date{}
\begin{document}
\maketitle

\begin{abstract}
We present a polynomial-time $9/7$-approximation algorithm for the graphic TSP for cubic graphs,
which improves the previously best approximation factor of $1.3$ for $2$-connected cubic graphs and
drops the requirement of $2$-connectivity at the same time.
To design our algorithm,
we prove that every simple $2$-connected cubic $n$-vertex graph contains
a spanning closed walk of length at most $9n/7-1$, and that such a walk can be found in polynomial time.
\end{abstract}

\section{Introduction}

The Travelling Salesperson Problem (TSP) is one of the most central problems in combinatorial optimization.
The problem asks to find a shortest closed walk visiting each vertex at least once in an edge-weighted graph, or
alternatively to find a shortest Hamilton cycle in a complete graph where the edge weights satisfy the triangle inequality.
The Travelling Salesperson Problem is notoriously hard.
The approximation factor of $3/2$ established by Christofides~\cite{christo} has not been improved for 40 years
despite a significant effort of many researchers.
The particular case of the problem, the Hamilton Cycle Problem, was among the first problems to be shown to be NP-hard.
Moreover, Karpinski, Lampis and Schmied~\cite{KLS13} have recently shown that
the Travelling Salesperson Problem is NP-hard to approximate within the factor $123/122$,
improving the earlier inapproximability results of Lampis~\cite{cor13} and of Papadimitriou and Vempala~\cite{papavemp}.
In this paper, we are concerned with an important special case of the Travelling Salesperson Problem, the graphic TSP,
which asks to find a shortest closed walk visiting each vertex at least once in a graph where all edges have unit weight.
We will refer to such a walk as to a \emph{TSP walk}.

There have recently been a lot of research focused on approximation algorithms for the graphic TSP,
which was ignited by the breakthrough of the $3/2$-approximation barrier in the case of $3$-connected
cubic graphs by Gamarnik, Lewenstein and Sviridenko~\cite{cor08}.
This was followed by the improvement of the $3/2$-approximation factor for the general graphic TSP
by Oveis Gharan, Saberi and Singh~\cite{cor16}.
Next, M\"omke and Svensson~\cite{graprox} designed a $1.461$-approximation algorithm for the problem and
Mucha~\cite{cor15} showed that their algorithm is actually a $13/9$-approximation algorithm.
This line of research culminated with the $7/5$-approximation algorithm of Seb\"o and Vygen~\cite{cor21}.

We here focus on the case of graphic TSP for cubic graphs,
which was at the beginning of this line of improvements.
The $(3/2-5/389)$-approximation algorithm of Gamarnik et al.~\cite{cor08} for $3$-connected cubic graphs
was improved by Aggarwal, Garg and Gupta~\cite{cor01}, who designed a $4/3$-approximation algorithm.
Next, Boyd et al.~\cite{boyd} found a $4/3$-approximation algorithm for $2$-connected cubic graphs.
The barrier of the $4/3$-approximation factor was broken by Correa, Larr\'e and Soto~\cite{correa}
who designed a $(4/3-1/61236)$-approximation algorithm for this class of graphs.
The currently best algorithm for $2$-connected cubic graphs
is the $1.3$-approximation algorithm of Candr\'akov\'a and Lukot{\hskip -0.3ex}'ka~\cite{candluk},
based on their result on the existence of a TSP walk of length at most $1.3n-2$ in $2$-connected cubic $n$-vertex graphs.
We improve this result as follows.
Note that we obtain a better approximation factor and
Theorem~\ref{thm-alg} also applies to a larger class of graphs.

\begin{theorem}
\label{thm-main}
There exists a polynomial-time algorithm that for a given $2$-connected subcubic $n$-vertex graph
with $n_2$ vertices of degree two outputs a TSP walk of length at most
$$\frac{9}{7}n+\frac{2}{7}n_2-1\;.$$
\end{theorem}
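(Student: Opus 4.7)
I would attempt an induction on $n$, proving the full statement (including the $\tfrac{2}{7}n_2$ term) by applying the inductive hypothesis to strictly smaller $2$-connected subcubic graphs obtained by reducing a local substructure of the input graph $G$, with all graphs up to some fixed order dispatched as base cases by inspection.

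The shape of the bound $\tfrac{9}{7}n-1$ in the cubic case is suggestive of a $2$-factor argument: every bridgeless cubic graph admits a perfect matching $M$, and the complement $F=G\setminus M$ is a $2$-factor with, say, $k$ cycles. Doubling the $k-1$ edges of a spanning tree of the ``cycle-quotient'' of $F\cup M$ produces a connected spanning Eulerian multigraph of size $n+2(k-1)$, which meets the target bound as soon as $k\le \lfloor n/7\rfloor + 1$. So a major goal is to guarantee a $2$-factor with few cycles (and, in the subcubic case, an analogous Eulerian spanning subgraph in which the degree-$2$ vertices are forced into the factor).

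In the inductive step I would search for reducible substructures. Candidates include a non-trivial $2$-edge cut $\{e_1,e_2\}$, which splits $G$ into two smaller $2$-connected subcubic pieces whose walks are spliced through $e_1$ and $e_2$; a degree-$2$ vertex or a short chain of them, suppressed with the coefficient $\tfrac{2}{7}$ calibrated to the cost of reinstating such vertices; and short cycles or other small separating gadgets, contracted to a simpler graph with the walk lifted by a short local insertion. For each reduction one has to verify $\tfrac{9}{7}\Delta n+\tfrac{2}{7}\Delta n_2\ge \Delta L$, where $\Delta L$ is the extra walk length incurred when lifting the walk through the gadget; these amortised checks are what pin down the precise coefficients $\tfrac{9}{7}$ and $\tfrac{2}{7}$.

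The main obstacle, and the structural heart of the argument, is to show that every sufficiently large $2$-connected subcubic graph that resists all the reductions — necessarily cubic, $3$-edge-connected, of bounded-below girth, and free of every listed local configuration — nevertheless carries a $2$-factor with the required number of cycles (or the corresponding Eulerian spanning subgraph). This irreducibility analysis, leveraging the rigidity imposed by forbidding the reducible configurations, is where I expect the bulk of the technical work to lie. Polynomial running time then follows routinely, since every reduction and every direct construction is polynomial-time and the recursion has depth linear in~$n$.
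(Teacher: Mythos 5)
Your overall architecture --- local reductions that shrink the graph at a controlled amortised cost, followed by an analysis of the irreducible graphs --- is the same as the paper's, and your amortised inequality $\tfrac{9}{7}\Delta n+\tfrac{2}{7}\Delta n_2\ge\Delta L$ is exactly the bookkeeping the paper performs (phrased there via the excess $\exc(F)=2c(F)+i(F)$ of a spanning Eulerian subgraph and the identity $\tsp(G)=n-2+\minexc(G)$). But the proposal has a genuine gap precisely where you say you expect the bulk of the work to lie: you do not supply any mechanism for showing that an irreducible graph carries a single $2$-factor with at most roughly $n/7$ cycles, and no such statement is proved (or used) in the paper. What the paper actually does in the irreducible (``clean'') case is quite different from ``find one good $2$-factor'': it writes the all-$\tfrac13$ vector as a convex combination of perfect matchings via Edmonds' matching polytope, takes the complementary spanning Eulerian subgraphs so that every edge appears with probability exactly $2/3$, and bounds the \emph{expected} excess by a local discharging argument in which each degree-three vertex $v$ loses $\tfrac13\bigl(\tfrac{2}{\ell_1}+\tfrac{2}{\ell_2}+\tfrac{2}{\ell_3}\bigr)$, where the $\ell_i$ are the shortest cycle lengths through the three pairs of edges at $v$; the reductions are designed exactly to force these ``types'' to dominate $(7,7,7)$ or the carefully catalogued six-cycle patterns. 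Averaging over the polytope is the idea that makes the count come out to $\tfrac{2}{7}(n+n_2)$; a single $2$-factor with so few cycles is not guaranteed by any argument you give.

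Two further points. First, your target $k\le\lfloor n/7\rfloor+1$ is slightly off: $n+2(k-1)\le\tfrac97 n-1$ requires $k\le\tfrac n7+\tfrac12$, so for $n$ divisible by $7$ you would need $k\le n/7$, one less than you allow. Second, your picture of the irreducible graphs as necessarily cubic and $3$-edge-connected does not match what reductions of this kind can achieve: the paper's clean graphs may retain degree-two vertices (only kept off cycles of length at most seven) and many $2$-edge-cuts (these are central to the six-cycle case analysis), and a subcubic graph need not admit any spanning $2$-regular subgraph at all, which is why the paper works with spanning Eulerian subgraphs allowing isolated vertices rather than with $2$-factors. Your parenthetical about ``forcing the degree-$2$ vertices into the factor'' points in the wrong direction; the extra freedom of leaving vertices isolated is what makes the averaging argument close.
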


\begin{theorem}
\label{thm-alg}
There exists a polynomial-time $9/7$-approximation algorithm for the graphic TSP for cubic graphs.
\end{theorem}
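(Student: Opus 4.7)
The plan is to reduce the general cubic case to Theorem~\ref{thm-main} via a block-tree decomposition. In any cubic graph every cut vertex is incident to at least one bridge, so every non-trivial block of $G$ is a $2$-connected subcubic graph whose degree-$2$ vertices are precisely the endpoints of bridges joining the block to the rest of $G$. I classify the cut vertices of $G$: let $|C_A|$ be the number incident to three bridges (each a singleton component of the bridge tree) and $|C_B|$ the number incident to exactly one bridge (each appearing as a degree-$2$ vertex of exactly one non-trivial block); a quick case analysis using that $2$-edge-connected subcubic graphs have minimum degree $2$ shows that these are the only two types. Writing $H_1,\dots,H_t$ for the non-trivial blocks with $n_{i,2}$ degree-$2$ vertices and $b$ for the number of bridges, one has $\sum_i |V(H_i)| = n-|C_A|$, $\sum_i n_{i,2}=|C_B|$ and $2b=3|C_A|+|C_B|$.

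I apply Theorem~\ref{thm-main} to each $H_i$ to obtain in polynomial time a TSP walk of $H_i$ of length at most $\tfrac{9}{7}|V(H_i)| + \tfrac{2}{7}n_{i,2} - 1$. Traversing the block-cut tree in depth-first order, I splice these walks at the shared cut vertices and use every bridge exactly twice, obtaining a TSP walk of $G$ of total length
\[
L \;\le\; \sum_{i=1}^{t}\Bigl(\tfrac{9}{7}|V(H_i)| + \tfrac{2}{7}n_{i,2} - 1\Bigr) + 2b \;=\; \tfrac{1}{7}\bigl(9n + 12|C_A| + 9|C_B| - 7t\bigr).
\]

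The key analytical step is the lower bound $\mathrm{OPT}(G) \ge n + 2|C_A| + |C_B|$. Fix any TSP walk $W$ of $G$. Every bridge is used at least twice by $W$, contributing at least $2b$ to $|W|$. Within each non-trivial block $H_i$, let $W_i$ be the multiset of edges of $H_i$ used by $W$; I claim $W_i$ has positive degree at every vertex of $H_i$. Otherwise, some $v \in V(H_i)$ has no $H_i$-edge used by $W$; because every visited vertex has even walk-degree at least $2$ and a degree-$3$ vertex of $H_i$ has all three edges in $H_i$, $v$ must be a degree-$2$ vertex of $H_i$, and all edges of $W$ incident with $v$ are copies of its unique incident bridge $e_v$. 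Removing $e_v$ from the bridge tree splits it into two non-empty subtrees, and any transition of $W$ from the vertices of one subtree to those of the other must at some step cross $e_v$ and leave $v$ via an $H_i$-edge --- which is ruled out by hypothesis. Thus $W$ never moves between the two sides of $e_v$, contradicting that $W$ visits every vertex of~$G$. Hence $|W_i|\ge|V(H_i)|$, and summing, $|W| \ge 2b + \sum_i|V(H_i)| = n+2|C_A|+|C_B|$.

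A direct computation then gives $\tfrac{9}{7}\,\mathrm{OPT}(G) - L \ge \tfrac{6}{7}|C_A| + t \ge 0$, so $L \le \tfrac{9}{7}\,\mathrm{OPT}(G)$ and the algorithm is a $9/7$-approximation. The construction is polynomial since Theorem~\ref{thm-main} and the block decomposition are polynomial. The only non-trivial ingredient is the ``no isolated vertex'' argument above; everything else is bookkeeping.
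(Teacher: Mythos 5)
Your proposal is correct and follows essentially the same route as the paper: remove the bridges, apply Theorem~\ref{thm-main} to each non-trivial ($2$-connected) block, reconnect by doubling every bridge, and compare against the lower bound $\mathrm{OPT}(G)\ge n+2|C_A|+|C_B|$, which is exactly the paper's bound $(n-n_0)+2|F|$ in your notation. The only real difference is presentational --- you carry the terms $|C_A|,|C_B|,t$ explicitly and prove the lower bound in detail where the paper asserts it and instead absorbs the $\tfrac{2}{7}n_2$ term via $n_2\le 2k-2$ --- so there is nothing to flag.
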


At this point, we should remark that we have not attempted to optimize the running time of our algorithm.
Also note that our approximation factor matches the approximation factor for cubic bipartite graphs
in the algorithm Karp and Ravi~\cite{karpra},
who designed a $9/7$-approximation algorithm for the graphic TSP for cubic bipartite graphs.
However, van Zuylen~\cite{zuylen} has recently found a $5/4$-approximation algorithm for this class of graphs.
Both the result of Karp and Ravi, and the result of van Zuylen are based on finding
a TSP walk of length of at most $9n/7$ and $5n/4$, respectively, in an $n$-vertex cubic bipartite graph.
On the negative side, Karpinski and Schmied~\cite{KS13} showed that
the graphic TSP is NP-hard to approximate within the factor of $535/534$ in the general case and
within the factor $1153/1152$ in the case of cubic graphs.

Our contribution in addition to improving the approximation factor for graphic TSP for cubic graphs
is also in bringing several new ideas to the table.
The proof of our main result, Theorem~\ref{thm-main}, differs from the usual line of proofs in this area.
In particular,
to establish the existence of a TSP walk of length at most $9n/7-1$ in a $2$-connected cubic $n$-vertex graph,
we allow subcubic graphs as inputs and perform reductions in this larger class of graphs.
While we cannot establish the approximation factor of $9/7$ for this larger class of graphs,
we are still able to show that
our techniques yields the existence of a TSP walk of length at most $9n/7-1$ for cubic $n$-vertex graphs.
In addition, unlike in the earlier results,
we do not construct a TSP walk in the final reduced graph by linking cycles in a spanning $2$-regular subgraph of the reduced graph
but we consider spanning subgraphs with vertices of degree zero and two, which gives us additional freedom.

We conclude with a brief discussion on possible improvements of the bound from Theorem~\ref{thm-main}.
In Section~\ref{sec-lb}, we give a construction of a $2$-connected cubic $n$-vertex graph
with no TSP walks of length smaller than $\frac{5}{4}n-2$ (Proposition~\ref{prop-qrepl}) and
a $2$-connected subcubic $n$-vertex graph with $n_2=\Theta(n)$ vertices of degree two
with no TSP walks of length smaller than $\frac{5}{4}n+\frac{1}{4}n_2-1$ (Proposition~\ref{prop-drepl});
the former construction was also found independently by Maz\'ak and Lukot{\hskip -0.3ex}'ka~\cite{mazluk}.
We believe that these two constructions provide the tight examples for an improvement of Theorem~\ref{thm-main} and
conjecture the following. We also refer to a more detailed discussion at the end of Section~\ref{sec-lb}.

\begin{conjecture}\label{conj-main}
Every $2$-connected subcubic $n$-vertex graph with $n_2$ vertices of degree
has a TSP walk of length at most
$$\frac{5}{4}n+\frac{1}{4}n_2-1\;.$$
\end{conjecture}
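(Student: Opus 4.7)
The plan is to prove Conjecture~\ref{conj-main} by strengthening the reductive framework underlying Theorem~\ref{thm-main}: I would keep the same overall inductive strategy but replace the slacks $(\tfrac{9}{7},\tfrac{2}{7})$ by the tighter ones $(\tfrac{5}{4},\tfrac{1}{4})$ and re-verify that each reduction still respects the target. First I would eliminate non-trivial $2$-edge-cuts in the standard way, observing that splitting a $2$-connected subcubic graph along such a cut produces two smaller instances whose TSP walks recombine, along the cut edges, at a cost that stays within the sharper slack budget. Analogous contractions of $3$-edge-cuts in the cubic case must be re-examined; the delicate point is that the tighter coefficients leave less room to absorb merge overhead, so each reduction rule has to be checked individually.

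Once $G$ has no small edge-cuts amenable to reduction, the central task is to produce a spanning connected Eulerian multigraph on $V(G)$ of total multiplicity at most $\tfrac{5}{4}n+\tfrac{1}{4}n_2-1$. Following the suggestion made in the authors' own introduction, I would look for a spanning subgraph $F$ of $G$ with vertex degrees in $\{0,2\}$, that is, a disjoint union of cycles together with a (possibly empty) set $I$ of isolated vertices, chosen so that the walk-length parameter --- a combination of $|I|$ and the number of cycle components of $F$ --- is minimised. A TSP walk is then assembled by traversing the cycles of $F$, visiting the vertices of $I$, and linking everything together via short detours in $G\setminus F$; each detour is traversed twice in the walk, which drives the whole accounting.

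The combinatorial heart of the argument is the existence of such an $F$ for which the resulting walk meets the conjectured bound. In triangle-free graphs this is relatively clean: every cycle of $F$ has length at least $4$, so the cost-per-vertex of the cycle structure is at most $\tfrac{1}{4}$, and the van Zuylen-style arguments known for cubic bipartite graphs (which reach precisely $\tfrac{5}{4}n$) should adapt. The genuine difficulty comes from triangles of $G$: a triangle of $G$ that survives in $F$ becomes a $3$-cycle component, inflating the cost per vertex above the target rate. The proof must therefore, at each triangle of $G$, either isolate one of its vertices (contributing to $|I|$) or route $F$ around the triangle through a longer cycle, and these local choices must be made globally consistent.

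I expect the principal obstacle to be the global handling of triangle-dense regions. The extremal constructions of Propositions~\ref{prop-qrepl} and~\ref{prop-drepl} are gadget-replacement graphs in which triangles (or similarly constrained short-cycle gadgets) are packed together; at each gadget the right local choice of $F$ is essentially forced, and the conjecture asserts that the global count of these forced local costs is exactly matched by the allotted slack. A successful proof will therefore likely require a discharging or gadget-analysis argument that partitions the vertex set into such blocks and reasons block by block, together with a matching-extension lemma ensuring that the local $F$-choices at different blocks can be glued together consistently. Whether this can be achieved by purely structural means, or whether it requires a linear-programming relaxation in the style of Seb\"o and Vygen, remains at this stage the main open point.
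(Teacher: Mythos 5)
This statement is Conjecture~\ref{conj-main}: the paper does not prove it, and explicitly presents it as open. Your text is likewise not a proof but a research programme, and you say so yourself (``remains at this stage the main open point''), so there is a genuine gap --- namely, the entire combinatorial core of the argument. That said, it is worth calibrating your plan against what the paper actually does, because two of your premises are off. First, tightening the reduction slacks from $\tfrac{2}{7}$ to $\tfrac{1}{4}$ is not the issue: the paper's reductions are already proved with the $\tfrac{1}{4}$ rate (inequality~(\ref{eq-redu})), precisely with an eye toward this conjecture. The bottleneck is the base case on ``clean'' graphs, Lemma~\ref{lemma-expected}, where the fractional-perfect-matching/discharging argument only delivers excess $\tfrac{2}{7}(n+n_2)$; to reach $\tfrac{1}{4}(n+n_2)$ one would have to upgrade that probabilistic argument, and the obstruction there is cycles of length $6$ and $7$ all of whose vertices have degree three --- not triangles, which (like every cycle with at most four degree-three vertices) are already reduced away by Lemmas~\ref{lemma-c2e}--\ref{lemma-no4}.

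Second, your arithmetic for the triangle-free case is wrong. In the accounting of Observation~\ref{obs-exc}, a cycle component of $F$ on $k$ vertices contributes $2$ to $\exc(F)$, i.e.\ $2/k$ per vertex, and the target $\tsp(G)\le \tfrac{5}{4}n-1$ for cubic $G$ requires $\exc(F)\le n/4+1$, hence average cycle length at least $8$, not $4$. A cycle cover by $4$-cycles gives a walk of length $3n/2-2$, which is the trivial bound; so ``every cycle of $F$ has length at least $4$'' buys essentially nothing, and the adaptation of the van Zuylen-style argument you invoke is doing all the work you have not supplied. Until the clean-graph excess bound is improved from $\tfrac{2}{7}$ to $\tfrac{1}{4}$ per vertex --- or some genuinely different global argument is found --- the conjecture remains open, and your proposal does not close it.
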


We would like to stress that it is important that Conjecture~\ref{conj-main} deals with simple graphs,
i.e., graphs without parallel edges. Indeed, consider the cubic graph $G$ obtained as follows:
start with the graph that has two vertices of degree three that are joined by three paths,
each having $2\ell$ internal vertices of degree two, and replace every second edge of these paths with a pair of parallel edges
to get a cubic graph. The graph $G$ has $n=6\ell+2$ vertices but no TSP walk of length shorter than $8\ell+2$.

\section{Preliminaries}\label{sec-prelim}

In this section, we fix the notation used in the paper and
make several simple observations on the concepts that we use.

All graphs considered in this paper are \emph{simple}, i.e., they do not contain parallel edges.
When we allow parallel edges, we will always emphasize this by referring to a considered graph as to a \emph{multigraph}.
We will occasionally want to stress that a graph obtained during the proof has no parallel edges and
we will do so by saying that it is simple even if saying so is superfluous.
The \emph{underlying graph} of a multigraph $H$ is the graph obtained
from $H$ by suppressing parallel edges,
i.e., replacing each set of parallel edges by a single edge.

If $G$ is a graph, its vertex set is denoted by $V(G)$ and its edge set by $E(G)$.
Further, the number of vertices of $G$ is denoted by $n(G)$ and the number of its vertices of degree two by $n_2(G)$.
If $w$ a vertex of $G$,
then $G-w$ is a graph obtained by deleting the vertex $w$ and all the edges incident with $w$.
Similarly, if $W$ is a set of vertices of $G$, then $G-W$ is the graph obtained by deleting
all vertices of $W$ and edges incident with them.
Finally, if $F$ is a set of its edges, then $G\setminus F$ is the graph obtained from $G$ by removing the edges of $F$
but none of the vertices.

A graph with all vertices of degree at most three is called \emph{subcubic}.
We say that a graph $G$ is \emph{$k$-connected}
if it has at least $k+1$ vertices and $G-W$ is connected for any $W\subseteq V(G)$ containing at most $k-1$ vertices.
If $G$ is connected but not $2$-connected, then a vertex $v$ such that $G-v$ is not connected is called a \emph{cut-vertex}.
Maximal $2$-connected subgraphs of $G$ are called \emph{blocks}.
Note that a vertex of a graph is contained in two or more blocks if and only if it is a cut-vertex.
A subset $F$ of the edges of a graph $G$ is an \emph{edge-cut}
if the graph $G\setminus F$ have more components than $G$ and $F$ is minimal with this property.
Such a subset $F$ containing exactly $k$ edges will also be referred to as \emph{$k$-edge-cut}.
An edge forming a $1$-edge-cut is called a \emph{cut-edge}.
A graph $G$ is \emph{$k$-edge-connected} if it has no $\ell$-edge-cut for $\ell\le k$.
Note that a subcubic graph $G$ with at least two vertices is $2$-connected if and only if $2$-edge-connected.

A \emph{$\theta$-graph} is a simple graph obtained from the pair of vertices joined by three parallel edges
by subdividing some of the edges several times.
In other words,
a $\theta$-graph is a graph that contains two vertices of degree three joined by three paths formed by vertices of degree two
such that at most one of these paths is trivial, i.e., it is a single edge.
In our consideration, we will need to consider a special type of cycles of length six in subcubic graphs,
which resembles $\theta$-graphs.
A cycle $K=v_1\ldots v_6$ of length six in a subcubic graph $G$ is a \emph{$\theta$-cycle},
if all vertices $v_1,\ldots,v_6$ have degree three, their neighbors $x_1$, \ldots, $x_6$ outside of $K$ are pairwise distinct, and
if $G-V(K)$ has three connected components,
one containing $x_1$ and $x_2$, one containing $x_4$ and $x_5$, and one containing $x_3$ and $x_6$.
See Figure \ref{fig-thetacycle} for an example.
The vertices $v_3$ and $v_6$ of the cycle $K$ will be referred to as the \emph{poles} of the $\theta$-cycle $K$.

\begin{figure}
\begin{center}
\includegraphics[width=60mm]{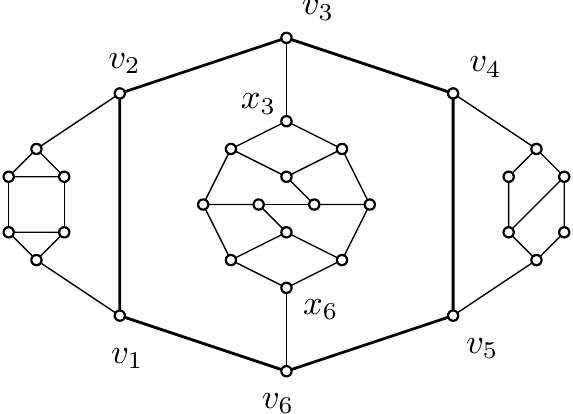}
\end{center}
\caption{A $\theta$-cycle with poles $v_3$ and $v_6$.}\label{fig-thetacycle}
\end{figure}

We say that a multigraph is \emph{Eulerian} if all its vertices have even degree;
note that we do not require the multigraph to be connected,
i.e., a multigraph has an Eulerian tour if and only if it is Eulerian and connected.
A subgraph is \emph{spanning} if it contains all vertices of the original graphs,
possibly some of them as isolated vertices, i.e., vertices of degree zero.
It is easy to relate the length of the shortest TSP walk in a graph $G$
to the size of Eulerian multigraphs using edges of $G$ as follows.
To simplify our presentation, let $\tsp(G)$ denote the length of the shortest TSP walk in a graph $G$.

\begin{observation}\label{obs-ep}
For every graph $G$,
$\tsp(G)$ is equal to the minimum number of edges of a connected Eulerian multigraph $H$ 
such that the underlying graph of $H$ is a spanning subgraph of $G$.
\end{observation}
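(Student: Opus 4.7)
The plan is to prove the stated equality by establishing the two inequalities separately, each via an explicit construction. Both directions rely on the standard equivalence between closed walks on a graph and Eulerian tours of multigraphs that are built on the traversed edges.

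For the inequality $\tsp(G)\le |E(H)|$ where $H$ is any connected Eulerian multigraph whose underlying graph is a spanning subgraph of $G$, I would simply take an Eulerian tour of $H$, which exists because $H$ is connected and all its degrees are even. Reading off this tour as a sequence of vertices produces a closed walk in $G$ (since every multigraph edge corresponds to an edge of $G$) of length $|E(H)|$ that visits every vertex (since $H$ is spanning and connected). Minimizing over such $H$ gives $\tsp(G)\le \min|E(H)|$.

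For the converse $\min|E(H)|\le \tsp(G)$, I would take a shortest TSP walk $W$ in $G$, of length $\tsp(G)$, and construct a multigraph $H$ on vertex set $V(G)$ by placing, for each traversal by $W$ of an edge $uv$ of $G$, one copy of the edge $uv$. Then $|E(H)|=\tsp(G)$ by construction, and the underlying graph of $H$ is a spanning subgraph of $G$: every edge of $H$ comes from an edge of $G$, and every vertex of $G$ appears on $W$. The multigraph $H$ is connected because $W$ is a single closed walk touching all vertices and using exactly the edges of $H$. Finally, every vertex $v$ has even degree in $H$: each passage of $W$ through $v$ (including the base vertex, because $W$ is closed) uses one edge to enter $v$ and one edge to leave $v$, so contributes exactly $2$ to the degree of $v$ in $H$.

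Combining the two inequalities yields the claimed equality. The only point that requires any care is the degree-parity verification in the second direction at the base vertex of the closed walk; this is handled by the fact that the walk is closed, so the ``first'' departure and ``last'' arrival at the base vertex together contribute $2$ to its degree just as any intermediate passage does. Once this is in place, the statement is essentially a reformulation of the definitions together with the classical fact that a connected multigraph with all even degrees admits an Eulerian tour.
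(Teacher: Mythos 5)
Your proof is correct and follows essentially the same two-inequality argument as the paper: one direction by converting an Eulerian tour of $H$ into a TSP walk, the other by building a multigraph from the edge traversals of a shortest TSP walk. The extra care you take with the degree parity at the base vertex is a detail the paper leaves implicit, but the approach is identical.
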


\begin{proof}
Let $W$ be a TSP walk of length $\tsp(G)$, and
let $H$ be the multigraph on the same vertex set as $G$ such that
each edge $e$ of $G$ is included to $H$ with multiplicity equal to the number of times that it is used by $W$.
In particular, edges not traversed by $W$ are not included to $H$ at all.
Clearly, the multigraph $H$ is connected and Eulerian,
the number of its edges is equal to the length of $W$ and
its underlying graph is a spanning subgraph of $G$.

We next establish the other inequality claimed in the statement.
Let $H$ be a connected Eulerian multigraph whose underlying graph is a spanning subgraph of $G$, and
$H$ has the smallest possible number of edges.
A closed Eulerian tour in $H$ yields a TSP walk in $G$ (just follow the tour in $G$) and
the length of this TSP walk is equal to the number of edges of $H$.
Hence, $\tsp(G)$ is at most the number of edges of $H$.
\end{proof}

We now explore the link between Eulerian spanning subgraphs and
the minimum length of a TSP walk further.
For a graph $G$, let $c(F)$ denote the number of non-trivial components of $F$,
i.e., components formed by two or more vertices, and
let $i(F)$ be the number of isolated vertices of $F$.
We define the \emph{excess} of a graph $F$ as
$$\exc(F)=2c(F)+i(F).$$
If $G$ is a subcubic graph, we define 
$$\minexc(G)=\min\;\{\exc(F):\text{$F$ spanning Eulerian subgraph of $G$}\}.$$
Note that any subcubic Eulerian graph $F$ is a union of $c(F)$ cycles and $i(F)$ isolated vertices,
i.e., the spanning subgraph $F$ of a subcubic graph $G$ with $\exc(F)=\minexc(G)$ must also have this structure.
The values of $\minexc(G)$ for simple-structured graphs are given in the next observation (note that
the condition $k_2\not=0$ implies that the $\theta$-graph is simple).

\begin{observation}
\label{obs-nontarg}
The following holds.
\begin{enumerate}
\item If $G$ is a cycle, then $\minexc(G)=2<\frac{n(G)+n_2(G)}{4}+1$.
\item If $G=K_4$, then $\minexc(G)=2=\frac{n(G)+n_2(G))}{4}+1$.
\item If $G$ is $\theta$-graph with $k_1$, $k_2$ and $k_3$ vertices of degree two on the paths joining its two
      vertices of degree three and $k_1\le k_2\le k_3$ and $k_2\not=0$, $\minexc(G)=2+k_1\le\frac{n(G)+n_2(G)}{4}+1$.
\end{enumerate}
\end{observation}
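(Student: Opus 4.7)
The plan is to analyze each of the three families directly: identify all possible spanning Eulerian subgraphs, pick the one of smallest excess, and verify the stated numerical bound.

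For part~(1), in a cycle $G$ every vertex has degree two, so in any spanning Eulerian subgraph $F$ each vertex has degree $0$ or $2$; propagating the parity condition around the cycle forces every maximal run of ``included'' edges to close up, leaving only the possibilities $F=G$ (a single cycle, $\exc(F)=2$) and $F$ consisting entirely of isolated vertices ($\exc(F)=n(G)$). Hence $\minexc(G)=2$, and because $n(G)=n_2(G)\ge 3$ the bound $2<(n(G)+n_2(G))/4+1$ is immediate. For part~(2), any spanning Eulerian subgraph $F$ with at least one edge has some non-trivial component and therefore $\exc(F)\ge 2$, while the $4$-cycle on $V(K_4)$ is a spanning Eulerian subgraph attaining this bound; since $n(G)=4$ and $n_2(G)=0$, the right-hand side of the claimed inequality equals $2$.

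Part~(3) carries the main idea. Let $u$ and $v$ be the two degree-three vertices of the $\theta$-graph $G$ and let $P_1,P_2,P_3$ be the paths joining them, with $k_1\le k_2\le k_3$ internal degree-two vertices respectively. Every internal vertex of $P_i$ has degree two in $G$, so its degree in any Eulerian subgraph $F$ must be $0$ or $2$; iterating along the path shows that each $P_i$ is either entirely contained in $F$ or entirely disjoint from $F$. The parity condition at $u$ (equivalently, at $v$) then forces the number of paths included in $F$ to be $0$ or $2$. The first option gives $\exc(F)=n(G)$; choosing the two paths $P_i,P_j$ and omitting $P_\ell$ produces one non-trivial component (a cycle through $u$ and $v$) together with $k_\ell$ isolated vertices, so $\exc(F)=2+k_\ell$. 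Minimizing over $\ell$ yields $\minexc(G)=2+k_1$.

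The concluding inequality is then pure arithmetic: substituting $n(G)=2+k_1+k_2+k_3$ and $n_2(G)=k_1+k_2+k_3$, the claim $2+k_1\le (n(G)+n_2(G))/4+1$ reduces to $k_2+k_3\ge k_1+1$, which follows from $k_3\ge k_1$ together with $k_2\ge 1$. The only mildly subtle step is the propagation argument in part~(3), asserting that in this setting each of the three paths behaves as an atomic ``block'' with respect to Eulerian subgraphs; once this is articulated, the rest of the proof is a short case analysis and a one-line inequality.
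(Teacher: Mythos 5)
Your proof is correct, and since the paper states Observation~\ref{obs-nontarg} without any proof, your direct verification is exactly the argument the authors leave implicit: the degree-parity propagation along paths of degree-two vertices pins down all spanning Eulerian subgraphs in each of the three cases, and the arithmetic (in particular reducing part~(3) to $k_2+k_3\ge k_1+1$, which uses $k_2\neq 0$) checks out. Nothing to add.
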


We next relate the quantity $\minexc(G)$ to the length of the shortest TSP walk in $G$.

\begin{observation}\label{obs-exc}
Let $G$ be a connected subcubic $n$-vertex graph, and
let $F$ be a spanning Eulerian subgraph $F$ of $G$.
There exists a polynomial-time algorithm that finds a TSP walk of length $n-2+\exc(F)$.
In addition, the minimum length of a TSP walk in $G$ is equal to
$$\tsp(G)=n-2+\minexc(G)\;.$$
\end{observation}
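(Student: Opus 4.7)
My plan is to handle both assertions together via the multigraph reformulation from Observation~\ref{obs-ep}: a TSP walk of length $\ell$ in $G$ corresponds to a connected Eulerian multigraph $H$ on $V(G)$ with $\ell$ edges and underlying graph contained in $G$. I will first establish the upper bound constructively (thereby also producing the algorithm), and then I will match it with a counting lower bound.

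For the constructive part, I start from the given spanning Eulerian subgraph $F$. Since $G$ is subcubic, every non-trivial component of $F$ is a cycle, so $|E(F)|=n-i(F)$ and $F$ has exactly $c(F)+i(F)$ components. Contracting each component of $F$ inside $G$ yields a connected multigraph, from which I extract a spanning tree $T$ with $c(F)+i(F)-1$ edges. Doubling the edges of $T$ and adjoining them to $F$ yields a connected Eulerian multigraph on
$$(n-i(F))+2(c(F)+i(F)-1)=n-2+\exc(F)$$
edges, and an Eulerian tour through it then gives a TSP walk of the claimed length. Each step is executable in polynomial time.

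For the matching lower bound, I pick any connected Eulerian multigraph $H$ achieving $\tsp(G)$ and let $F$ consist of those edges of $G$ appearing with odd multiplicity in $H$. A short parity argument using that the $H$-degree at each vertex is even shows that the $F$-degree at each vertex is also even, so $F$ is a spanning Eulerian subgraph of $G$. Setting $E' = H-F$ (as multigraphs), every edge of $E'$ has even, hence at least $2$, multiplicity, and $|E(E')|=\tsp(G)-|E(F)|=\tsp(G)-(n-i(F))$. The key structural observation is that all edges of $F$ live inside a single component of $F$, so the inter-component edges of $H$ are exactly those inter-component edges lying in $E'$; since $H$ is connected, there must be at least $c(F)+i(F)-1$ such edges, each contributing at least $2$ to $|E(E')|$. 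This gives $|E(E')|\ge 2(c(F)+i(F)-1)$, which rearranges to $\tsp(G)\ge n-2+\exc(F)\ge n-2+\minexc(G)$, completing the match.

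The only nontrivial point is this last structural step: recognising $F$ and $E'$ as submultigraphs of $H$ whose union is $H$, so that connectivity of $H$ forces the doubled remainder $E'$ to bridge the $c(F)+i(F)$ components of $F$. Once this is in hand, everything else is parity and edge counting, and the algorithmic content of the upper bound is just a spanning-tree computation on the component-contracted graph together with an Eulerian tour.
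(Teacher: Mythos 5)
Your proof is correct. The upper bound is exactly the paper's argument: double a set of $c(F)+i(F)-1$ connecting edges (a spanning tree of the component-contracted graph) and take an Eulerian tour, giving $n-i(F)+2(c(F)+i(F)-1)=n-2+\exc(F)$.

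Your lower bound, however, takes a genuinely different route from the paper's. The paper invokes the minimality of the optimal Eulerian multigraph $H$ twice: first to conclude that every edge has multiplicity at most two, and second to conclude that each doubled edge is a cut-edge of the underlying graph, so that deleting the doubled pairs increases the component count by exactly one per pair; this yields the exact identity $c(F)+i(F)=\frac{|E(H)|-|E(F)|}{2}+1$. You instead define $F$ as the odd-multiplicity part of $H$, verify by parity that $F$ is a spanning Eulerian subgraph, and then lower-bound $|E(H)|-|E(F)|$ by observing that every edge of $H$ joining distinct components of $F$ must have even (hence at least $2$) multiplicity, and connectivity of $H$ forces at least $c(F)+i(F)-1$ such underlying edges. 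This gives the needed inequality $|E(H)|\ge n-2+\exc(F)$ without using minimality of $H$ at all, so your argument actually shows that \emph{every} connected Eulerian multigraph $H$ whose underlying graph spans $G$ satisfies $|E(H)|\ge n-2+\minexc(G)$ --- a slightly more robust statement, at the cost of obtaining only an inequality where the paper gets an exact component count. Both arguments are sound and of comparable length; your parity decomposition is the more standard T-join-style manoeuvre, while the paper's minimality argument is more self-contained for readers who have just seen Observation~\ref{obs-ep}.
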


\begin{proof}
Let $F$ be a spanning Eulerian subgraph of $G$.
We aim to construct a TSP walk of length $n-2+\exc(F)$.
The subgraph $F$ has $c(F)+i(F)$ components.
Since $F$ is subcubic, each of the $c(F)$ non-trivial components of $F$ is a cycle,
which implies that $F$ has $n-i(F)$ edges.
Since $G$ is connected, there exists a subset $S$ of the edges of $G$ such that $|S|=c(F)+i(F)-1$ and
$F$ together with the edges of $S$ is connected.
Clearly, such a subset $S$ can be found in linear time.
Let $H$ be the multigraph obtained from $F$ by adding each edge of $S$ with multiplicity two.
Since $H$ is a connected Eulerian multigraph whose underlying graph is a spanning subgraph of $G$,
the proof of Observation~\ref{obs-ep} yields that it corresponds to an Eulerian tour of length
$$|E(H)|=|E(F)|+2|S|=n-i(F)+2(c(F)+i(F)-1)=n-2+\exc(F),$$
which can be found in linear time.
In particular, it holds that $\tsp(G)\le n-2+\exc(F)$.
Since the choice of $F$ was arbitrary,
we conclude that $\tsp(G)\le n-2+\minexc(G)$.

To finish the proof, we need to show that $n-2+\minexc(G)\le\tsp(G)$.
By Observation~\ref{obs-ep}, there exists a connected Eulerian multigraph $H$ with $|E(H)|=\tsp(G)$
such that its underlying graph is a spanning subgraph of $G$.
By the minimality of $|E(H)|$, every edge of $H$ has multiplicity at most two (otherwise,
we can decrease its multiplicity by $2$ while keeping the multigraph Eulerian and connected).
Similarly, removing any pair of parallel edges of $H$ disconnects $H$ (as the resulting multigraph would still be Eulerian),
i.e., the edge in the underlying graph of $H$ corresponding to a pair of parallel edges is a cut-edge.
Let $F$ be the graph obtained from $H$ by removing all the pairs of parallel edges.
The number of components of $F$ is equal to
$$c(F)+i(F)=\frac{|E(H)|-|E(F)|}{2}+1.$$
Since $F$ is subcubic, it is a union of $c(F)$ cycles and $i(F)$ isolated vertices,
which implies that $|E(F)|=n-i(F)$.
Consequently, we get that
$$c(F)+i(F)=\frac{|E(H)|-(n-i(F))}{2}+1,$$
which yields the desired inequality
$$n-2+\minexc(G)\le n-2+\exc(F)=n-2+2c(F)+i(F)=|E(H)|=\tsp(G).$$
\end{proof}

\section{Reductions}\label{sec-redu}

In this section, we present a way of reducing a 2-connected subcubic graph to a smaller one such that
a spanning Eulerian subgraph of the smaller graph yields a spanning Eulerian subgraph of the original graph
with few edges.
We now define this process more formally.
For subcubic graphs $G$ and $G'$, let $$\delta(G,G')=(n(G)+n_2(G))-(n(G')+n_2(G'))\;.$$
We say that a 2-connected subcubic graph $G'$ is a reduction of a 2-connected subcubic graph $G$
if $n(G')<n(G)$, $\delta(G,G')\ge 0$, and there exists a linear-time algorithm that
turns any spanning Eulerian subgraph $F'$ of $G'$ into a spanning Eulerian subgraph $F$ of $G$ satisfying
\begin{equation}\label{eq-redu}
\exc(F)\le \exc(F')+\tfrac{\delta(G,G')}{4}.
\end{equation}
For the proof of our main result,
it would be enough to prove the lemmas in this section with $\frac{1}{4}$ replaced by $\frac{2}{7}$ in (\ref{eq-redu}).
However, this would not simplify most of our arguments and
we believe that the stronger form of (\ref{eq-redu}) can be useful in an eventual proof of Conjecture~\ref{conj-main}.

The reductions that we consider involve altering a subgraph $K$ of a graph $G$ such that
$K$ has some additional specific properties. This subgraph sometimes needs to be provided as
a part of an input of an algorithm that constructs $G'$.
We say that a reduction is a \emph{linear-time reduction with respect to a subgraph $K$}
if there exists a linear-time algorithm that transforms $G$ to $G'$ given $G$ and a subgraph $K$ with the specific properties.
We will say that a reduction is a \emph{linear-time reduction}
if there exists a linear-time algorithm that both finds a suitable subgraph $K$ and performs the reduction.
If a graph $G$ admits such a reduction, we will say that $G$ has a linear-time reduction or
that $G$ has a linear-time reduction with respect to a subgraph $K$.

The reductions that we present are intended to be applied to an input subcubic $2$-connected graph
until the resulting graph is simple or it becomes having a special structure.
A subcubic $2$-connected graph is \emph{basic} if it is a cycle, a $\theta$-graph, or $K_4$.
A subcubic $2$-connected graph that is not basic will be referred to as \emph{non-basic}.
We say that a $2$-connected subcubic graph $G$ is a \emph{proper} graph
if $G$ is non-basic, has no cycle with at most four vertices of degree three, and
has no cycle of length five or six with five vertices of degree three.
In Subsection~\ref{sub-proper},
we will show that every non-basic $2$-connected subcubic graph that is not proper has a linear-time reduction.
In addition to proper $2$-connected subcubic graph,
we will also consider clean $2$-connected subcubic graphs.
This definition is more complex and we postpone it to Subsection~\ref{sub-clean}.

\subsection{Cycles with few vertices of degree three}
\label{sub-proper}

In this subsection, we show that a non-basic $2$-connected subcubic graph that is not proper
has a linear-time reduction, i.e.,
every graph containing a cycle with at most four vertices of degree three or
a cycle of length five or six containing five vertices of degree three
has a linear-time reduction.
We present the reductions in Lemmas~\ref{lemma-c2e}--\ref{lemma-no5} assuming that such a cycle is given.
We remark that such a cycle can be found in linear time (if it exists) using the following argument:
a subcubic $n$-vertex graph $G$ has at most $3\cdot 2^{k-1}n$ cycles containing at most $k$ vertices of degree three.
Indeed, suppressing all vertices of degree two in $G$ results in a cubic multigraph,
its cycles of length at most $k$ one-to-one correspond to cycles with at most $k$ vertices of degree three in $G$, and
it is possible to list all cycles of length at most $k$ in a cubic multigraph in linear time.
The fact that we can list all such cycles in linear time is important for Lemmas~\ref{lemma-no4} and~\ref{lemma-no5}
where we need to choose a cycle with at most $k$ vertices of degree three with some additional properties.

\begin{lemma}\label{lemma-c2e}
Every non-basic $2$-connected subcubic graph $G$ that contains a cycle $K$ with at most two vertices of degree three
has a linear-time reduction.
\end{lemma}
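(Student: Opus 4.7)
The plan is to collapse the cycle $K$ to a single vertex and lift spanning Eulerian subgraphs back with controlled excess. A structural observation first: $K$ must contain \emph{exactly} two vertices of degree three in $G$. Zero such vertices would make $K$ an entire component of $G$, forcing $G = K$ to be a basic cycle; a single such vertex would create a cut-edge, violating the $2$-edge-connectivity implied by $2$-connectivity in subcubic graphs. Denote these two vertices by $u, v$, let $u', v'$ be their off-$K$ neighbors, and write $K = P_1 \cup P_2$ for the two internally disjoint $u$-$v$ paths of lengths $\ell_1 \le \ell_2$, setting $k := |V(K)| = \ell_1 + \ell_2$. Using $2$-connectivity and the fact that only $u, v$ have off-$K$ edges, one can rule out $u' = v'$: if this held, then $V(K) \cup \{u'\}$ would either exhaust $V(G)$ (making $G$ a basic $\theta$-graph) or be attached to the rest by a single cut-edge at the third neighbor of $u'$, contradicting $2$-edge-connectivity. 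A similar argument shows $G - V(K)$ is connected, since any external component must be adjacent to both $u$ and $v$, each of which has only one off-$K$ neighbor.

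For the generic case $k \ge 4$, I would define $G'$ by contracting $K$ to a single vertex $w$ with edges $wu'$ and $wv'$. Since $u' \ne v'$, the graph $G'$ is simple, subcubic with $w$ of degree two, and $2$-connected (contracting a connected subgraph of the connected $G - x$ keeps it connected for every $x \ne w$, while $G' - w = G - V(K)$ was shown connected). We have $\delta(G, G') = (k - 1) + (k - 3) = 2k - 4 \ge 4$. The lift proceeds by cases on the degree of $w$ in the given spanning Eulerian subgraph $F'$ of $G'$. If $w$ is isolated, I set $F = F' \cup E(K)$: this makes $K$ a new cyclic component and gives $\exc(F) = \exc(F') + 1$. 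If $w$ has degree two with $wu', wv' \in F'$, I replace $w$ by the longer path $P_{\max}$, i.e., remove $wu', wv'$ and add $uu', vv'$ together with $E(P_{\max})$; this leaves the $\ell_{\min} - 1$ internal vertices of $P_{\min}$ isolated and gives $\exc(F) = \exc(F') + \ell_{\min} - 1$. The required inequality $\exc(F) \le \exc(F') + \delta/4 = \exc(F') + (k - 2)/2$ holds in the first case since $k \ge 4$, and in the second since it reduces to $\ell_{\min} \le \ell_{\max}$.

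For the remaining case $k = 3$, so $K$ is a triangle with a unique degree-two vertex $x$, I would instead set $G' = G - x$; here $\delta = 0$, the edge $uv$ remains available in $G'$, and $G'$ is $2$-connected. By parity at $u, v$ (each of degree two in $G'$), either $uu', uv, vv' \in F'$ or all three are absent from $F'$; in the first case set $F = (F' \setminus \{uv\}) \cup \{ux, vx\}$, and in the second set $F = F' \cup E(K)$. Both sub-cases yield $\exc(F) = \exc(F')$, trivially satisfying the bound. The main technical point throughout is the parity-driven case analysis together with careful bookkeeping of isolated vertices versus cyclic components; the initially tempting balanced-cycle obstacle (a $1/2$-deficit when using the weaker ``replace $K$ by an edge $uv$'' reduction) is cleanly avoided by genuine contraction, which gives the larger $\delta = 2k - 4$ and exactly matches the excess bound when $\ell_1 = \ell_2$.
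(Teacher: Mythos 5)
Your proposal is correct and follows essentially the same route as the paper: the same structural observations (exactly two degree-three vertices on $K$, distinct outside neighbours), the same split into the triangle case (delete the degree-two vertex, $\delta=0$) and the case $|V(K)|\ge 4$ (replace $K$ by a degree-two vertex joined to the two outside neighbours, $\delta=2k-4$), and the same lift with the "use the longer path / add the whole cycle" analysis giving exactly the bound $\exc(F)\le\exc(F')+\delta/4$. The only cosmetic difference is that you phrase the second reduction as a contraction and explicitly verify $2$-connectivity of $G'$, which the paper leaves implicit.
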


\begin{proof}
Since $G$ is neither a cycle nor a $\theta$-graph, it follows that $V(G)\neq V(K)$.
Since $G$ is 2-connected, $K$ contains exactly two vertices of degree three, say $v_1$ and $v_2$.
Let $x_1$ and $x_2$ be their neighbors outside of $K$, and
let $k_1$ and $k_2$ be the the number of the internal vertices of the two paths between $v_1$ and $v_2$ in $K$.
We can assume that $k_1\le k_2$ by symmetry.
If $x_1=x_2$, then either $G$ is a $\theta$-graph or $x_1$ is incident with a cut-edge;
since neither of these is possible, it holds that $x_1\neq x_2$.

Suppose that $k_1=0$ and $k_2=1$, i.e., $K$ is a triangle.
Let $z$ be the vertex of $K$ distinct from $v_1$ and $v_2$, and let $G'=G-z$.
Note that $G'$ is a 2-connected subcubic graph.
We claim that $G'$ is a reduction of $G$.
Since $n(G')=n(G)-1$ and $n_2(G')=n_2(G)+1$, it follows $\delta(G,G')=0$.
Consider a spanning Eulerian subgraph $F'$ of $G'$.
If $F'$ contains the edge $v_1v_2$,
then let $F$ be the spanning Eulerian subgraph of $G$ obtained from $F'$ by removing the edge $v_1v_2$ and adding the path $v_1zv_2$.
If $F'$ does not contain the edge $v_1v_2$, i.e., $v_1$ and $v_2$ are isolated vertices of $F$,
then let $F$ be the spanning Eulerian subgraph of $G$ obtained from $F'$ by adding the cycle $K$.
It holds that $\exc(F)=\exc(F')$ in both cases.

It remains to consider the case $k_1+k_2\ge 2$.
Let $G'$ be obtained from $G-V(K)$ by adding a path $x_1wx_2$ where $w$ is a new vertex;
note that $w$ has degree two in $G'$ and $\delta(G,G')=2(k_1+k_2)$.
Since $x_1\neq x_2$, $G'$ is simple.
We show that $G'$ is a reduction of $G$.
Let $F'$ be a spanning Eulerian subgraph of $G'$;
we will construct a spanning Eulerian subgraph $F$ of $G$.
If $F'$ contains the path $x_1wx_2$,
then let $F$ be obtained from $F'-w$ by adding the vertices of $K$ and the edges $x_1v_1$, $x_2v_2$, and
the path in $K$ between $v_1$ and $v_2$ with $k_2$ internal vertices.
Note that the $k_1$ vertices of the other path between $v_1$ and $v_2$ in $K$ are isolated in $F$.
Observe that
$$\exc(F)=\exc(F')+k_1\le \exc(F')+\frac{k_1+k_2}{2},$$
since $k_1\le k_2$.
If $w$ is an isolated vertex of $F'$,
then let $F$ be obtained from $F'-w$ by adding the cycle $K$.
In this case, we get that
$$\exc(F)=\exc(F')+1\le \exc(F')+\frac{k_1+k_2}{2}.$$
Since it holds that $\exc(F)\le\exc(F)+\frac{1}{4}\delta(G,G')$ in both cases,
the proof of the lemma is finished.
\end{proof}

In the next lemma, we consider cycles containing three vertices of degree three.

\begin{lemma}\label{lemma-no3}
Every non-basic $2$-connected subcubic graph $G$ that contains a cycle $K$ with three vertices of degree three
has a linear-time reduction.
\end{lemma}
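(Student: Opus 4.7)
I denote the three degree-three vertices on $K$ by $v_1,v_2,v_3$, their external neighbors by $x_1,x_2,x_3$ (none of which lies on $K$), and let $k_{ij}\ge 0$ be the number of internal degree-two vertices on the $v_iv_j$-subpath of $K$ that avoids $v_\ell$ (where $\{i,j,\ell\}=\{1,2,3\}$); write $k=k_{12}+k_{13}+k_{23}$. The reduction I would try first is the natural $\Delta$-to-$Y$ contraction: let $G'$ be obtained from $G$ by deleting $V(K)$ and introducing a new vertex $w$ adjacent to $x_1,x_2,x_3$. Degeneracies must be dispatched before using $G'$: if two of the $x_i$'s coincide, or if some $x_ix_j$ is already an edge of $G$, then $G'$ fails to be simple, and a short argument using 2-connectedness either forces $G$ to be a basic graph or produces a shorter cycle that one handles by a small ad hoc reduction. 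In the generic case, the fact that each component of $G-V(K)$ must contain at least two of $x_1,x_2,x_3$ (forced by 2-connectedness) gives that $G-V(K)$ is connected, and one checks that $G'$ is a simple 2-connected subcubic graph with $n(G')=n(G)-(k+2)$, $n_2(G')=n_2(G)-k$, so $\delta(G,G')=2(k+1)$.

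To lift a spanning Eulerian subgraph $F'$ of $G'$ to $F$ on $G$, note that $\deg_{F'}(w)\in\{0,2\}$. If $\deg_{F'}(w)=0$, set $F:=(F'-w)\cup K$: the cycle $K$ becomes a new component, giving $\exc(F)=\exc(F')+1$. If $\deg_{F'}(w)=2$ with incident edges $wx_i$ and $wx_j$ and third index $\ell$, let $P$ be whichever of the two $v_iv_j$-paths along $K$ is shorter (the direct one has $k_{ij}$ internal vertices, the detour via $v_\ell$ has $k_{i\ell}+1+k_{j\ell}$), and set $F:=(F'-w)\cup\{x_iv_i,x_jv_j\}\cup P$, leaving the remaining vertices of $K$ as isolated vertices of $F$. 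A direct count yields $\exc(F)=\exc(F')+\min(k_{ij},\,k_{i\ell}+k_{j\ell}+1)\le\exc(F')+(k+1)/2$ by the arithmetic-mean bound on $\min$. Since $\delta(G,G')/4=(k+1)/2$, the $\deg_{F'}(w)=2$ branch is always admissible, while the $\deg_{F'}(w)=0$ branch meets the bound precisely when $k\ge 1$.

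The principal obstacle is thus the triangle case $k=0$: here $\delta=2$ but the $w$-isolated lift incurs excess $+1$, which exceeds the permitted $1/2$. To resolve this, I would enlarge the reduction using the structure beyond $K$. In a non-basic 2-connected subcubic graph with a triangle all of whose vertices have degree three, the external neighbors $x_1,x_2,x_3$ must each have degree three, with two further neighbors $y_i,z_i$ apiece, and my plan is to delete the six-vertex set $V(K)\cup\{x_1,x_2,x_3\}$ and reconnect the six pendants $\{y_1,z_1,y_2,z_2,y_3,z_3\}$ by a carefully chosen three-edge matching, raising $\delta$ to at least $4$ and leaving slack to absorb the $+1$ excess from the bad lift branch. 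The matching must preserve simplicity and 2-connectedness, which requires a subcase analysis on coincidences among the $y_i,z_i$ and on existing adjacencies between them; the remaining degenerate configurations either force $G$ to be basic (including the triangular prism as a special case) or expose an alternative reducible short cycle that is handled by a direct argument in the same spirit as the generic case above. This bookkeeping is the technical heart of the lemma and is where I expect the bulk of the work to lie.
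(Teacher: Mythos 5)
Your generic construction is sound as far as it goes: the $\Delta$-to-$Y$ contraction with a single new vertex $w$, the computation $\delta(G,G')=2(k+1)$, and the lifting analysis (isolated $w$ costs $+1$; a degree-two $w$ costs $\min(k_{ij},\,k_{i\ell}+k_{j\ell}+1)\le (k+1)/2$) are all correct, and you correctly isolate the triangle case $k=0$ as the only failure. But the resolution of that case is a genuine gap: you only sketch a six-vertex deletion of $V(K)\cup\{x_1,x_2,x_3\}$ followed by a three-edge matching on the pendants, explicitly defer "the bulk of the work," and do not verify the excess accounting, the existence of a matching preserving simplicity and $2$-connectivity, or the many coincidences among the $y_i,z_i$. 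Since a triangle whose three vertices all have degree three is a perfectly common configuration, the lemma is not proved. (A small side remark: $x_ix_j\in E(G)$ does not destroy simplicity of $G'$ --- it only creates a triangle through $w$ --- so that worry is spurious; only $x_i=x_j$ is a problem, and only when the corresponding legs are undivided.)

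The idea you are missing is a one-line modification of your own gadget: instead of a bare vertex $w$, the paper attaches $z$ to $x_1,x_2,x_3$ by paths $Q_1,Q_2,Q_3$ carrying $k_1+1$, $k_2$ and $k_3$ internal degree-two vertices respectively (with $k_1\le k_2\le k_3$ the path lengths of $K$). The extra subdivision on $Q_1$ does two things at once. First, it makes $\delta(G,G')=0$, so the target becomes $\exc(F)\le\exc(F')$ rather than $\exc(F')+(k+1)/2$. Second, and decisively, when $z$ is isolated in $F'$ there are now $k+2$ isolated gadget vertices to absorb into the cycle $K$, so that branch gives $\exc(F)=\exc(F')-k\le\exc(F')$ even for $k=0$; and when two legs $Q_i,Q_j$ are used, the detour through $K$ has internal-vertex count matching (or undercutting by one) the count through $z$, so the excess never increases. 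This handles the all-degree-three triangle uniformly with every other value of $k$, with the single degeneracy that one must relabel so that $x_2\neq x_3$ when $k_2=k_3=0$, which is possible precisely because $G\neq K_4$. If you want to salvage your write-up, replacing $w$ by this subdivided spider is far less work than the pendant-rematching construction you propose.
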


\begin{proof}
Let $v_1$, $v_2$ and $v_3$ be the three vertices of degree three of $K$.
Since $G$ is 2-connected, each of the vertices $v_1$, $v_2$ and $v_3$ has a neighbor outside the cycle $K$;
let $x_i$ be such a neighbor of the vertex $v_i$, $i\in\{1,2,3\}$.
Further, let $P_i$ denote the path between $v_{i+1}$ and $v_{i+2}$ in $K$ that does not contain $v_i$ for $i\in\{1,2,3\}$ (indices
are taken modulo three), and let $k_i$ be the number of its internal vertices.
By symmetry, we can assume that $k_1\le k_2\le k_3$.
Since $G$ is not basic, in particular, $G\neq K_4$,
we can assume that $x_2\neq x_3$ if $k_1=k_2=k_3=0$.

Let $G'$ be obtained from $G-V(K)$ by adding a vertex $z$ and
paths $Q_1$, $Q_2$ and $Q_3$ joining $z$ with $x_1$, $x_2$, and $x_3$, respectively,
such that $Q_1$ has $k_1+1$ internal vertices, $Q_2$ has $k_2$ internal vertices, and $Q_3$ has $k_3$ internal vertices.
Note that the graph $G'$ is simple since if $k_2=k_3=0$, then $x_2\neq x_3$.
Also note that $\delta(G,G')=0$.

We now show that $G'$ is a reduction of $G$.
Let $F'$ be a spanning Eulerian subgraph of $G'$.
If the vertex $z$ is isolated in $F'$, then let $F$ be obtained from $F'$ by removing $z$ and
the internal vertices of $Q_1$, $Q_2$, and $Q_3$ (all of these are isolated vertices in $F'$) and adding the cycle $K$.
Observe that $c(F)=c(F')+1$ and $i(F)=i(F')-2-k_1-k_2-k_3$ in this case.
If $F'$ contains paths $Q_i$ and $Q_j$, $i\neq j$, $i,j\in\{1,2,3\}$,
then let $F$ be obtained from $F'$ by removing the vertex $z$ and the internal vertices of $Q_1$, $Q_2$, and $Q_3$,
adding the vertices of $K$, edges $x_iv_i$ and $x_jv_j$, and the edges of the paths $P_i$ and $P_j$.
We have $c(F)=c(F')$ and $i(F)\le i(F')$ in this case.
In both cases, it holds $\exc(F)\le\exc(F')$, which finishes the proof of the lemma.
\end{proof}

In the final two lemmas of this subsection,
we will present several possible reductions of a configuration $K$ and
choose the one that is $2$-connected.
Since it is possible to test $2$-connectivity of a graph in linear time,
the reductions presented in Lemmas~\ref{lemma-no4} and~\ref{lemma-no5} are linear-time.

\begin{lemma}\label{lemma-no4}
Every non-basic $2$-connected subcubic graph $G$ that contains a cycle $K$ with four vertices of degree three
has a linear-time reduction.
\end{lemma}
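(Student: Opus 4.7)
The plan is to mimic the template of Lemmas~\ref{lemma-c2e} and~\ref{lemma-no3}: label the four degree-three vertices of $K$ as $v_1,v_2,v_3,v_4$ in cyclic order, let $x_i$ be the neighbor of $v_i$ outside $K$, and let $k_i\ge 0$ be the number of internal (degree-two) vertices of the subpath $P_i$ of $K$ from $v_i$ to $v_{i+1}$ (indices modulo four). The reduction will delete $V(K)$ and attach a small gadget on $\{x_1,x_2,x_3,x_4\}$ so that $\delta(G,G')\ge 0$, $G'$ is a simple $2$-connected subcubic graph, and every spanning Eulerian subgraph $F'$ of $G'$ lifts to a spanning Eulerian subgraph $F$ of $G$ with $\exc(F)\le \exc(F')+\tfrac14\delta(G,G')$.

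The two main candidates correspond to the non-crossing pairings of the external neighbors: for pairing~$A$ let $G'_A$ be obtained from $G-V(K)$ by adding the edges $x_1x_2$ and $x_3x_4$; for pairing~$B$ add the edges $x_2x_3$ and $x_4x_1$. In each case $\tfrac14\delta(G,G'_A)=\tfrac14\delta(G,G'_B)=1+\tfrac12(k_1+k_2+k_3+k_4)$. To lift a spanning Eulerian subgraph $F'$ of $G'_A$, split on which of the two added edges lies in $F'$: if both, replace them by the paths $x_1v_1P_1v_2x_2$ and $x_3v_3P_3v_4x_4$ through $K$, leaving the internals of $P_2$ and $P_4$ isolated (excess increase $k_2+k_4$); if exactly one, say $x_1x_2$, route the long path $x_1v_1P_4v_4P_3v_3P_2v_2x_2$ through the other three sides of $K$, isolating only the internals of $P_1$ (excess increase $k_1$); if neither, adjoin the cycle $K$ to $F$ as a new component (excess increase~$2$). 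A short arithmetic check shows that the required bound is satisfied in all three cases precisely when $k_2+k_4\le 2+k_1+k_3$, and the symmetric condition for pairing~$B$ reads $k_1+k_3\le 2+k_2+k_4$; these conditions cannot fail simultaneously, so at least one pairing passes the excess test whenever $\sum_i k_i\ge 2$.

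The principal obstacle is to secure simplicity, $2$-connectivity of at least one candidate, and to dispose of the boundary cases $\sum_i k_i\le 1$. Coincidences such as $x_i=x_j$ (which $2$-connectivity of $G$ restricts but does not forbid) can create parallel edges in one pairing, and the block structure of $G-V(K)$ can turn one of the added edges into a cut-edge in the corresponding $G'$. We dispose of these by a careful enumeration: in most configurations the alternative pairing is already simple and $2$-connected; in the residual ones, either an auxiliary candidate $G'_C$ whose two added paths share a common degree-two bridge vertex (chosen so that $\delta\ge 0$ still holds) restores $2$-connectivity, or the forced coincidences imply the existence in $G$ of a cycle with at most three degree-three vertices, letting us invoke Lemma~\ref{lemma-c2e} or Lemma~\ref{lemma-no3} instead. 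The tight structure of $K$ in the boundary cases $K\cong C_4$ or $K$ a $5$-cycle with one degree-two vertex is handled by a direct ad-hoc reduction exploiting the small size of $K$. Since each candidate is constructible in constant time from $K$ and $2$-connectivity is testable in linear time, trying the constantly many candidates and selecting a successful one yields the desired linear-time reduction.
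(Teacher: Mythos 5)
Your overall template (delete $V(K)$, attach a pairing gadget on $x_1,\dots,x_4$, case-split on which gadget edges a spanning Eulerian subgraph $F'$ uses, lift back into $G$) is the same as the paper's, but your lifting in the ``exactly one added edge'' case is too weak, and the claimed ``short arithmetic check'' is wrong. Take $k_1=5$ and $k_2=k_3=k_4=0$, so $\tfrac14\delta=1+\tfrac{k}{2}=3.5$. For pairing~$A$ your condition $k_2+k_4\le 2+k_1+k_3$ reads $0\le 7$ and supposedly passes, yet in the subcase where only $x_1x_2\in F'$ your long route isolates the $k_1=5$ internal vertices of $P_1$, giving excess increase $5>3.5$; and pairing~$B$ fails in its ``both edges'' subcase with increase $k_1+k_3=5>3.5$. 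So neither pairing survives with the liftings as you describe them, and switching pairings does not rescue the argument. The missing idea is the averaging trick the paper uses: in the problematic subcases one must construct \emph{two} candidate liftings (route through the one omitted side, or through the other three sides at the cost of possibly stranding two $v_i$'s or merging/splitting a cycle), observe that their excesses sum to $2\exc(F')+k+O(1)$, and take the smaller one, which is then at most $\exc(F')+\tfrac{k}{2}+1$. With that fix a single pairing suffices whenever $k\ge 2$, which is in fact why the paper never needs your ``pairing~$A$ versus pairing~$B$'' dichotomy.

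Separately, the parts you defer to ``careful enumeration'' are where most of the paper's actual work lies and are not routine. Two-connectivity of the reduced graph is established in the paper by considering the two graphs obtained from $G$ by deleting an opposite pair of sides of $K$ and showing, via a partition argument, that at least one is $2$-connected (this is also why the paper pairs $x_1$ with $x_4$ and $x_2$ with $x_3$, i.e.\ across adjacent sides, rather than your $x_1x_2$/$x_3x_4$). Simplicity is obtained by choosing $K$ to be a \emph{shortest} cycle with four degree-three vertices --- so $x_i\ne x_{i+1}$ and, when $k_1+k_3\ge 1$, neither $x_1x_4$ nor $x_2x_3$ is already an edge --- and by replacing added edges with subdivided paths in the cases $k_1=k_3=0$ and $k=1$; these are exactly your boundary cases $\sum_i k_i\le 1$, which need genuinely different gadgets because the ``neither edge used'' subcase costs excess $2>1+\tfrac{k}{2}$. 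None of this is present in your write-up, so as it stands the proposal is an outline with a broken core estimate rather than a proof.
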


\begin{proof}
Choose a shortest cycle $K$ of $G$ that contains four vertices of degree four, and
let $v_1,\ldots,v_4$ be these vertices listed in the cyclic order around $K$.
Since $K$ is the shortest possible and
all cycles in $G$ contain at least four vertices of degree three by Lemmas~\ref{lemma-c2e} and~\ref{lemma-no3},
every vertex $v_i$ has a neighbor $x_i$ outside the cycle $K$, $i\in\{1,\ldots,4\}$.
In addition, it holds that $x_i\neq x_{i+1}$ (indices are taken modulo four).
Let $P_i$ denote the path between $v_i$ and $v_{i+1}$ in $K$ (again, indices are taken modulo four), and
let $k_i$ be the number of internal vertices of $P_i$.
Finally, let $k=k_1+\cdots+k_4$.

We present two possible reductions parameterized by $j\in\{1,2\}$.
Let $G_j$ be the graph obtained from $G$ by removing the edges and internal vertices of the paths $P_j$ and $P_{j+2}$.
Suppose that neither $G_1$ nor $G_2$ is 2-connected.
In particular, the vertices of $G_1$ can be partitioned into non-empty sets $A$ and $B$ such that
there is at most one edge between $A$ and $B$ of $G_1$.
If $x_1\in A$ and $x_4\in B$, then this edge is contained in $P_4+x_1v_1+x_4v_4$;
by symmetry, we can assume that $x_2,x_3\in B$,
which yields that the edge $v_1x_1$ is a cut-edge in $G$, which is impossible.
Hence, it must hold that $x_1,x_4\in A$ and $x_2,x_3\in B$.
Since $G$ is 2-connected, there exists a path between $x_1$ and $x_4$ using only the vertices of $A\setminus V(K)$ and
a path between $x_2$ and $x_3$ using only the vertices of $B\setminus V(K)$.
The symmetric argument applied to $G_2$ yields the existence of such paths
between $x_1$ and $x_2$, and between $x_3$ and $x_4$, which is impossible
since there is at most one edge between $A$ and $B$.
It follows that at least one of the graphs $G_1$ and $G_2$ is 2-connected.
By symmetry, we assume that $G_1$ is $2$-connected in the rest of the proof.

We first consider the case that $k_1=k_3=0$.
Let $G'$ be the graph obtained from $G-V(K)$ by adding paths $x_1z_1x_4$ and $x_2z_2x_3$,
where $z_1$ and $z_2$ are new vertices, each having degree two in $G'$.
Note that $G'$ is is isomorphic to a graph obtained from $G_1$ by suppressing some vertices of degree two;
in particular, $G'$ is 2-connected.
Also note that $\delta(G,G')=2k$.
We next show that $G'$ is a reduction of $G$.
Consider a spanning Eulerian subgraph $F'$ of $G'$.
We distinguish several cases based on whether the vertices $z_1$ and $z_2$ are isolated in $F'$.
\begin{itemize}
\item If both vertices $z_1$ and $z_2$ are isolated in $F'$,
      then let $F$ be obtained from $F'-\{z_1,z_2\}$ by adding the cycle $K$.
      Note that $\exc(F)=\exc(F')$ in this case.
\item Assume that $z_1$ is not isolated, i.e., the edges $x_1z_1$ and $x_4z_1$ are contained in $F'$,
      but $z_2$ is isolated in $F'$.
      We consider two spanning Eulerian subgraphs $F_1$ and $F_2$ of $G$.
      The subgraph $F_1$ is obtained from $F'-\{z_1,z_2\}$ by adding the vertices of $K$,
      the edges $x_1v_1$ and $x_4v_4$, and edges of the path $P_4$.
      The subgraph $F_2$ is obtained from $F'-\{z_1,z_2\}$ by adding the vertices of $K$,
      the edges $x_1v_1$ and $x_4v_4$, and the edges of the paths $P_1$, $P_2$, and $P_3$.
      Note that $\exc(F_1)=\exc(F')+k_1+k_2+k_3+1=\exc(F')+k_2+1$ and $\exc(F_2)=\exc(F')+k_4-1$.
      Let $F$ be one of the subgraphs $F_1$ and $F_2$ with the smaller excess.
      Since $\exc(F_1)+\exc(F_2)=2\exc(F')+k$, we get that $\exc(F)\le\exc(F')+k/2$.
\item The case that $z_1$ is isolated in $F'$ but $z_2$ is not is symmetric to the case that we have just analyzed.
\item If neither $z_1$ nor $z_2$ is isolated in $F'$,
      then let $F$ be obtained from $F'-\{z_1,z_2\}$ by adding the vertices of $K$,
      the edges $x_iv_i$ for $i\in\{1,\ldots, 4\}$, and
      the edges of the paths $P_2$ and $P_4$.
      Since $k_1=k_3=0$, we get that $\exc(F)=\exc(F')$.
\end{itemize}
In all the cases we have found a spanning Eulerian subgraph $F$ of $G$ with $\exc(F)\le\exc(F')+k/2=\exc(F')+\delta(G,G')/4$.

We can assume that $k_1+k_3\ge 1$ in the rest of the proof.
Note that this implies that neither $x_1x_4$ nor $x_2x_3$ is an edge of $G$ (otherwise,
$G$ would contain a cycle with at most four vertices of degree three that is shorter than $K$).

We now distinguish two cases: $k\ge 2$ and $k=1$.
We first consider the case that $k\ge 2$.
Let $G'$ be the graph obtained from $G-V(K)$ by adding edges $x_1x_4$ and $x_2x_3$.
Since $G'$ can be obtained from $G_1$ by suppressing vertices of degree two,
it follows that $G'$ is 2-connected.
Also note that $G'$ is simple since neither $x_1x_4$ nor $x_2x_3$ is an edge of $G$, and that $\delta(G,G')=2k+4$.
We next verify that $G'$ is a reduction of $G$.
To do so, consider a spanning Eulerian subgraph $F'$ of $G'$ and
distinguish four cases based on the inclusion of the edges $x_1x_4$ and $x_2x_3$ in $F'$
to construct a spanning Eulerian subgraph $F$ of $G$.
\begin{itemize}
\item If neither the edge $x_1x_4$ nor the edge $x_2x_3$ is in $F$,
      then let $F$ be obtained from $F'$ by adding the cycle $K$.
      Note that $\exc(F)=\exc(F')+2$.
\item If the edge $x_1x_4$ is in $F'$ but the edge $x_2x_3$ is not,
      then we consider two spanning Eulerian subgraphs $F_1$ and $F_2$ of $G$, and
      choose $F$ to be the one with the smaller excess.
      The subgraph $F_1$ is obtained from $F'$ by removing the edge $x_1x_4$ and
      by adding the vertices of $K$ and the edges $x_1v_1$ and $x_4v_4$, and the edges of the path $P_4$.
      The subgraph $F_2$ is obtained from $F'$ by removing the edge $x_1x_4$ and
      by adding the vertices of $K$ and the edges $x_1v_1$ and $x_4v_4$, and the edges of the paths $P_1$, $P_2$, and $P_3$.
      Note that $\exc(F_1)=\exc(F')+k_1+k_2+k_3+2$ and $\exc(F_2)=\exc(F')+k_4$.
      Hence, if $F$ is the one of the subgraphs $F_1$ and $F_2$ with the smaller excess,
      then $\exc(F)\le\exc(F')+\frac{k}{2}+1$.
\item The case that the edge $x_1x_4$ is not contained in $F'$ but the edge $x_2x_3$ is
      is symmetric to the case that we have just analyzed.
\item The final case is that both the edges $x_1x_4$ and $x_2x_3$ are in $F'$.
      We again construct two spanning Eulerian subgraphs $F_1$ and $F_2$ of $G$, and
      choose $F$ to be the one with the smaller excess.
      We start with removing the edges $x_1x_4$ and $x_2x_3$ from $F'$ and
      adding the vertices of $K$ together with the edges $x_iv_i$ for $i\in\{1,\ldots, 4\}$.
      To create the subgraph $F_1$, we also add the edges of the paths $P_2$ and $P_4$, and
      to create the subgraph $F_2$, we add the edges of the paths $P_1$ and $P_3$.
      Note that the latter can result in either creating or merging two cycles of $F'$,
      in particular, $c(F_2)\le c(F')+1$.
      Hence, we get that $\exc(F_1)=\exc(F')+k_1+k_3$ and $\exc(F_2)\le\exc(F')+k_2+k_4+2$.
      Since $F$ is the one of the subgraphs $F_1$ and $F_2$ with the smaller excess,
      we get that $\exc(F)\le\exc(F')+\frac{k}{2}+1$.
\end{itemize}
Since $k\ge 2$,
the excess $\exc(F)$ of the spanning Eulerian subgraph $F$ of $G$ is at most $\exc(F')+\frac{k}{2}+1=\exc(F')+\delta(G,G')/4$
in all the four cases.

The final case to consider is that $k=1$.
Since $k_1+k_3\ge 1$, we can assume by symmetry that $k_1=1$ and $k_2,k_3,k_4=0$.
In this case, we consider the graph $G'$ obtained from $G-V(K)$ by adding the edge $x_1x_4$ and a path $x_2zx_3$,
where $z$ is a new vertex of degree two.
Again, $G'$ is isomorphic to a graph obtained from $G_1$ by suppressing some vertices of degree two,
in particular, $G'$ is 2-connected. Also note that $\delta(G,G')=4$.
To show that $G'$ is a reduction of $G$,
one considers a spanning Eulerian subgraph $F'$ of $G'$ and distinguish four cases based on whether
the edge $x_1x_4$ and the path $x_2zx_3$ are contained in $F'$.
If neither of them is, we construct a spanning Eulerian subgraph $F$ of $G$ by removing the vertex $z$ and
including the cycle $K$; note that $\exc(F)=\exc(F')+1$ in this case.
If one of them but the other is not,
we construct a spanning Eulerian subgraph $F$ by removing the edge $x_1x_4$ and the edges of the path $x_2zx_3$,
adding the vertices of $K$ together with the edges $x_iv_i$ for those $i\in\{1,\ldots, 4\}$ such that the degree of $x_i$ is odd and
the edges of three of the paths $P_1$, $P_2$, $P_3$ and $P_4$ in a way that $F$ is an Eulerian subgraph of $G$.
Note that $\exc(F)\le\exc(F')$ (the inequality is strict if $z$ is an isolated vertex in $F'$)
Finally, if both the edge $x_1x_4$ and the path $x_2zx_3$ are contained in $F'$,
we construct $F$ by removing the edge $x_1x_4$ and the edges of the path $x_2zx_3$, and
by adding the vertices of $K$ together with the edges $x_iv_i$ for $i\in\{1,\ldots, 4\}$ and
the edges of the paths $P_2$ and $P_4$.
Note that $\exc(F)=\exc(F')+1$ in this case since the only inner vertex of $P_1$ is isolated in $F$.
Hence, we have constructed a spanning Eulerian subgraph $F$ of $G$
with $\exc(F)\le\exc(F')+1=\exc(F')+\delta(G,G')/4$ in each of the cases.
\end{proof}

In the final lemma of this subsection, we deal with cycles of length five or six that
contain five vertices of degree three.

\begin{lemma}\label{lemma-no5}
Every non-basic $2$-connected subcubic graph $G$ that contains a cycle $K$ of length at most $6$ with five vertices of degree three
has a linear-time reduction.
\end{lemma}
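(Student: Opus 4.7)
I would follow the same template as Lemma~\ref{lemma-no4}. Split into two cases depending on whether $K$ has length $5$ (all five vertices of degree three) or length $6$ (with exactly one vertex $w$ of degree two on $K$). In both cases, label the degree-three vertices $v_1,\ldots,v_5$ in cyclic order on $K$ and let $x_i$ be the unique neighbor of $v_i$ outside $K$; in the length-$6$ case also record the position of $w$. By Lemmas~\ref{lemma-c2e}--\ref{lemma-no4} every cycle of $G$ already contains at least five vertices of degree three, which forces $x_1,\ldots,x_5$ to be pairwise distinct and rules out many potential edges $x_ix_j$: any such edge would, together with a short arc of $K$, produce a cycle with at most four vertices of degree three, contradicting minimality.

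Next, I would exhibit a short list of candidate reductions obtained from $G-V(K)$ by attaching a constant-size gadget to $\{x_1,\ldots,x_5\}$. A typical candidate introduces a single new vertex $z$ with edges $zx_a, zx_b, zx_c$ together with the edge $x_dx_e$ for some partition $\{a,b,c\}\cup\{d,e\}$ of $\{1,\ldots,5\}$; in the length-$6$ case the extra vertex $w$ inside $V(K)$ contributes $1$ to $n_2$, giving enough additional budget to use a gadget of the same or slightly larger type (e.g.\ with one subdivision vertex of degree two). Each candidate is simple thanks to the non-adjacency facts above and satisfies $\delta(G,G')\ge 4$, so the excess budget is at least $1$. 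To show that at least one candidate is $2$-connected, I would argue as in Lemma~\ref{lemma-no4}: a $2$-cut of any particular candidate partitions $V(G)\setminus V(K)$ in a way that, combined across two suitably chosen candidates obtained by cyclically rotating the gadget on $K$, would force a $2$-cut of $G$ itself, contradicting the hypothesis.

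Third, given a $2$-connected candidate $G'$ and a spanning Eulerian subgraph $F'$ of $G'$, I would construct a spanning Eulerian subgraph $F$ of $G$ by deleting the gadget edges used by $F'$ and adding back an appropriate selection of vertices of $K$, edges $x_iv_i$, and arcs of $K$ (including $w$ in the length-$6$ case either on a reconstructed cycle through $K$ or as an isolated vertex, depending on which option is cheaper). In the cases where the reconstruction is not forced, I would consider two candidate subgraphs $F_1,F_2$ and keep the one of smaller excess; averaging their excesses gives the required bound $\exc(F)\le\exc(F')+\delta(G,G')/4$, exactly as in Lemma~\ref{lemma-no4}. Since there are only a constant number of candidates, $2$-connectivity can be tested in linear time, and each reconstruction is carried out in constant time, the reduction is linear-time.

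The main obstacle I anticipate is bookkeeping in the length-$6$ case: the degree-two vertex $w$ of $K$ has no external attachment point, so it must be handled entirely within the reconstruction phase. This couples the number of components and isolated vertices of $F$ with which gadget edges $F'$ uses, and a careful enumeration of the subcases is needed to check that in every combination we can either match or split the excess to obtain the bound $\exc(F)\le\exc(F')+\delta(G,G')/4$. I expect this to be the most technical part of the proof, but it should be a routine extension of the case analysis already carried out for Lemma~\ref{lemma-no4}.
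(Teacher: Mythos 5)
Your proposal matches the paper's proof in all essentials: the paper uses exactly your gadget (one new vertex joined to three of the $x_i$ plus an edge between the remaining two $x_i$), tries a first candidate, and when that candidate fails to be $2$-connected uses the resulting near-cut to select and validate a second candidate, with the same accounting ($\delta(G,G')\ge 4$ against $\exc(F)\le\exc(F')+1$) and the same extra subdivision vertex to absorb the degree-two vertex of $K$ in the length-six case. The one detail worth flagging is that the paper's second candidate is \emph{not} a cyclic rotation of the first --- it pairs the non-consecutive vertices $x_2$ and $x_5$, whose non-adjacency is not among the automatic non-edges $x_ix_{i+1}$ and must be (and is) extracted from the failed cut of the first candidate, and its $2$-connectivity is established by showing that a cut would force a cut-edge of $G$ rather than by intersecting two partitions --- but this is a refinement within your framework rather than a different idea.
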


\begin{proof}
By Lemmas~\ref{lemma-c2e}--\ref{lemma-no4},
we can assume that every cycle of $G$ contains at least five vertices of degree three.
Let $K$ be a cycle of length five or six that contains five vertices of degree three.
If $G$ contains such cycles of length five or six, choose $K$ to be a cycle of length five.
By symmetry, we can assume that the vertices $v_1$, \ldots, $v_5$ of degree three of $K$ form a path $v_1v_2v_3v_4v_5$;
if $K$ has length five, then $v_5v_1$ is an edge, and
if $K$ has length six, then there is a vertex $z$ of degree two such that $v_5zv_1$ is a path in $G$.
Let $x_i$ be the neighbor of $v_i$ outside the cycle $K$ for $i\in\{1,\ldots,5\}$.
The vertices $x_1,\ldots,x_5$ are pairwise distinct (otherwise, $G$ would contain a cycle with at most four vertices of degree three).
Since $G$ has no cycle with at most four vertices of degree three,
$G$ does not contain the edge $x_ix_{i+1}$ for any $i\in\{1,\ldots,5\}$ (indices are taken modulo five).

Let $G_1$ be the graph obtained from $G-V(K)$ by adding the edge $x_5x_1$ and
a new vertex $w$ that is adjacent to the vertices $x_2$, $x_3$ and $x_4$.
Note that $\delta(G,G_1)\in\{4,6\}$.
If $F'$ is a spanning Eulerian subgraph of $G_1$,
then there exists a spanning Eulerian subgraph $F$ of $G$ with $c(F)=c(F')$ and $i(F)\le i(F')+1$,
i.e., with $\exc(F)\le\exc(F')+1$.
Hence, if $G_1$ is $2$-connected, it is a reduction of $G$.

Suppose that $G_1$ is not 2-connected.
Hence, the vertices of $G_1$ can be partitioned to non-empty sets $A$ and $B$ such that
there is at most one edge between $A$ and $B$ in $G_1$.
Since the original graph $G$ is 2-connected, both $x_1$ and $x_5$ belong to the same set, say $A$, and
the vertex $w$ to the other set, i.e., the set $B$.
In addition, at most one of the neighbors of $w$ in $G_1$ belongs to $A$ (there is at most one edge between $A$ and $B$) and
$G_1$ contains at most one of the edges $x_1x_4$ and $x_2x_5$ (for the same reason).
By symmetry, we assume that $G_1$ does not contain the edge $x_2x_5$.
If $x_1x_4$ is an edge of $G_1$, then either the edge $wx_4$ or the edge $x_1x_4$ is the edge between $A$ and $B$ and
the vertex $x_2$ must belong to $B$.
If $x_1x_4$ is not an edge of $G_1$, then at least one of the vertices $x_2$ and $x_4$ is in $B$ and
we can assume by symmetry that this vertex is $x_2$.
In either case, we have arrived at the conclusion that $x_2$ is in $B$ and $x_2x_5$ is not an edge of $G$.
Since there is at most one edge between $A$ and $B$ and the original graph is $2$-connected,
there exist disjoint paths $Q_1$ and $Q_2$,
where $Q_1$ connects the vertices $x_1$ and $x_5$ (and is fully contained in $A$) and
$Q_2$ connects the vertex $x_2$ with the vertex $x_j$ for $j=3$ or $j=4$ (and this path is fully contained in $B$).

Let $G_2$ be the graph obtained from $G-V(K)$ by adding the edge $x_2x_5$ and
a vertex $y$ that is adjacent to the vertices $x_1$, $x_3$ and $x_4$.
Note that $G_2$ is simple since $x_2x_5$ is not an edge of $G$.
If the length of $K$ in $G$ is six, we subdivide the edge $x_3y$ in addition.
Since $\delta(G,G_2)=4$ and every spanning Eulerian subgraph $F'$ of $G_2$
can be transformed to spanning Eulerian subgraph $F$ of $G$ with $\exc(F)\le \exc(F')+1$,
we get that $G_2$ is a reduction of $G$ unless $G_2$ is not $2$-connected.
We show that $G_2$ must be $2$-connected in the rest of the proof.

Suppose that $G_2$ is not $2$-connected,
i.e., the vertices of $G_2$ can be partitioned to non-empty sets $C$ and $D$ such that
there is at most one edge between $C$ and $D$ in $G_2$.
The path $Q_1$ from $x_1$ to $x_5$, the edge $x_5x_2$, the path $Q_1$ from $x_2$ to $x_j$ and
the path from $x_j$ to $x_1$ through $y$ form a cycle in $G_2$.
This implies that all the four vertices $x_1$, $x_2$, $x_j$ and $x_5$ are in the same set and
we can assume by symmetry that they are in the set $C$.
Consequently, the remaining vertex $x_{7-j}$ must be in $D$ (note that $7-j$ is either $3$ or $4$),
which implies that the edge $x_{7-j}v_{7-j}$ is a cut-edge in $G$, which is impossible.
\end{proof}

\subsection{Cycles of length six}
\label{sub-six}

Lemmas~\ref{lemma-c2e}--\ref{lemma-no5} imply that
every non-basic $2$-connected subcubic graph $G$ that is not proper has a linear-time reduction.
In this subsection, we focus on proper $2$-connected subcubic graphs that
contain a cycle of length six that satisfies some additional assumptions.
Note that such all the six vertices of such a cycle must have degree three,
each of them has a neighbor not contained in the cycle and
these neighbors are pairwise distinct.

In Lemmas~\ref{lemma-6-opp3}--\ref{lemma-6oppcuts} that we establish in this subsection,
we assume that a cycle $K$ with the properties stated in the lemmas is given.
The properties asserted by the lemmas can be checked in linear time.
In Lemma~\ref{lemma-no26}, this follows for the fact that every cycle of length six in a subcubic graph
can be intersected by at most a constant number other cycles of length six.
Since all cycles of length six can be listed in linear time (see the arguments
given at the beginning of Subsection~\ref{sub-proper}),
it is possible to find a cycle of length six with the properties given in one of the lemmas or
conclude that such a cycle does not exist in quadratic time.

\begin{lemma}
\label{lemma-6-opp3}
Let $G$ be a proper $2$-connected subcubic graph,
let $K=v_1v_2v_3v_4v_5v_6$ be a cycle of length six in $G$, and
let $x_i$ be the neighbor of $v_i$ not contained in $K$ for $i=1,\ldots, 6$.
Let $A,B$ be a partition of the vertices of $G-V(K)$ such that $x_1,x_3,x_5\in A$ and $x_2,x_4,x_6\in B$.
If $G-V(K)$ has no edge between $A$ and $B$, then $G$ has a linear-time reduction with respect to $K$.
\end{lemma}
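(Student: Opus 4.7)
My plan is to reduce $G$ by deleting the $6$-cycle $K$ and re-linking the two sides through a small gadget. The hypothesis immediately implies that the three edges $\{v_1v_2,v_3v_4,v_5v_6\}$ form a $3$-edge-cut of $G$ separating $A\cup\{v_1,v_3,v_5\}$ from $B\cup\{v_2,v_4,v_6\}$, since the only edges of $G$ across this cut either lie on $K$ or would be edges of $G-V(K)$ between $A$ and $B$, and the latter are excluded by assumption. Any valid reduction must therefore preserve this cut structure.

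The first candidate I would try is a \emph{matching replacement}: delete $V(K)$ and add three new edges forming a perfect matching between $\{x_1,x_3,x_5\}$ and $\{x_2,x_4,x_6\}$, for instance $M=\{x_1x_2,x_3x_4,x_5x_6\}$. The resulting graph $G'$ is simple (the new edges go from $A$ to $B$ and no such edges exist in $G-V(K)$), has $n(G')=n(G)-6$ and $n_2(G')=n_2(G)$, and hence $\delta(G,G')=6$. There are six possible matchings; my plan is to argue that at least one yields a $2$-connected graph. If none did, then for each matching some cut-vertex or cut-edge in $G'$ would exist, and combining these obstructions with the three cut edges of $K$ would force a forbidden separator in $G$, contradicting its $2$-connectivity; this dichotomy mirrors the candidate-reduction argument used in Lemma~\ref{lemma-no4}.

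The excess analysis would proceed by cases on $|F'\cap M|$ for a given spanning Eulerian subgraph $F'$ of $G'$. For each matching edge $x_ix_j\in F'\cap M$, I would replace it in $F$ by the path $x_iv_iv_jx_j$ in $G$; when all three matching edges are in $F'$ this gives $\exc(F)\le\exc(F')$ directly. When fewer matching edges are present, the uncovered pairs of $v$-vertices on $K$ must either be absorbed into an existing component of $F'$ via the $v_ix_i$ edges or contribute to $F$ as part of the cycle $K$ added as a new component.

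The main obstacle is the most restrictive case, when $F'\cap M=\emptyset$: the most natural construction adds $K$ as a new component of $F$, giving $\exc(F)=\exc(F')+2$, whereas the budget $\delta(G,G')/4=3/2$ forces $\exc(F)\le\exc(F')+1$. To close this gap I expect the construction to require a refinement—either increasing $\delta$ to at least $8$ by pairing the $V(K)$-removal with a further vertex-reduction in $A$ or $B$ supplied by the proper-graph hypothesis, or replacing the simple matching gadget by a more elaborate one (for example, four new vertices $a,b,c,d$ with edges $ax_1,ax_3,cd,ac,bx_2,bx_4,bd,cx_5,dx_6$) whose internal structure routes through $V(K)$ in a way that merges a component of $F'$ with $K$ instead of creating a new one. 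Verifying $2$-connectivity of the refined $G'$ and checking the excess bound in every case, especially this no-matching-edge case, is where I expect the bulk of the technical work to lie.
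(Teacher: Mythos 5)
Your instinct to delete $V(K)$ and reconnect $\{x_1,x_3,x_5\}$ to $\{x_2,x_4,x_6\}$ by a matching-like gadget is the right one, and the pairing $x_1x_2$, $x_3x_4$, $x_5x_6$ is exactly the pairing the paper uses. But the proposal has a genuine gap: you correctly identify that the case $F'\cap M=\emptyset$ yields $\exc(F)=\exc(F')+2$ against a budget of $\delta(G,G')/4=3/2$, and you do not resolve it -- you only speculate that some refinement (a larger $\delta$ or a more elaborate gadget) should work. The fix is simpler than either of your suggestions: subdivide each of the three new edges once, i.e., add paths $x_1z_1x_2$, $x_3z_2x_4$, $x_5z_3x_6$ with new degree-two vertices $z_1,z_2,z_3$. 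Then $n(G')=n(G)-3$ and $n_2(G')=n_2(G)+3$, so $\delta(G,G')=0$, and in the all-unused case the three vertices $z_1,z_2,z_3$ are isolated in $F'$; removing them while adding $K$ as a new cycle gives $c(F)=c(F')+1$ and $i(F)=i(F')-3$, hence $\exc(F)=\exc(F')-1$. The isolated subdivision vertices are precisely the compensation your bare-edge gadget lacks.

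The second missing ingredient is a parity argument that you never invoke but that is essential to make the case analysis close. Since $G-V(K)$ has no edge between $A$ and $B$, the three gadget paths are the only edges of $G'$ crossing the $(A,B)$ cut, so every cycle of $F'$ uses an even number of them; consequently either all three or exactly one of $z_1,z_2,z_3$ is isolated in $F'$. This kills the cases "exactly one path used" and "all three paths used" outright. In the remaining case (say $z_3$ isolated, $z_1,z_2$ on a cycle), the two used paths necessarily lie on a single cycle of $F'$ alternating between $A$ and $B$, and one reroutes that cycle through $K$ via $x_2v_2v_3x_3$ and $x_4v_4v_5v_6v_1x_1$, giving $c(F)=c(F')$ and $i(F)=i(F')-1$, so again $\exc(F)\le\exc(F')$. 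Finally, your concern about trying six matchings and proving a two-out-of-two dichotomy as in Lemma~\ref{lemma-no4} is unnecessary here: with the cut-respecting pairing, $2$-connectivity of $G'$ follows directly from that of $G$, since any small separator of $G'$ would lift to one of $G$ through the corresponding paths $x_iv_iv_jx_j$.
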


\begin{proof}
Let $G'$ be the graph obtained from $G-V(K)$
by adding the paths $x_1z_1x_2$, $x_3z_2x_4$ and $x_5z_3x_6$,
where $z_1$, $z_2$, and $z_3$ are new vertices, each having degree two in $G'$.
Note that the graph $G'$ is simple since the vertices $x_1$, \ldots, $x_6$ are pairwise distinct as $G$ is proper.
In addition, $G'$ is $2$-connected since $G$ is $2$-connected, and $\delta(G,G')=0$.

Let $F'$ be a spanning Eulerian subgraph of $G'$.
We show that $F'$ can be transformed to a spanning Eulerian subgraph $F$ of $G$ with $\exc(F)\le \exc(F')$.
Since there are no edges between $A$ to $B$,
either one or three of the vertices $z_1$, $z_2$ and $z_3$ are isolated in $F'$.
If all of the three vertices are isolated in $F'$,
then $F$ is obtained from $F'-\{z_1,z_2,z_3\}$ by adding the cycle $K$ including its edges.
Note that $\exc(F)=\exc(F')-1$ in this case.
Suppose that only one of the vertices is isolated, say $z_3$.
Since there are no edges between $A$ and $B$,
the cycle of $F'$ that contains $z_1$ consists of the path $x_1z_1x_2$, a path from $x_2$ to $x_4$ inside $B$,
the path $x_4z_2x_3$, and a path from $x_3$ to $x_1$ inside $A$.
Let $F$ be the spanning Eulerian subgraph of $G$ obtained from $F'-\{z_1,z_2,z_3\}$
by adding the paths $x_2v_2v_3x_3$ and $x_4v_4v_5v_6v_1x_1$;
we have $c(F)=c(F')$ and $i(F)=i(F')-1$, and hence $\exc(F)=\exc(F')-1$.
We conclude that $G'$ is a reduction of $G$.
\end{proof}

Note that unlike in all the other lemmas in this section,
we consider a partition of the vertices of the original graph $G$ in the next lemma
since $G\setminus E(K)$ contains all the vertices of $G$.

\begin{lemma}\label{lemma-6-ob1}
Let $G$ be a proper $2$-connected subcubic graph and let $K=v_1\ldots v_6$ be a cycle of length six in $G$.
If there exists a partition of the vertex set of $G\setminus E(K)$ into two sets $A$ and $B$ such that
$v_1,v_3\in A$, $v_2,v_4,v_5,v_6\in B$, and there is at most one edge between $A$ and $B$,
then $G$ has a linear-time reduction with respect to $K$.
\end{lemma}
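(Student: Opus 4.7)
We aim to replace the cycle $K$ together with its six external adjacencies by a smaller gadget, producing a reduction $G'$ in the spirit of Lemma~\ref{lemma-6-opp3}. The essential new difficulty compared with that lemma is that the present partition is ``unbalanced'': only two vertices of $V(K)$, namely $v_1$ and $v_3$, lie in $A$, while the other four lie in $B$. Consequently, the straightforward symmetric pairing used in Lemma~\ref{lemma-6-opp3} must be adapted.

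The first step is to construct $G'$ by deleting $V(K)$ from $G$ and adding a small gadget $R$ on at most three new vertices that reconnects the ports $x_1,\ldots,x_6$. The precise shape of $R$ depends on whether the optional $A$-$B$ edge in $G\setminus E(K)$ is present; in every case $R$ consists of at most three degree-two connector vertices, each joined to two of the $x_i$, so that $\delta(G,G')=0$ and $n(G')<n(G)$. To check that $G'$ is simple, we use the properness of $G$: the potential parallel edges would correspond to short forbidden cycles (for instance $v_1v_3\notin E(G)$, as otherwise $v_1v_2v_3$ would be a triangle with three vertices of degree three; similarly, each would-be chord $x_ix_j$ used in the gadget is absent because it would complete a $4$-cycle with four vertices of degree three). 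To check that $G'$ is $2$-connected, we exhibit two vertex-disjoint paths between the $A$- and $B$-sides of $G-V(K)$ through different connector vertices of the gadget; combined with the $2$-connectedness of $G$, this yields $2$-connectedness of $G'$.

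For the Eulerian reconstruction, given a spanning Eulerian subgraph $F'$ of $G'$, we build $F$ on $G$ as follows. For each port $x_i$, we include $v_ix_i\in F$ precisely when the gadget edge incident to $x_i$ lies in $F'$; this keeps the parity at each $x_i$ unchanged. The $K$-edges of $F$ are then chosen so that each $v_i$ has even degree in $F$, which leaves at most two valid completions: a ``short'' one using a matching of $K$ and a ``long'' one using the complementary arc. We always select the completion that merges rather than splits the ambient cycles of $F'$, yielding $\exc(F)\le\exc(F')+\delta(G,G')/4=\exc(F')$.

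The most delicate sub-case, which I expect to be the main obstacle, is when two connectors of the gadget are simultaneously active in $F'$ and happen to lie on a common cycle of $F'$: in that situation both of the two ``obvious'' completions raise the excess by one. I plan to handle this by exploiting the partition asymmetry to choose a third completion that routes the $K$-path through the long arc $v_1v_6v_5v_4v_3$, so that the single $F'$-cycle stays a single cycle in $F$ instead of splitting in two. Once this sub-case is settled, the remaining cases (all connectors inactive, exactly one active, all three active) are routine and parallel the bookkeeping of Lemma~\ref{lemma-6-opp3}, so one obtains $\exc(F)\le\exc(F')$ in every configuration and $G'$ is a linear-time reduction of $G$.
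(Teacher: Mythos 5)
Your construction is, up to presentation, the same as the paper's: the paper forms $G'$ from $G\setminus E(K)$ by identifying $v_1$ with $v_5$, $v_2$ with $v_4$, and $v_3$ with $v_6$, and each identified vertex is exactly a degree-two connector joined to two of the ports $x_i$, so you are proposing the same kind of gadget. The problem is that the two places where the lemma actually has content are left unresolved in your sketch. First, you never fix the pairing of the ports nor verify $2$-connectivity of $G'$ beyond an assertion; this requires analysing the components of the subgraph induced by $B$ (the paper extracts disjoint paths/trees $Q_1,Q_2,Q_3$ from $G_A$ and $G_B$, and in one sub-case --- when a component of $G_B$ contains only $v_5$ --- must fall back on Lemma~\ref{lemma-6-opp3} rather than proceed directly).

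Second, and more seriously, the case you yourself flag as ``the main obstacle'' is not proved, and the proposed fix cannot work as stated. Once you decide that $v_ix_i\in F$ exactly when the gadget edge at $x_i$ lies in $F'$ (which is forced, to preserve parity at $x_i$), the parity constraints at $v_1,\dots,v_6$ determine the set $E(F)\cap E(K)$ up to an element of the cycle space of $C_6$, which is one-dimensional: there are \emph{exactly two} completions, complementary in $E(K)$, so there is no ``third completion'' to reach for. Showing that one of the two completions does not raise the excess in the hard case (all three connectors active and lying on a single cycle of $F'$) is precisely where the hypothesis on the partition $A,B$ must be used: the paper argues that one of the three arcs of that $F'$-cycle lies entirely in $A$ and joins the neighbours of $v_1$ and $v_3$, which pins down the cyclic order of the ports and guarantees that the recombination through $K$ closes into a single cycle. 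Your argument never invokes the partition at all, which is a structural warning sign --- without some such hypothesis the claimed inequality $\exc(F)\le\exc(F')$ is simply false for an arbitrary six-cycle. As written, the proposal is a plausible plan rather than a proof.
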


\begin{proof}
Let $G_A$ and $G_B$ be the subgraphs of $G-E(K)$ induced by $A$ and $B$, respectively.
Since $G$ is 2-connected, the graph $G_A$ is connected, and the graph $G_B$ has at most two components.

First suppose that $G_B$ is connected or has two components each containing two of the vertices $v_2$, $v_4$, $v_5$ and $v_6$.
We consider the spanning forest of $G_B$ and derive that
$G_B$ contains two disjoint paths between with the end-vertices being the neighbors of $v_2$, $v_4$, $v_5$ and $v_6$.
Let $Q_1$ be a path from $v_1$ to $v_3$ with all internal vertices in $A$, and
let $Q_2$ and $Q_3$ be the paths between two disjoint pairs of vertices $v_2$, $v_4$, $v_5$ and $v_6$ such that
their all internal vertices are in $B$.
By symmetry, we can assume that neither $Q_2$ nor $Q_3$ connects $v_5$ and $v_6$.

Suppose that $G_B$ has two components such that
one contains one and the three of the vertices $v_2$, $v_4$, $v_5$ and $v_6$.
Let $C$ be the former component.
If $C$ contains the vertex $v_5$,
then the edge between $A$ and $B$ joins a vertex of $A$ and a vertex of $C$, and
we can apply Lemma~\ref{lemma-6-opp3}.
Hence, we can assume that $C$ does not contain the vertex $v_5$, and
let $v_j$ be the vertex contained in $C$.
By symmetry, we can assume that $j\not=6$, i.e., $j=2$ or $j=4$.
Let $Q_1$ be a tree in $G$ such that its leaves are the vertices $v_1$, $v_3$ and $v_j$ and
all its vertices belong to $A$ or $C$, and
let $Q_2$ be a tree in $G$ such that its leaves are the vertices $v_{6-j}$, $v_5$ and $v_6$ (note that $6-j$ is $2$ or $4$) and
all its vertices belong to the component of $G_B$ different from $C$.

Let $G'$ be the graph obtained from $G\setminus E(K)$ by identifying the vertices $v_1$ and $v_5$ to a single vertex $z_{15}$,
identifying $v_2$ and $v_4$ to a single vertex $z_{24}$, and identifying $v_3$ and $v_6$ to a single vertex $z_{36}$.
The paths and trees $Q_i$, which we have constructed in the previous two paragraphs,
yield that the graph $G'$ is $2$-connected.
Note that $\delta(G,G')=0$.

We establish that $G'$ is a reduction of $G$.
Let $F'$ be a spanning Eulerian subgraph of $G'$.
If all three vertices $z_{15}$, $z_{24}$ and $z_{36}$ are isolated in $F'$,
then we can extend $F$ by adding a cycle $K$ to an Eulerian spanning subgraph $F$ of $G$ with $\exc(F)=\exc(F')-1$.
If two of the vertices $z_{15}$, $z_{24}$ and $z_{36}$ are isolated in $F'$,
then we can extend by rerouting one of the cycles of $F'$ through the cycle $K$
to an Eulerian spanning subgraph $F$ of $G$ with $\exc(F)\in\{\exc(F')-1,\exc(F')\}$.
Finally, if one of the vertices $z_{15}$, $z_{24}$ and $z_{36}$ is an isolated vertex in $F'$,
it is possible to reroute the cycle(s) of $F'$ containing the two of the vertices $z_{15}$, $z_{24}$ and $z_{36}$
to get an Eulerian spanning subgraph $F$ such that
the number of non-trivial components of $F$ does not exceed that of $F'$ and
the same is true for the number of isolated vertices, i.e., $\exc(F)\le\exc(F')$.
Hence, none of the vertices $z_{15}$, $z_{24}$ and $z_{36}$ isolated in $F'$.

If the vertices $z_{15}$, $z_{24}$ and $z_{36}$ are contained in at least two different cycles of $F'$,
it is possible to complete the three paths of $F'-\{z_{15},z_{24},z_{36}\}$
to an Eulerian spanning subgraph $F$ of $G$ in a way that
there are at most two cycles of $F$ passing through the cycle $K$ and
none of the vertices of $K$ is isolated in $F$. In particular, $\exc(F)\le\exc(F')$.
Consequently, we can assume that all the vertices $z_{15}$, $z_{24}$ and $z_{36}$ are contained in the same cycle of $F'$.
Let $R$, $R'$ and $R''$ be the paths of this cycle after removing the vertices $z_{15}$, $z_{24}$ and $z_{36}$.

Observe that one of the paths is fully contained in $A$ and connects the neighbors of the vertices $v_1$ and $v_3$;
let $R$ be this path.
Since the paths $R$, $R'$ and $R''$ together with the vertices $z_{15}$, $z_{24}$ and $z_{36}$ form a cycle,
it follows that neither the path $R'$ nor the path $R''$ connects the neighbors of the vertices $v_5$ and $v_6$.
Hence, $G$ contains a cycle formed by the paths $R$, $R'$, $R''$,
the edges joining $v_1,\ldots,v_6$ to their numbers outside $K$ and
the edges $v_1v_2$, $v_3v_4$ and $v_5v_6$.
Replacing the cycle of $F'$ containing the vertices $z_{15}$, $z_{24}$ and $z_{36}$ with this cycle
yields an Eulerian spanning subgraph $F$ of $G$ with $\exc(F)=\exc(F')$.
This finishes the proof that $G'$ is a reduction of $G$.
\end{proof}

In the next two lemmas,
we show that two different types of cycles of length six that are not $\theta$-cycles can be reduced.

\begin{lemma}\label{lemma-6-ob0}
Let $G$ be a proper $2$-connected subcubic graph,
let $K=v_1v_2v_3v_4v_5v_6$ be one of its cycles of length six, and
let $x_i$ be the neighbor of $v_i$ not contained in $K$ for $i=1,\ldots,6$.
Let $A,B$ be a partition of the vertices of $G-V(K)$ such that $x_1,x_2\in A$, $x_3,x_4,x_5,x_6\in B$, and
there is no edge between $A$ and $B$.
If $K$ is not a $\theta$-cycle, then $G$ has a linear-time reduction with respect to $K$.
\end{lemma}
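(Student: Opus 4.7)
The plan is to reduce $G$ by removing the six vertices of $K$ and inserting three new degree-two subdivision vertices $z_1, z_2, z_3$ that mediate new edges between appropriately chosen pairs of the $x_i$'s.

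First, using the $2$-connectivity of $G$ together with the absence of edges between $A$ and $B$ in $G - V(K)$, I would argue that $A$ is a single component of $G - V(K)$ (otherwise, the removal of some $v_i$ would disconnect $G$) and that each $x_i$ for $i \in \{3, 4, 5, 6\}$ lies in a component of $G - V(K)$ containing at least one other $x_j$ (otherwise, the corresponding $v_i$ would be a cut-vertex). Since $K$ is not a $\theta$-cycle, the components of the subgraph induced on $B$ partition $\{x_3, x_4, x_5, x_6\}$ in exactly one of three ways: a single block containing all four, the pair $\{x_3, x_4\},\{x_5, x_6\}$, or the pair $\{x_3, x_5\},\{x_4, x_6\}$.

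Next, I would let $G'$ be the graph obtained from $G - V(K)$ by adding the three new paths $x_1 z_1 x_3$, $x_2 z_2 x_6$, and $x_4 z_3 x_5$. Simplicity of $G'$ follows from the properness of $G$: the potential edges $x_1 x_3$ and $x_2 x_6$ would each create a $5$-cycle with at most five vertices of degree three (namely $x_1 v_1 v_2 v_3 x_3$ and $x_2 v_2 v_1 v_6 x_6$), and $x_4 x_5$ would create the $4$-cycle $x_4 v_4 v_5 x_5$; all of these are forbidden. One computes $\delta(G, G') = 0$ (six degree-three vertices removed, three degree-two vertices added) and checks $2$-connectivity case by case: in all three component-structure cases, the chosen pairing places two internally disjoint paths (through $z_1$ and $z_2$) between $A$ and the $B$-side, and in the two-component cases the third bridge $z_3$ directly links the two components of $B$, yielding a triangular connectivity among $A$ and the $B$-components. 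Preservation of $2$-connectivity under single vertex removal follows by a standard rerouting argument: any path in $G - v$ passing through $V(K)$ can be replicated in $G' - v$ via the bridges $z_j$.

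Finally, given a spanning Eulerian subgraph $F'$ of $G'$, I would construct a spanning Eulerian subgraph $F$ of $G$ as follows. Each $z_j$ is either isolated in $F'$ or has both its incident edges in $F'$; let $S \subseteq \{z_1, z_2, z_3\}$ be the set of traversed $z_j$'s. Build $F$ from $F' - \{z_1, z_2, z_3\}$ by adding the six vertices of $K$, the external edges $v_i x_i$ for exactly those $x_i$ whose incident $z_j$ lies in $S$, and a specifically chosen subset of the six edges of $K$ that makes every $v_i$ have even $F$-degree. For each subset $S$ and each configuration of how the traversed $z_j$'s distribute among the cycles of $F'$, I would pick among the (at most two) parity-compatible $K$-edge subsets so as to ensure $\exc(F) \le \exc(F') = \exc(F') + \delta(G, G')/4$. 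The main obstacle is the exhaustive case analysis in this step, particularly when two or three $z_j$'s are traversed on distinct cycles of $F'$: there the two $K$-paths connecting the external endpoints cannot be made vertex- or edge-disjoint, but this is actually advantageous, since the forced cycle merging reduces $c(F)$ by one and thereby compensates for any isolated $v_i$'s introduced by the $K$-edge selection. A secondary obstacle is the case-by-case vertex-removal argument needed to verify $2$-connectivity of $G'$ across all three component-structure cases.
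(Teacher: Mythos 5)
Your overall strategy (delete $V(K)$, attach three subdivided connections among the $x_i$, and lift an Eulerian subgraph of $G'$ by restoring the edges $v_ix_i$ at the odd-degree $x_i$ and a parity-correcting subset of $E(K)$) is the same as the paper's first reduction, but your choice of pairing breaks the excess bound. Take the case where $z_1$ and $z_2$ are traversed and $z_3$ is isolated in $F'$ (this is a realizable case: $\{z_1,z_2\}$ separates $A$ from $B$ in your $G'$, so they always lie on a common cycle, while $z_3$ is internal to the $B$-side and can be left isolated). Then you must add $v_1x_1,v_2x_2,v_3x_3,v_6x_6$ and a subset of $E(K)$ giving $v_1,v_2,v_3,v_6$ odd degree and $v_4,v_5$ even degree on $K$; the only two such subsets are $\{v_2v_3,v_6v_1\}$ (which isolates $v_4$ and $v_5$, so $i(F)=i(F')+1$ after accounting for the lost isolated $z_3$) and $\{v_1v_2,v_3v_4,v_4v_5,v_5v_6\}$ (which splits the cycle through $z_1,z_2$ into two cycles, so $c(F)=c(F')+1$ and $i(F)=i(F')-1$). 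Both give $\exc(F)=\exc(F')+1$, whereas $\delta(G,G')=0$ forces you to achieve $\exc(F)\le\exc(F')$. The paper avoids this by pairing antipodal vertices, $x_1z_1x_4$, $x_2z_2x_5$, $x_3z_3x_6$: there the analogous case is repaired by the edge set $\{v_2v_3,v_3v_4,v_5v_6,v_6v_1\}$, which covers all of $V(K)$ and rethreads the single $F'$-cycle into a single $F$-cycle, giving $\exc(F)=\exc(F')-1$. So the antipodal pairing is not a cosmetic choice; your pairing of the $K$-adjacent pair $(x_4,x_5)$ and the distance-two pairs $(x_1,x_3)$, $(x_2,x_6)$ cannot be made to work by any selection of $K$-edges.

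There is a second gap: your trichotomy of component structures of $B$ only rules out the split $\{x_3,x_6\}\mid\{x_4,x_5\}$ when it is a genuine disconnection (which would make $K$ a $\theta$-cycle). It does not rule out the case where $B$ is connected but has a $1$-edge-cut $e$ separating a part containing $x_3,x_6$ from a part containing $x_4,x_5$. In that case $e$ is a bridge of your $G'$ (the $\{x_4,x_5\}$-part together with $z_3$ meets the rest of $G'$ only in $e$), so $G'$ is not $2$-connected; the same happens to the paper's first graph $G_1$, which is exactly why the paper introduces a second, structurally different reduction for this case (adding the edges $x_2x_3$, $x_1x_4$, $x_5x_6$ and subdividing $e$, with $\delta=4$). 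Your proposal has no fallback here, and no reduction of the ``three subdivided matchings'' form can be $2$-connected in this configuration. Finally, a minor point: your worry about simplicity is unnecessary, since attaching paths through brand-new degree-two vertices cannot create parallel edges regardless of which $x_ix_j$ are already edges of $G$.
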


\begin{proof}
Since $G$ is proper, the vertices $x_1, \ldots, x_6$ are pairwise distinct.
Let $G_A$ and $G_B$ be the subgraphs of $G$ induced by $A$ and $B$.
The $2$-connectivity of $G$ implies that $G_A$ is connected and $G_B$ has at most two components,
each containing two vertices among $x_3,\ldots,x_6$.
If $G_B$ contains an edge-cut of size at most one separating $\{v_3,v_5\}$ from $\{v_4,v_6\}$,
then a reduction of $G$ can be obtained using Lemma~\ref{lemma-6-ob1},
which we apply with one of the sides of this cut in $G_B$ playing the role of $A$ and
the rest of the vertices outside the cycle $B$ playing the role of $B$ in the statement of Lemma~\ref{lemma-6-ob1}.
We conclude that $G-V(K)$ contains three disjoint paths $Q_1$, $Q_2$, and $Q_3$ such that
$Q_1$ connects $x_1$ with $x_2$, $Q_2$ connects $x_3$ with $x_4$ or $x_6$, and
$Q_3$ connects $x_5$ with the other of the vertices $x_4$ and $x_6$.

Let $G_1$ be the graph obtained from $G-V(K)$ by adding paths $x_1z_1x_4$, $x_2z_2x_5$, and $x_3z_3x_6$,
where $z_1$, $z_2$ and $z_3$ are new vertices, each having degree two in $G_1$.
Note that $\delta(G,G_1)=0$.
We show that $G_1$ is a reduction of $G$ assuming that $G_1$ is $2$-connected.
Let $F_1$ be a spanning Eulerian subgraph of $G_1$.
If at least two of the vertices $z_1$, $z_2$ and $z_3$ are isolated in $F_1$,
then it is easy to construct a spanning Eulerian subgraph $F$ of $G$ with $\exc(F)\le\exc(F_1)$.
Hence, assume that at most one of the vertices $z_1$, $z_2$ and $z_3$ is isolated in $F_1$.
Since $z_1$ and $z_2$ is a $2$-vertex cut in $G_1$, it follows that
the paths $x_1z_1x_4$ and $x_2z_2x_5$ are contained in the same cycle of $F_1$.
If $z_3$ is isolated in $F_1$,
then let $F$ be a spanning Eulerian subgraph of $G$ obtained from $F_1-\{z_1,z_2,z_3\}$
by adding the paths $x_1v_1v_6v_5x_5$ and $x_2v_2v_3v_4x_4$.
Note that $\exc(F)=\exc(F_1)-1$ in this case.
If $z_3$ is not isolated in $F_1$, i.e., the path $x_3z_3x_6$ is contained in a cycle of $F_1$,
then let $F$ be obtained from $F_1-\{z_1,z_2,z_3\}$
by adding the paths $x_1v_1v_6x_6$, $x_2v_2v_3x_3$ and $x_4v_4v_5x_5$.
Observe that $c(F)=c(F_1)$, which implies $\exc(F)=\exc(F_1)$.
We conclude that $G_1$ is a reduction of $G$ if $G_1$ is $2$-connected.

It remains to consider the case that $G_1$ is not 2-connected.
This implies that the path $Q_2$ connects $x_3$ with $x_6$, and
the path $Q_3$ connects $x_4$ with $x_5$.
In addition, the vertices of the subgraph $G_B$ can be split into two parts $B'$ and $B''$ such that
$B'$ contains the vertices $x_3$ and $x_6$, $B''$ contains the vertices $x_4$ and $x_5$, and
there is at most one edge between $B'$ and $B''$.
Since $K$ is not a $\theta$-cycle, there must be at least one edge between $B'$ and $B''$,
i.e., there is exactly one edge between $B'$ and $B''$.
Let $e$ be this edge.

Let $G_2$ be the graph obtained from $G-V(K)$ by adding the edges $x_2x_3$, $x_1x_4$ and $x_5x_6$, and
by subdividing $e$ by one new vertex $w$.
Observe that $G_2$ is $2$-connected and $\delta(G,G_2)=4$.
In addition, $G_2$ is simple since $G$ is proper.
We show that $G_2$ is a reduction of $G$.
Let $F_2$ be a spanning Eulerian subgraph of $G_2$.
If $w$ is an isolated vertex in $F_2$,
then $F'$ contains either none or all of the edges $x_2x_3$, $x_1x_4$ and $x_5x_6$.
In the former case, let $F$ be the spanning Eulerian subgraph of $G$ obtained from $F'$ by adding the cycle $K$.
In the latter case, let $F$ be the subgraph obtained from $F'$ by removing the edges $x_2x_3$, $x_1x_4$ and $x_5x_6$ and
adding the paths $x_2v_2v_3x_3$, $x_4v_4v_5x_5$ and $x_6v_6v_1x_1$.
Since $c(F)=c(F_2)+1$ and $i(F)=i(F_2)-1$ in either of the cases, it follows that $\exc(F)=\exc(F_2)+1$.

If $w$ is not an isolated vertex,
then the subgraph $F_2$ either contains the edge $x_5x_6$ or
it contains the edges $x_2x_3$ and $x_1x_4$.
In the former case, let $F$ be the spanning Eulerian subgraph of $G$ obtained from $F'$
by removing the edge $x_5x_6$ and adding the path $x_6v_6v_1\cdots v_5x_5$.
In the latter case, let $F$ be the spanning Eulerian subgraph of $G$ obtained from $F'$
by removing the edges $x_2x_3$ and $x_1x_4$, and
adding the paths $x_2v_2v_3x_3$ and $x_1v_1v_6v_5v_4x_4$.
In both case, we get that $c(F)=c(F_2)$ and $i(F)=i(F_2)$, which yields that $\exc(F)=\exc(F_2)$.
This concludes the proof that $G_2$ is a reduction of $G$.
\end{proof}

\begin{lemma}\label{lemma-6-oppa}
Let $G$ be a proper $2$-connected subcubic graph,
let $K=v_1v_2v_3v_4v_5v_6$ be one of its cycles of length six, and
let $x_i$ be the neighbor of $v_i$ not contained in $K$ for $i=1,\ldots,6$.
If $K$ is not a $\theta$-cycle and
the vertices $x_1$ and $x_4$ are in different components of $G-V(K)$,
then $G$ has a linear-time reduction with respect to $K$.
\end{lemma}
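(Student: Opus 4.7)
The plan is to case-analyse the distribution of $x_1,\ldots,x_6$ among the components of $G-V(K)$, and in each admissible case exhibit a partition satisfying the hypotheses of either Lemma~\ref{lemma-6-ob0} or Lemma~\ref{lemma-6-ob1}; the desired reduction then follows directly from those lemmas.

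Since $G$ is proper, the neighbors $x_1,\ldots,x_6$ are pairwise distinct, and since $G$ is $2$-connected, every component of $G-V(K)$ contains at least two of them (otherwise the unique $v_i$ incident to that component would be a cut-vertex). Combined with the hypothesis that $x_1$ and $x_4$ lie in distinct components, $G-V(K)$ has either two or three components, and the six $x$'s are partitioned among them into parts of size at least two.

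Two strategies cover every admissible configuration. The first strategy: if some component $C$ of $G-V(K)$ satisfies $V(C)\cap\{x_1,\ldots,x_6\}=\{x_i,x_{i+1}\}$ for some $i$ (indices mod $6$), then setting $A=V(C)$ and $B=V(G-V(K))\setminus V(C)$ yields a partition of $V(G)\setminus V(K)$ with no crossing edges (as $A$ and $B$ are distinct components) and with exactly the two consecutive $x$'s in $A$; since $K$ is not a $\theta$-cycle by hypothesis, Lemma~\ref{lemma-6-ob0} (after rotating the labels of $K$) applies. The second strategy: if some component $C$ contains a pair $\{x_i,x_{i+2}\}$ with indices differing by $2\pmod 6$, then setting $A=V(C)\cup\{v_i,v_{i+2}\}$ and letting $B$ be the complement in $V(G)\setminus E(K)$, the only possible edges between $A$ and $B$ in $G\setminus E(K)$ are the edges $v_jx_j$ with $j\notin\{i,i+2\}$ and $x_j\in V(C)$; the number of such crossings equals $|V(C)\cap\{x_1,\ldots,x_6\}|-2$, so if $C$ contains at most three of the $x$'s, Lemma~\ref{lemma-6-ob1} applies (again with an index rotation).

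The enumeration verifies that one of the two strategies applies in every admissible configuration. In the $(2,4)$ split, the two-element side is $\{x_1,x_j\}$ or $\{x_4,x_j\}$; either the two indices are adjacent (triggering the first strategy) or they differ by $2\pmod 6$ (triggering the second, with zero crossings). In the $(3,3)$ split, since $\{1,\ldots,6\}$ contains three odd and three even indices, every three-element subset contains two indices of the same parity, hence a pair differing by $\pm 2\pmod 6$, so the second strategy applies with one crossing. In the $(2,2,2)$ split, either some component contains a pair of consecutive $x$'s---in which case the first strategy applies, except in the $\theta$-cycle configurations, which have two consecutive pairs and one antipodal pair and are excluded by hypothesis---or else every pair has gap $2$ or $3$, and a short check shows that the only admissible distributions are $\{x_1,x_3\},\{x_4,x_6\},\{x_2,x_5\}$ and $\{x_1,x_5\},\{x_2,x_4\},\{x_3,x_6\}$ (up to rotation of $K$), both of which contain a component consisting of exactly a pair with indices differing by $2$, so the second strategy applies with zero crossings. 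The only conceptually subtle point, and therefore the main obstacle, is identifying the $\theta$-cycle configurations as the unique exceptions that would otherwise obstruct the first strategy; this is exactly why the non-$\theta$-cycle hypothesis of the lemma is needed.
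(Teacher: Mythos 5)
Your proposal is correct and takes essentially the same route as the paper: both arguments reduce the statement to Lemmas~\ref{lemma-6-ob0} and~\ref{lemma-6-ob1} by analysing how $x_1,\ldots,x_6$ are distributed over the components of $G-V(K)$, applying Lemma~\ref{lemma-6-ob1} when a component supplies a distance-two pair $\{x_i,x_{i+2}\}$ with at most one crossing pendant edge and Lemma~\ref{lemma-6-ob0} when a component's trace is a consecutive pair. The paper merely organizes the same case analysis differently (via an arbitrary two-sided grouping of components with $|A\cap\{x_1,\ldots,x_6\}|\le 3$ by symmetry, rather than your explicit $(2,4)/(3,3)/(2,2,2)$ enumeration), and your identification of the $\theta$-cycle split as the excluded configuration is exactly the role that hypothesis plays.
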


\begin{proof}
Let $A$ and $B$ a partition of the vertices of $G-V(K)$ such that
$x_1\in A$ and $x_4\in B$, and there is no edge between $A$ and $B$.
By symmetry, we can assume that $|A\cap \{x_1,\ldots, x_6\}|\le 3$.
If $x_3\in A$ or $x_5\in A$,
then the reduction exists by Lemma~\ref{lemma-6-ob1};
e.g., if $x_3\in A$, apply the lemma with $A\cup\{v_1,v_3\}$ playing the role of the set $A$ and
with $B\cup\{v_2,v_4,v_5,v_6\}$ playing the role of the set $B$ from the statement of the lemma.
If $A=\{x_1,x_2,x_6\}$,
then the reduction also exists by Lemma~\ref{lemma-6-ob1}:
apply the lemma with $A\cup\{v_6,v_2\}$ playing the role of the set $A$ and
with $B\cup\{v_1,v_3,v_4,v_5\}$ playing the role of the set $B$.
We conclude that $A\subseteq\{x_1,x_2,x_6\}$ and $|A|=2$.
By symmetry, we can assume that $A=\{x_1,x_2\}$ and $B=\{x_3,x_4,x_5,x_6\}$.
The existence of the reduction now follows from Lemma~\ref{lemma-6-ob0}.
\end{proof}

Lemmas~\ref{lemma-6-ob0} and~\ref{lemma-6-oppa} yield the following.

\begin{lemma}\label{lemma-6-nocut}
Let $G$ be a proper $2$-connected subcubic graph.
If $G$ contains a cycle $K$ of length six that is not a $\theta$-cycle and that contains an edge in $2$-edge-cut,
then $G$ has a linear-time reduction with respect to $K$.
\end{lemma}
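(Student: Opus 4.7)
The plan is to show that any 2-edge-cut of $G$ containing an edge of $K$ forces a partition of $V(G)\setminus V(K)$ into two sides with no edges between them in $G-V(K)$, and that this partition satisfies the hypothesis of either Lemma~\ref{lemma-6-ob0} or Lemma~\ref{lemma-6-oppa}. Since $K$ is not a $\theta$-cycle by assumption, one of those lemmas will then furnish the desired linear-time reduction.

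The first step is to observe that if $\{e,e'\}$ is a 2-edge-cut of $G$ with $e\in E(K)$, then $e'$ must also lie on $K$. Indeed, $G$ is 2-edge-connected (being 2-connected and subcubic), and $K\setminus\{e\}$ is a path spanning $V(K)$. If $e'$ were not on $K$, then $V(K)$ would lie entirely in one component of $G\setminus\{e,e'\}$, and since the endpoints of $e$ are in $V(K)$, the edge $e$ would not cross between the two components. The only remaining cross edge would be $e'$, which would make $G\setminus\{e'\}$ disconnected, contradicting 2-edge-connectivity.

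The second step is a short case analysis on the position of $e'$ relative to $e$. Label $K=v_1v_2v_3v_4v_5v_6$ so that $e=v_6v_1$. Up to reflecting $K$, the edge $e'$ is one of $v_1v_2$, $v_2v_3$, or $v_3v_4$, and the vertex sets of the two paths of $K\setminus\{e,e'\}$ are $(\{v_1\},\{v_2,\ldots,v_6\})$, $(\{v_1,v_2\},\{v_3,\ldots,v_6\})$, or $(\{v_1,v_2,v_3\},\{v_4,v_5,v_6\})$, respectively. Let $A,B$ be the restrictions to $V(G)\setminus V(K)$ of the two components of $G\setminus\{e,e'\}$, named so that the external neighbors of the vertices in the first path lie in $A$. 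Then $\{A,B\}$ partitions $V(G)\setminus V(K)$ and carries no edges of $G-V(K)$ across the partition, since any such edge would survive in $G\setminus\{e,e'\}$ and contradict the cut.

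In the $(2,4)$-case the partition reads $\{x_1,x_2\}\subseteq A$ and $\{x_3,x_4,x_5,x_6\}\subseteq B$, which is precisely the hypothesis of Lemma~\ref{lemma-6-ob0}; in the $(1,5)$- and $(3,3)$-cases the vertices $x_1$ and $x_4$ end up in different components of $G-V(K)$, which is precisely the hypothesis of Lemma~\ref{lemma-6-oppa}. In every case the cited lemma supplies the reduction. The main thing to track carefully is the case bookkeeping and the verification that no cross edges exist in $G-V(K)$; beyond that, linear time is immediate, because given $K$ one can find $\{e,e'\}$ by checking the $\binom{6}{2}=15$ pairs of edges of $K$ for disconnection, and the invoked lemmas are already linear-time reductions with respect to $K$.
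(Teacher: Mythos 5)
Your proof is correct and follows essentially the same route as the paper: both arguments observe that the second edge of the $2$-edge-cut must also lie on $K$ and then dispatch the resulting splits of the cycle to Lemma~\ref{lemma-6-ob0} (the $(2,4)$ split) and Lemma~\ref{lemma-6-oppa} (the $(3,3)$ split). The only cosmetic difference is that the paper excludes the adjacent-edge case outright by noting the shared vertex would be a cut-vertex, whereas you fold that $(1,5)$ case into the Lemma~\ref{lemma-6-oppa} branch, which is also valid (indeed vacuously, since that configuration cannot occur in a $2$-connected graph).
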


\begin{proof}
Let $v_1,\ldots,v_6$ be the vertices of the cycle $K$.
By symmetry, we can assume that the edge $v_1v_2$ is contained in a $2$-edge-cut.
The $2$-edge-cut must contain another edge $e$ of the cycle $K$.
Since $G$ is $2$-connected, this edge $e$ is neither $v_1v_6$ nor $v_2v_3$.
If the edge $e$ is $v_3v_4$ or $v_5v_6$, then the reduction exists by Lemma~\ref{lemma-6-ob0}.
Otherwise, the edge $e$ is the edge $v_4v_5$ and the reduction exists by Lemma~\ref{lemma-6-oppa}.
\end{proof}

\begin{lemma}\label{lemma-6-mainred}
Let $G$ be a proper $2$-connected subcubic graph,
let $K=v_1v_2v_3v_4v_5v_6$ be one of its cycles of length six, and
let $x_i$ be the neighbor of $v_i$ not contained in $K$ for $i=1,\ldots,6$.
If the edge $v_1x_1$ is not contained in a $2$-edge-cut, then $G$ has a linear-time reduction unless
\begin{itemize}
\item all the edges $v_2x_2$, $v_3x_3$, $v_5x_5$, and $v_6x_6$ are contained in $2$-edge-cuts, and
\item there exists a partition $A$ and $B$ of the vertices of $G-V(K)$ such that
      $x_1,x_2,x_6\in A$, $x_3,x_4,x_5\in B$,
      $G-V(K)$ contains exactly one edge between $A$ and $B$, and
      both the subgraphs induced by $A$ and $B$ are connected.
\end{itemize}
\end{lemma}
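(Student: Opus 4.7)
My plan is to prove the contrapositive: assume $G$ admits no linear-time reduction via Lemmas~\ref{lemma-6-opp3}--\ref{lemma-6-nocut}, and derive both conditions of the exceptional case. Two quick preliminary observations cut down the work. First, $K$ is not a $\theta$-cycle: in any $\theta$-cycle, the two spokes leaving one of the three ``sides'' of $K$ form a 2-edge-cut, and for the standard labeling these would be $v_1x_1$ and $v_2x_2$, contradicting the hypothesis on $v_1x_1$. Second, by Lemma~\ref{lemma-6-nocut}, we may assume that no edge of $K$ itself lies in a 2-edge-cut.

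Next I case-analyze the components of $G-V(K)$. The 2-connectivity of $G$ forces each component to contain at least two of the neighbors $x_1,\ldots,x_6$. If $G-V(K)$ has three components, the six $x_i$'s are paired into three pairs; I enumerate the pairings and check that either the pairing corresponds to a $\theta$-cycle (excluded), or two of the components can be grouped into a single set so the resulting bipartition of $V(G-V(K))$ matches the hypothesis of Lemma~\ref{lemma-6-opp3} or Lemma~\ref{lemma-6-ob0}, yielding a reduction. If $G-V(K)$ has two components, Lemma~\ref{lemma-6-oppa} applies whenever $x_1$ and $x_4$ lie in different components; otherwise the smaller component's attachments form a 2-element subset of $\{x_2,x_3,x_5,x_6\}$, and each possibility is handled by Lemma~\ref{lemma-6-ob0} or Lemma~\ref{lemma-6-ob1} applied with an appropriate grouping of $K$-vertices.

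In the remaining case $G-V(K)$ is connected, I analyze its internal 2-edge-cuts. Any 2-edge-cut of $G-V(K)$ separating the attachment set $\{x_1,\ldots,x_6\}$ into a nontrivial subset $S$ and its complement yields, together with appropriate vertices of $K$ on either side, a bipartition of $V(G\setminus E(K))$ with at most one crossing edge; this triggers Lemma~\ref{lemma-6-ob1} whenever $S$ fits one of its patterns. Going through the possible subsets $S$ modulo the symmetry fixing $v_1$, and using Lemma~\ref{lemma-6-ob0} to absorb the extra patterns, the only $S$ escaping every such application is $S=\{x_1,x_2,x_6\}$, producing exactly the claimed partition with one edge between and both sides connected. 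For the condition that each of $v_2x_2,v_3x_3,v_5x_5,v_6x_6$ lies in a 2-edge-cut, I argue that otherwise one could refine the partition inside $A$ or $B$ (pulling $x_2$, $x_3$, $x_5$, or $x_6$ away from its neighbor) to obtain another partition satisfying Lemma~\ref{lemma-6-ob1}, contradicting the assumption that no reduction exists.

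The main obstacle will be the exhaustive case analysis in the second step, particularly ensuring that every pattern of two- or three-component structure aligns with the precise hypothesis of Lemma~\ref{lemma-6-opp3}, \ref{lemma-6-ob0}, \ref{lemma-6-ob1}, or~\ref{lemma-6-oppa} after choosing the correct grouping of cycle vertices. A close second is the final verification of the spokes being in 2-edge-cuts, which requires tracking the attachment structure of $A$ and $B$ to $K$ carefully and re-applying Lemma~\ref{lemma-6-ob1} to sub-partitions of $A$ and $B$ to extract exactly the required structural conclusion.
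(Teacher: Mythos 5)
There is a genuine gap, and it is structural rather than a matter of missing case analysis. Your plan tries to derive the lemma entirely from Lemmas~\ref{lemma-6-opp3}--\ref{lemma-6-nocut}, all of which require some small cut in $G-V(K)$ or $G\setminus E(K)$ (a disconnection, a cut-edge, or a partition with at most one crossing edge). But the lemma must also produce a reduction when $G-V(K)$ is connected and, say, $2$-edge-connected: in that situation none of the cited lemmas applies, and the exceptional structure fails as well, since it requires a partition of $G-V(K)$ with \emph{exactly one} crossing edge, i.e.\ a cut-edge of $G-V(K)$. So the contrapositive you set up cannot close: after ruling out all the cut-based configurations you are left with nothing to apply, yet the statement still asserts a reduction exists. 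The paper resolves this with a \emph{new} explicit reduction not covered by the earlier lemmas: it forms $G_1$ from $G-V(K)$ by adding the edge $x_2x_6$ and a new vertex $z$ adjacent to $x_3$, $x_4$, $x_5$ (so $\delta(G,G_1)=4$ and any spanning Eulerian subgraph lifts with excess increase at most $1$), and shows $G_1$ is a reduction whenever it is $2$-connected. The exceptional partition $A,B$ in the statement is then extracted precisely from the failure of $2$-connectivity of $G_1$, using the paths $Q_{14}$, $Q_{25}$, $Q_{36}$ guaranteed by Lemma~\ref{lemma-6-oppa}. This construction is the heart of the proof and is absent from your proposal.

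A secondary, related weakness is your treatment of the first bullet. Saying one can ``refine the partition inside $A$ or $B$'' to re-trigger Lemma~\ref{lemma-6-ob1} when some $v_ix_i$ with $i\in\{2,3,5,6\}$ avoids all $2$-edge-cuts is not an argument: the failure of $v_2x_2$ to lie in a $2$-edge-cut does not by itself produce any partition with at most one crossing edge. The paper again uses a second explicit construction ($G_2$, obtained by adding $x_1x_3$ and a vertex adjacent to $x_4,x_5,x_6$) and shows, using the already-established partition $A,B$, that $G_2$ must be $2$-connected and hence is a reduction. Your first two preliminary observations ($K$ is not a $\theta$-cycle; Lemma~\ref{lemma-6-oppa} forces the three connecting paths) are correct and match the paper, but without the new candidate reductions the proof cannot be completed along the lines you describe.
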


\begin{proof}
The cycle $K$ is not a $\theta$-cycle since all edges incident with a $\theta$-cycle are contained in $2$-edge-cuts.
Since the edge $v_1x_1$ is not contained in a $2$-edge-cut, the degree of $x_1$ is three,
in particular, its degree in $G-V(K)$ is two.
Note that $G-V(K)$ contains a path $Q_{25}$ connecting the vertex $x_2$ with the vertex $x_5$,
a path $Q_{36}$ connecting $x_3$ with $x_6$, and a path $Q_{14}$ connecting $x_1$ with $x_4$ (the three paths need not be disjoint)
since otherwise the existence of the reduction of $G$ follows from Lemma~\ref{lemma-6-oppa}.

Let $G_1$ be the graph obtained from $G-V(K)$ by adding the edge $x_2x_6$ and a vertex $z$ adjacent to $x_3$, $x_4$, and $x_5$.
Note that $\delta(G,G_1)=4$.
Since $G$ is proper, the vertices $x_2$ and $x_6$ are not adjacent in $G$.
Hence $G_1$ is a simple subcubic graph.
Observe that any spanning Eulerian subgraph $F_1$ of $G_1$
can be transformed to a spanning Eulerian subgraph $F$ of $G$ with $\exc(F)\le \exc(F_1)+1$.
Hence, $G_1$ is a reduction of $G$ unless $G_1$ is not 2-connected.

In the rest of the proof, we assume that $G_1$ is not 2-connected. 
This implies that there exists a partition of vertices of $G_1$ to non-empty sets $A$ and $B$
such that there is at most one edge between $A$ and $B$.
By symmetry, we can assume that $x_2$ is contained in $A$.
Note that the edge $x_2x_6$ and the paths $Q_{36}$, $x_3zx_5$ and $Q_{25}$
contain a cycle passing through the edge $x_2x_6$ and a cycle passing through the path $x_3zx_5$;
note that their union need not be a cycle since the path $Q_{36}$ and $Q_{25}$ need not be disjoint.
This implies that $x_6\in A$, and either $\{x_3,x_5\}\subseteq A$ or $\{x_3,x_5\}\subseteq B$.
If $\{x_3,x_5\}\subseteq A$, then either $G$ is be 2-connected (if $x_1\in A$),
or the edge $v_1x_1$ is contained in a $2$-edge-cut in $G$ (if $x_1\in B$).
Since both these conclusions are impossible,we get that $\{x_3,x_5\}\subseteq B$.
Hence, there is an edge between $A$ and $B$ and this edge is contained in both paths $Q_{25}$ and $Q_{36}$.
Let $e_0$ be this edge.
Observe that $e_0$ is not incident with the vertex $z$, which does not exist in $G$.
In particular, both the vertices $z$ and $x_4$ belong to $B$.
If $x_1\in B$, then Lemma~\ref{lemma-6-ob1} yields the existence of a reduction of $G$.
So, we can assume that $x_1\in A$.
This yields that the path $Q_{14}$ also contains the edge $e_0$.

Since all paths $Q_{14}$, $Q_{25}$, and $Q_{36}$ must contain the edge $e_0$,
we conclude that $G-V(K)-e_0$ has exactly two components;
one of the two components has the vertex set $A$, in particular, it contains the vertices $x_1$, $x_2$ and $x_6$, and
the other component has the vertex set $B\setminus\{z\}$ and contains the vertices $x_3$, $x_4$ and $x_5$.
If $v_ix_i$ is not contained in a $2$-edge-cut for some $i\in\{2,3,5,6\}$, say $i=2$, 
then consider the graph $G_2$ obtained from $G-V(K)$ by adding the edge $x_1x_3$ and
a new vertex $z$ adjacent to $x_4$, $x_5$, and $x_6$.
If the graph $G_2$ were not $2$-connected, it is easy to see that $G$ would not be $2$-connected.
Hence, $G_2$ is a reduction for $G$ (note that the edge $v_ix_i$ can play the role of the edge $v_1x_1$
at the beginning of our proof).
\end{proof}

\begin{lemma}\label{lemma-6no2e}
Let $G$ be a proper $2$-connected subcubic graph,
let $K=v_1v_2v_3v_4v_5v_6$ be one of its cycles of length six, and
let $x_i$ be the neighbor of $v_i$ not contained in $K$ for $i=1,\ldots,6$.
If neither $v_1x_1$ nor $v_4x_4$ is contained in a $2$-edge-cut, then $G$ has a linear-time reduction with respect to $K$.
\end{lemma}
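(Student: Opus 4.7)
The plan is to apply Lemma~\ref{lemma-6-mainred} twice and, if both applications land in its exceptional case, to build a direct reduction from the rigid structure that emerges.

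First, I apply Lemma~\ref{lemma-6-mainred} with the cycle $K$ and the edge $v_1 x_1$ (not in a $2$-edge-cut by hypothesis). If a linear-time reduction results, we are done. Otherwise, the exceptional case of that lemma yields: (a) each of $v_2 x_2, v_3 x_3, v_5 x_5, v_6 x_6$ is in a $2$-edge-cut, and (b) a partition $(A, B)$ of $V(G) \setminus V(K)$ with $x_1, x_2, x_6 \in A$, $x_3, x_4, x_5 \in B$, a unique edge $e_0$ joining $A$ to $B$ in $G - V(K)$, and both $G[A]$ and $G[B]$ connected. Next, I apply Lemma~\ref{lemma-6-mainred} a second time, with $K$ relabelled so that $v_4$ plays the role of $v_1$; this is valid because $v_4 x_4$ is also not in a $2$-edge-cut. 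If a reduction results, we are done. Otherwise, the second exceptional case returns the same partition $(A, B)$ (up to swapping the roles of $A$ and $B$) and the same edge $e_0$, so no new structural information appears from this second invocation.

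In the remaining case, I construct a reduction $G'$ directly. The hypotheses force both $x_1$ and $x_4$ to have degree three in $G$, since otherwise the missing incidence together with $v_i x_i$ would be a $2$-edge-cut. A natural candidate for $G'$ is obtained from $G - V(K)$ by subdividing $e_0$ with a short path of new degree-two vertices, attaching $x_1$ and $x_4$ to this path at appropriate positions, and adding the local edges $x_2 x_6$ inside $A$ and $x_3 x_5$ inside $B$ to restore the lost incidences of $x_2, x_3, x_5, x_6$ with $K$. Properness of $G$ forbids the short cycles whose presence would create parallel edges in $G'$, so $G'$ is simple; the $2$-connectedness of $G[A]$ and $G[B]$, combined with the hypothesis on $v_1 x_1$ and $v_4 x_4$, ensures $2$-connectedness of $G'$; and $\delta(G,G')\ge 0$ is arranged by the choice of the number of subdivision vertices.

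The technical heart of the proof, and the main obstacle, is the excess accounting: for every spanning Eulerian subgraph $F'$ of $G'$, one must exhibit $F$ in $G$ with $\exc(F) \le \exc(F') + \delta(G, G')/4$. This is done by a case analysis based on which of the new edges, and which of the edges of the subdivided $e_0$-path, are used by $F'$. In each case $F$ is built by removing the new edges and inserting an appropriate subset of the edges of $K$ and its spokes, chosen so that every $v_i$ has even degree and only few additional non-trivial components or isolated vertices are created. The tightness of the bound in the worst subcase is what pins down the exact number of subdivision vertices and the precise attachment points used in the construction of $G'$.
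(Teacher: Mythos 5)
Your opening move matches the paper: invoke Lemma~\ref{lemma-6-mainred} and, if it fails to produce a reduction, work with the resulting partition $(A,B)$ and the unique crossing edge $e_0$. (The second invocation with $v_4$ in the role of $v_1$ is harmless but, as you yourself observe, yields nothing new, since the partition is forced by the unique $1$-edge-cut of $G-V(K)$ separating $\{x_1,x_2,x_6\}$ from $\{x_3,x_4,x_5\}$.) The problems come after that.

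First, the gadget you describe cannot be a reduction because it is not $2$-connected. You add $x_2x_6$ inside $A$ and $x_3x_5$ inside $B$, so the \emph{only} connection between the $A$-side and the $B$-side of $G'$ is the subdivided $e_0$-path, to which $x_1\in A$ and $x_4\in B$ are attached at distinct vertices (they must be distinct, else a subdivision vertex would have degree four). Every edge of that path lying between the attachment points of $x_1$ and $x_4$ is then a cut-edge of $G'$, since one side of it contains $A\cup\{x_1\}$ and the other contains $B\cup\{x_4\}$. No choice of ``number of subdivision vertices'' or ``attachment positions'' repairs this; you need at least two vertex-disjoint $A$--$B$ connections. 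The paper achieves this by adding the \emph{crossing} edges $x_2x_3$ and $x_5x_6$ (each with one end in $A$ and one in $B$, standing in for the paths $x_2v_2v_3x_3$ and $x_5v_5v_6x_6$ through $K$), subdividing $e_0$ twice, and leaving $x_1$ and $x_4$ as degree-two vertices; the three disjoint $A$--$B$ connections give $2$-connectivity, and the parity constraint that any Eulerian subgraph uses an even number of them (hence $0$ or $2$) drives a clean three-case excess analysis with $\exc(F)=\exc(F')$ in every case and $\delta(G,G')=0$.

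Second, even granting a correct gadget, you have not proved the lemma: the excess inequality $\exc(F)\le\exc(F')+\delta(G,G')/4$ is the entire content of the claim that $G'$ is a reduction, and your text only announces that ``this is done by a case analysis'' while simultaneously admitting that the construction itself (how many subdivision vertices, where $x_1$ and $x_4$ attach) is still to be determined by that analysis. A proof must fix the construction and carry out the verification; as written, both are deferred.
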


\begin{proof}
Lemma~\ref{lemma-6-mainred} yields that there either exists a reduction of $G$ or
a partition $A$ and $B$ of the vertices of $G-V(K)$ such that $x_1,x_2,x_6\in A$ and $x_3,x_4,x_5\in B$,
$G-V(K)$ contains exactly one edge $e$ between $A$ and $B$, and
both subgraphs of $G-V(K)$ induced by $A$ and $B$ are connected.
In the former case, the proof of the lemma is finished.
So, we focus on the latter case.

Let $G'$ be the graph obtained from $G-V(K)$ by adding the edges $x_2x_3$ and $x_5x_6$, and by subdividing the edge $e$ twice.
Observe that $G'$ is a 2-connected simple subcubic graph and $\delta(G,G_1)=0$.
Let $F'$ be a spanning Eulerian subgraph of $G'$.
If $F'$ does not contain the path corresponding to the edge $e$ or one of the edges $x_2x_3$ and $x_5x_6$,
then $F'$ does not contain any of the edges $x_2x_3$ and $x_5x_6$, and 
$G$ has a spanning Eulerian subgraph $F$ such that $c(F)=c(F')+1$ and $i(F)=i(F')-2$, i.e., $\exc(F)=\exc(F')$.
If $F'$ does not contain the path corresponding to the edge $e$ but contains both the edges $x_2x_3$ and $x_5x_6$,
then $G$ has a spanning Eulerian subgraph $F$ such that $c(F)=c(F')$ and $i(F)=i(F')$, i.e., $\exc(F)=\exc(F')$.
Finally, if $F'$ contains the path corresponding to the edge $e$,
then $F'$ contains one of the edges $x_2x_3$ and $x_5x_6$, and
$G$ has a spanning Eulerian subgraph $F$ such that $c(F)=c(F')$ and $i(F)=i(F')$, i.e., $\exc(F)=\exc(F')$.
In all the case, it holds that $G$ has a spanning Eulerian subgraph $F$ with $\exc(F)\le \exc(F_1)$.
We conclude that $G'$ is a reduction of $G$.
\end{proof}

We now combine Lemmas~\ref{lemma-6-nocut}, \ref{lemma-6-mainred} and \ref{lemma-6no2e}.

\begin{lemma}\label{lemma-no26}
Let $G$ be a proper $2$-connected subcubic graph and
let $K$ and $K'$ be two distinct cycles of length six in $G$.
If the cycles $K$ and $K'$ intersect and
at least one of them is not a $\theta$-cycle,
then $G$ has a linear-time reduction with respect to $K\cup K'$.
\end{lemma}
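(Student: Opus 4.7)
The plan is to apply one of Lemmas~\ref{lemma-6-nocut},~\ref{lemma-6-mainred}, and~\ref{lemma-6no2e} to either $K$ or $K'$; since both cycles are part of the input $K\cup K'$, any reduction obtained this way is automatically a linear-time reduction of $G$ with respect to $K\cup K'$. By symmetry we may assume $K$ is not a $\theta$-cycle. If some edge of $K$ lies in a $2$-edge-cut of $G$, Lemma~\ref{lemma-6-nocut} applied to $K$ delivers a reduction; so assume henceforth that no edge of $K$ is contained in any $2$-edge-cut.

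I would next analyze the intersection $K\cap K'$. Because $G$ is proper and $K,K'$ are both $6$-cycles, every vertex of $K\cup K'$ has degree three; at any shared vertex $v$, the two $K$-neighbors and the two $K'$-neighbors of $v$ cannot all be distinct, so $K$ and $K'$ share an edge at $v$. A vertex count inside $K'$, using that each endpoint of a maximal shared path contributes a distinct outside neighbor $x_i$ of $K$ as an additional $K'$-vertex, and that $x_1,\ldots,x_6$ are pairwise distinct, shows that $K\cap K'$ is a single path of length $\ell\in\{1,2,3\}$. Writing this shared path as $v_av_{a+1}\cdots v_{a+\ell}$, the $K'$-edge at the endpoint $v_a$ outside the shared path must be $v_ax_a$ (otherwise $v_{a-1}$ would lie on $K'$ and the shared path would extend), so the third $G$-neighbor of $v_a$, which is $v_{a-1}$, plays the role of the spoke of $v_a$ as a vertex of $K'$; that spoke is the $K$-edge $v_{a-1}v_a$. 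Symmetrically, the spoke of $v_{a+\ell}$ in $K'$ is the $K$-edge $v_{a+\ell}v_{a+\ell+1}$. By our standing assumption both of these $K'$-spokes are free of $2$-edge-cuts.

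I would then split on $\ell$. For $\ell=3$, the endpoints $v_a$ and $v_{a+\ell}$ occupy opposite positions of $K'$, so the opposite pair of $K'$-spokes is $(v_{a-1}v_a,v_{a+\ell}v_{a+\ell+1})$, both $K$-edges not in any $2$-edge-cut; Lemma~\ref{lemma-6no2e} applied to $K'$ yields a reduction. For $\ell\in\{1,2\}$, with $v_a$ at position $1$ of $K'$, the other endpoint $v_{a+\ell}$ occupies position $\ell+1\in\{2,3\}$, which is not opposite to position $1$ but does belong to the set $\{2,3,5,6\}$ enumerated by the ``unless'' clause of Lemma~\ref{lemma-6-mainred}. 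Applying Lemma~\ref{lemma-6-mainred} to $K'$ with $v_a$ in the role of ``$v_1$'' is legitimate because the spoke of $v_a$ in $K'$, namely the $K$-edge $v_{a-1}v_a$, is not in any $2$-edge-cut; and its ``unless'' clause cannot hold because it would require the spoke at position $\ell+1$, which is the $K$-edge $v_{a+\ell}v_{a+\ell+1}$, to lie in a $2$-edge-cut. Hence the lemma produces a reduction in this case as well.

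The step I expect to be the main obstacle is the structural claim that $K\cap K'$ is a single path of length at most three: it simultaneously rules out multiple shared components and overly long shared intersections, and its proof rests on the pairwise distinctness of the outside neighbors $x_1,\ldots,x_6$ (a consequence of $G$ being proper) together with careful bookkeeping of how many vertices of $K'$ must lie outside $K$. Once this structure is established, checking $2$-edge-cut membership for a constant number of specific edges and running the sub-algorithms of Lemmas~\ref{lemma-6-nocut},~\ref{lemma-6-mainred}, and~\ref{lemma-6no2e} all take linear time, giving the required linear-time reduction of $G$ with respect to $K\cup K'$.
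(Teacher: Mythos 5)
Your proof is correct and follows essentially the same route as the paper: both arguments combine Lemmas~\ref{lemma-6-nocut}, \ref{lemma-6-mainred} and \ref{lemma-6no2e}, first ruling out $2$-edge-cuts through certain edges and then splitting on whether the two relevant exiting edges sit at opposite vertices of a six-cycle. The only difference is a swap of roles: the paper applies the last two lemmas to $K$ using the two edges of $K'$ that exit $K$, whereas you apply them to $K'$ using the two edges of $K$ that exit $K'$; your explicit analysis of $K\cap K'$ (a single shared path of length at most three) fills in a structural detail that the paper leaves implicit.
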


\begin{proof}
We can assume that $K$ is not a $\theta$-cycle by symmetry.
Since the cycles $K$ and $K'$ are distinct,
the cycle $K$ is incident with at least two edges of $K'$ not contained in $K$.
None of these edges is contained in a $2$-edge-cut by Lemma~\ref{lemma-6-nocut}.
By Lemma~\ref{lemma-6-mainred}, these two edges must be incident with the opposite vertices of $K$.
Finally, the existence of the reduction follows from Lemma~\ref{lemma-6no2e}.
\end{proof}

We finish this subsection with two additional lemmas on edges incident with cycles of length six that
are contained in $2$-edge-cuts.

\begin{lemma}\label{lemma-6adjcuts}
Let $G$ be a proper $2$-connected subcubic graph,
let $K=v_1v_2v_3v_4v_5v_6$ be one of its cycles of length six, and
let $x_i$ be the neighbor of $v_i$ not contained in $K$ for $i=1,\ldots,6$.
If $K$ is not a $\theta$-cycle, and
there exists $i<j$ such that $j-i\neq 3$ and $\{v_ix_i, v_jx_j\}$ is a $2$-edge-cut,
then $G$ has a linear-time reduction with respect to $K$.
\end{lemma}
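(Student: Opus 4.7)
The plan is to use the 2-edge-cut assumption to extract a partition of $V(G-V(K))$ with no crossing edges, and then reduce to either Lemma~\ref{lemma-6-ob0} or Lemma~\ref{lemma-6-ob1} depending on how far $v_i$ and $v_j$ lie apart on $K$.

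First, I would eliminate cases by the dihedral symmetry of $K$. Among pairs $1\le i<j\le 6$ with $j-i\neq 3$, the cyclic rotations of the labeling together with the reflection $v_k\mapsto v_{8-k}$ identify every pair with one of two representatives according to cyclic distance: either $(i,j)=(1,2)$ (distance one) or $(i,j)=(1,3)$ (distance two). So I may assume one of these two configurations.

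Second, I would unpack the 2-edge-cut. Deleting the two edges $v_ix_i$ and $v_jx_j$ from $G$ splits $G$ into exactly two components. Since the edges of $K$ remain, one component contains $V(K)$ together with every $x_k$ for $k\notin\{i,j\}$, while the other lies entirely inside $G-V(K)$ and contains $x_i$ and $x_j$ (both must lie in the same piece, else one of the edges would already be a cut-edge of $G$). Letting $A'\subseteq V(G-V(K))$ be the vertex set of this second component and $B'=V(G-V(K))\setminus A'$, I obtain $x_i,x_j\in A'$, the remaining $x_k$ in $B'$, and no edge of $G-V(K)$ between $A'$ and $B'$.

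Finally, I would invoke the earlier structural lemmas. In the case $(i,j)=(1,2)$, the partition $A=A'$, $B=B'$ satisfies the hypotheses of Lemma~\ref{lemma-6-ob0} (using the standing assumption that $K$ is not a $\theta$-cycle), which directly supplies a linear-time reduction with respect to $K$. In the case $(i,j)=(1,3)$, I would set $A=A'\cup\{v_1,v_3\}$ and $B=B'\cup\{v_2,v_4,v_5,v_6\}$ as a partition of $V(G\setminus E(K))$, and apply Lemma~\ref{lemma-6-ob1}. The only edges of $G\setminus E(K)$ that could cross this partition are edges of $G-V(K)$ between $A'$ and $B'$, of which there are none; the edges $v_kx_k$ are internal to $A$ or $B$ because the assignment of the $x_k$'s to $A'$ and $B'$ was chosen precisely to match the required side for $v_k$.

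The only mildly delicate step is the very last one: one has to confirm that every $v_kx_k$ lands inside a single side of the partition of $V(G\setminus E(K))$, so that the edge count across the partition is indeed zero (hence at most one), as required by Lemma~\ref{lemma-6-ob1}. This is a direct check against the containment $\{x_1,x_3\}\subseteq A'$ and $\{x_2,x_4,x_5,x_6\}\subseteq B'$, so no real obstacle arises; the bulk of the work has already been done in the earlier lemmas of this subsection.
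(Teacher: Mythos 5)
Your proof is correct and follows essentially the same route as the paper: reduce by the dihedral symmetry of $K$ to the two representatives $j-i=1$ and $j-i=2$, extract from the $2$-edge-cut the partition of $V(G-V(K))$ with $x_i,x_j$ on one side and no crossing edges, and invoke the earlier six-cycle lemmas. The only cosmetic difference is that for the adjacent case the paper cites Lemma~\ref{lemma-6-nocut} (which itself funnels into Lemma~\ref{lemma-6-ob0}), whereas you apply Lemma~\ref{lemma-6-ob0} directly; your version is, if anything, slightly more immediate.
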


\begin{proof}
By symmetry, we can assume that $j-i$ is equal to $1$ or $2$.
If $j-i=1$, then the existence of the reduction follows from Lemma~\ref{lemma-6-nocut}, and
if $j-i=2$, then its existence follows from Lemma~\ref{lemma-6-ob1}.
\end{proof}

\begin{lemma}\label{lemma-6oppcuts}
Let $G$ be a proper $2$-connected subcubic graph,
let $K=v_1v_2v_3v_4v_5v_6$ be one of its cycles of length six, and
let $x_i$ be the neighbor of $v_i$ not contained in $K$ for $i=1,\ldots,6$.
If there exists $1\le i<j\le 3$ such that
both $\{v_ix_i,v_{i+3}x_{i+3}\}$ and $\{v_jx_j,v_{j+3}x_{j+3}\}$ are $2$-edge-cuts in $G$,
then $G$ has a linear-time reduction with respect to $K$.
\end{lemma}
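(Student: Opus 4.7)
The plan is first to extract the structural information forced by the two-edge-cut hypothesis. By symmetry take $(i,j)=(1,2)$, so $\{v_1x_1,v_4x_4\}$ and $\{v_2x_2,v_5x_5\}$ are 2-edge-cuts of $G$. The first cut forces a component $A$ of $G-V(K)$ containing $x_1,x_4$ and no other $x_\ell$; the second yields $B\supseteq\{x_2,x_5\}$ with no other $x_\ell$. The vertices $x_3$ and $x_6$ cannot lie in different components of $G-V(K)$, for otherwise $v_3x_3$ would be a cut-edge, contradicting 2-connectivity. Hence $G-V(K)$ has exactly three components $D_1\supseteq\{x_1,x_4\}$, $D_2\supseteq\{x_2,x_5\}$, $D_3\supseteq\{x_3,x_6\}$, and automatically the third opposite pair $\{v_3x_3,v_6x_6\}$ is also a 2-edge-cut.

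Next I would define $G'$ by removing $V(K)$ from $G$ and adding three new vertices $z_1,z_2,z_3$ of degree two, where $z_i$ is joined to $x_{2i-1}$ and $x_{2i}$. This adjacent-on-$K$ pairing joins the three blobs into a triangular pattern: $z_1$ bridges $D_1$ and $D_2$, $z_2$ bridges $D_3$ and $D_1$, and $z_3$ bridges $D_2$ and $D_3$. Combined with 2-connectivity of $G$, this yields that $G'$ is a simple 2-connected subcubic graph with $\delta(G,G')=0$. The opposite pairing $z_i\sim\{x_i,x_{i+3}\}$ would disconnect $G'$, and cross pairings would create edge conflicts in the extension step below, so the adjacent pairing is forced.

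The core of the argument is a parity dichotomy for any spanning Eulerian subgraph $F'$ of $G'$. Each $z_i$ has degree $0$ or $2$ in $F'$, so both or neither of its incident edges lie in $F'$. For each blob $D_k$ the two distinguished vertices $x_k$ and $x_{k+3}$ each have exactly one $z$-edge, and the odd-degree vertices of $F'\cap D_k$ are exactly those among $\{x_k,x_{k+3}\}$ whose $z$-edge is in $F'$; the handshaking lemma inside $D_k$ then forces ``both or neither'' at each blob. Combining these three blob-conditions with the ``both or neither'' at each $z_i$ (which straddles two blobs) propagates equality around the triangle, so all three blob-indicators agree. Consequently $F'$ falls into exactly one of two cases: (I) every $z_i$ is isolated in $F'$, or (P) all six edges $x_\ell z_i$ lie in $F'$.

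To finish I extend $F'$ to $F$ in each case. In case (I) I take $F=(F'-\{z_1,z_2,z_3\})\cup K$, adding one new cycle and deleting three isolated vertices, so $\exc(F)=\exc(F')-1$. In case (P) I replace each path $x_{2i-1}z_ix_{2i}$ by $x_{2i-1}v_{2i-1}v_{2i}x_{2i}$, which uses the three alternate cycle edges $v_1v_2$, $v_3v_4$, $v_5v_6$ edge-disjointly; every $v_j$ then has degree two in $F$, every $x_\ell$ keeps the parity it had in $F'$, and cycles of $F'$ lift one-to-one to cycles of $F$, so $\exc(F)=\exc(F')$. Both cases satisfy $\exc(F)\le\exc(F')+\delta(G,G')/4=\exc(F')$, and each step is linear-time. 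The main obstacle is the parity dichotomy: ruling out the mixed cases and simultaneously choosing a $z$-pairing that keeps $G'$ 2-connected and admits a conflict-free extension is exactly what pins down the adjacent pairing used in the construction.
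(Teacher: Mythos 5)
Your proof is correct and follows essentially the same route as the paper's: identify the three components of $G-V(K)$ containing the opposite pairs $\{x_i,x_{i+3}\}$, join them in a cyclic pattern by degree-two connectors, use the handshake/parity argument in each component to force the all-or-nothing dichotomy, and lift by either inserting $K$ as a cycle or substituting the three paths $x_{2i-1}v_{2i-1}v_{2i}x_{2i}$. The only (harmless) difference is that the paper uses two plain edges and one subdivided connector, getting $\delta(G,G')=4$ and $\exc(F)\le\exc(F')+1$, whereas your three subdivided connectors give $\delta(G,G')=0$ and $\exc(F)\le\exc(F')$, with a slightly cleaner path-for-path lift in the all-present case.
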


\begin{proof}
Sine $G$ is $2$-connected,
$G-V(K)$ has three components $C_1$, $C_2$, and $C_3$, and
the vertices $x_i$ and $x_{i+3}$ are contained in $C_i$ for $i\in\{1,2,3\}$.
Let $G'$ be the graph obtained from $G-V(K)$ by adding the edges $x_1x_5$ and $x_2x_6$, and
the path $x_3wx_4$, where $w$ is a new vertex, which have degree two in $G'$.
Note that $G'$ is a simple $2$-connected subcubic graph and $\delta(G,G')=4$.
Let $F'$ be a spanning Eulerian subgraph of $G'$.
The subgraph $F'$ either contains none of the edges $x_1x_5$, $x_2x_6$, $x_3w$ and $x_4w$, or
it contains all of them.
In the former case, $G$ has a spanning Eulerian subgraph $F$ with $c(F)=c(F')+1$ and $i(F)=i(F')-1$, i.e., $\exc(F)=\exc(F')+1$.
In the latter case, $G$ has a spanning Eulerian subgraph $F$ with $c(F)=c(F')$ and $i(F)=i(F')$, i.e., $\exc(F)=\exc(F')$.
It follows that $G'$ is a reduction of $G$.
\end{proof}

\subsection{Cycles of length seven}
\label{sub-seven}

In this subsection, we establish two lemmas concerning the reductions involving cycles of length seven.
As in Subsection~\ref{sub-six}, we assume that a cycle $K$ with the properties stated in the lemmas is given.
Since the properties asserted by Lemmas~\ref{lemma-no2in7} and~\ref{lemma-cutsin7} can be checked in linear time and
all cycles of length seven can be listed in linear time, it is possible to find a cycle of length seven
with the properties given in one of Lemmas~\ref{lemma-no2in7} and~\ref{lemma-cutsin7} or
conclude that such a cycle does not exist in quadratic time.

To prove the first of the lemmas, we need to use the Splitting Lemma of Fleischner~\cite{fleischsplit},
which we now state.
Let us introduce some additional notation.
We say that the graph $G'$ is obtained from a graph $G$ by \emph{splitting off} the edges $u_1v$ and $u_2v$
if the graph is obtained by removing the edges $u_1v$ and $u_2v$ and adding the edge $u_1u_2$.
We will always apply this operation to edges incident with the same vertex.
We can now state the Splitting Lemma.

\begin{lemma}[Splitting Lemma]\label{lemma-fleisch}
Let $G$ be a $2$-edge-connected graph and let $v$ be a vertex of degree at least $4$.
\begin{itemize}
\item If $v$ is a cut-vertex and $e_1$ and $e_2$ are two edges incident with $v$ that belong to different blocks of $G$,
      then splitting off $e_1$ and $e_2$ results in a 2-edge-connected graph.
\item If $v$ is not a cut-vertex and $e_1$, $e_2$, and $e_3$ are edges incident with $v$,
      then splitting off $e_1$ and $e_2$ or splitting off $e_2$ and $e_3$ results in a 2-edge-connected graph.
\end{itemize}
\end{lemma}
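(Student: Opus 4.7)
The plan is to reduce both parts of the lemma to a single structural fact about how the splitting operation interacts with edge-cuts. For any partition $(A,B)$ of $V(G)$, the size of the cut in the graph $G'$ obtained by splitting $u_1v$ and $u_2v$ into the edge $u_1u_2$ differs from the size of the cut in $G$ by $0$ or $-2$; the decrease by $2$ occurs precisely when $v$ lies on one side of the partition while both $u_1,u_2$ lie on the other, in which case both split edges (but not the new edge $u_1u_2$) cross the partition. Consequently, if $G'$ contains a cut-edge, then $G$ contains a $3$-edge-cut of the form $\{u_1v,u_2v,e^\star\}$ separating $v$ from $\{u_1,u_2\}$ for some third edge $e^\star$ of $G$; a cut-edge equal to $u_1u_2$ itself would similarly yield a $1$-edge-cut in $G$, contradicting $2$-edge-connectivity.

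For the cut-vertex case, suppose the splitting of $e_1=vu_1\in B_1$ and $e_2=vu_2\in B_2$ produces a cut-edge, hence a $3$-edge-cut $\{e_1,e_2,e^\star\}$ in $G$ that separates $v$ from $\{u_1,u_2\}$. Since distinct blocks share no edges, $e^\star$ lies in at most one of $B_1,B_2$. If $e^\star\notin B_1$, then restricting the cut to the vertex set $V(B_1)$ produces a $1$-edge-cut $\{e_1\}$ separating $v$ from $u_1$ inside $B_1$, contradicting $2$-edge-connectivity of the block $B_1$. If instead $e^\star\in B_1$, then $e^\star\notin B_2$ and the same argument applied to $B_2$ yields a $1$-edge-cut $\{e_2\}$ inside $B_2$, again a contradiction.

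For the non-cut-vertex case, assume that both splittings $\{e_1,e_2\}$ and $\{e_2,e_3\}$ fail. Then $G$ contains two $3$-edge-cuts with partitions $(A_{12},C_{12})$ and $(A_{23},C_{23})$ satisfying $v\in A_{12}\cap A_{23}$, $u_1,u_2\in C_{12}$, and $u_2,u_3\in C_{23}$. Refining into the four cells $P_{00}=A_{12}\cap A_{23}$, $P_{01}=A_{12}\cap C_{23}$, $P_{10}=C_{12}\cap A_{23}$, $P_{11}=C_{12}\cap C_{23}$, we have $v\in P_{00}$, $u_2\in P_{11}$, while $u_1\in P_{10}\cup P_{11}$ and $u_3\in P_{01}\cup P_{11}$. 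The crux is a case distinction on the locations of $u_1$ and $u_3$: in every case, the constraint that each of the two cuts has exactly $3$ edges with at least two already incident to $v$, combined with the bound that every non-empty cell has at least $2$ edges leaving it (by $2$-edge-connectivity), and the placement of the remaining $\deg(v)-3\ge 1$ edges at $v$, forces a configuration in which every edge leaving $P_{00}\setminus\{v\}$ (or, if $P_{00}=\{v\}$, then every edge leaving another suitably chosen sub-cell) is incident to $v$. Removing $v$ then disconnects $G$, contradicting the non-cut-vertex assumption. The main obstacle is organizing the case distinction cleanly; it can be streamlined via the submodular inequality $d(A_{12})+d(A_{23})\ge d(A_{12}\cap A_{23})+d(A_{12}\cup A_{23})$, where $d(S)$ denotes the number of edges with exactly one endpoint in $S$.
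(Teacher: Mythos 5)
The paper does not prove this statement at all: it is quoted verbatim as Fleischner's Splitting Lemma with a citation, so there is no internal proof to compare against. Judged on its own, your reduction is sound: for any vertex partition the cut size changes by $0$ or $-2$ under splitting, so a failure of $2$-edge-connectivity in $G'$ yields in $G$ a cut of size at most $3$ containing both $u_1v$ and $u_2v$ and separating $v$ from $\{u_1,u_2\}$ (you omit the degenerate possibility that $G'$ becomes disconnected, which gives the $2$-edge-cut $\{u_1v,u_2v\}$ rather than a $3$-edge-cut, but all your later arguments go through with the third edge $e^\star$ simply absent). Your treatment of the cut-vertex case is complete and correct: blocks of a bridgeless graph are themselves bridgeless, $e^\star$ misses at least one of $B_1,B_2$, and restricting the cut to that block produces a bridge in it.

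The non-cut-vertex case, however, is only a sketch, and that is where essentially all of the difficulty lives. You set up the two $3$-edge-cuts and the four refinement cells correctly, but then assert that ``in every case'' the constraints ``force a configuration in which every edge leaving $P_{00}\setminus\{v\}$ is incident to $v$'' without performing the case distinction. This is a genuine gap, and the asserted uniform conclusion is not even what the analysis delivers in all branches. Writing $A=A_{12}$, $B=A_{23}$, one has $d(A\cap B)\ge 3$ (since $e_1,e_2,e_3$ all leave $A\cap B$) and $d(A\cup B)\ge 2$, so submodularity pins $d(A\cap B)\in\{3,4\}$. When $d(A\cap B)=3$, or when $d(A\cap B)=4$ with the fourth boundary edge incident to $v$, one does reach a cut-vertex contradiction (or the contradiction $\deg(v)\le 3$ when $A\cap B=\{v\}$, in which case one must pass to $A\setminus B$ or $B\setminus A$ and locate the fourth edge at $v$, as your parenthetical hints). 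But in the branch where $d(A)=d(B)=3$, $d(A\cap B)=4$ and the fourth edge of $\partial(A\cap B)$ is \emph{not} incident to $v$, the contradiction instead comes from the complementary identity $d(A)+d(B)=d(A\setminus B)+d(B\setminus A)+2e(A\cap B,\overline{A\cup B})$: one of $d(A\setminus B)$, $d(B\setminus A)$ is forced below $2$, so the corresponding cell is empty and then $d(A)=d(A\cap B)$, contradicting $3\neq 4$. None of this is in your text, and without it the second bullet is not proved. The plan is salvageable, but you need to actually run the cases (including $d(A)=2$ or $d(B)=2$ from the disconnection scenario) and state, for each, which of the two distinct contradictions is reached.
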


We are now ready to prove the first lemma of this subsection.

\begin{lemma}\label{lemma-no2in7}
Let $G$ be a proper $2$-connected subcubic graph.
If $G$ has a cycle $K$ of length seven that contains a vertex of degree two,
then $G$ has a linear-time reduction with respect to $K$.
\end{lemma}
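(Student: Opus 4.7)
The plan is to contract the cycle $K$ to a single vertex and then apply Fleischner's Splitting Lemma (Lemma~\ref{lemma-fleisch}) at the resulting high-degree vertex.

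Let $K=v_1v_2\cdots v_7$ and let $v_7$ be a vertex of degree two on $K$.  Since $G$ is proper, its girth is at least six and every six-cycle has all six vertices of degree three.  A short structural analysis using these constraints shows that $K$ contains either one or two vertices of degree two, and that the neighbors $x_i$ of the degree-three vertices $v_i$ of $K$ outside of $K$ are pairwise distinct and not in $V(K)$.  Moreover, properness implies that among the pairs $x_ix_j$, only a few can already be edges of $G$, namely those for which $v_i$ and $v_j$ are at distance exactly three in $K$ along a path avoiding every degree-two vertex; all other pairs would produce a cycle of length at most six in $G$ violating either the girth bound or the 6-cycle degree condition.

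Form the multigraph $H$ by contracting $V(K)$ to a single vertex $w$.  Since $G$ is 2-edge-connected, so is $H$, and $\deg_H(w)$ equals the number of degree-three vertices on $K$, i.e., five or six.  If $\deg_H(w)=6$, apply the Splitting Lemma twice at $w$, bringing its degree down to two, and then suppress $w$; if $\deg_H(w)=5$, apply the lemma once to obtain $\deg(w)=3$ and retain $w$ as a vertex.  The resulting graph $G'$ is 2-edge-connected by the Splitting Lemma, hence 2-connected since it is subcubic.  Simplicity of $G'$ is enforced by using the flexibility afforded by the Splitting Lemma (at a non-cut-vertex, one of two pairings of three given edges can be chosen) to avoid creating a new edge parallel to any of the limited set of pre-existing potential edges identified above.

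A direct count gives $n(G')\in\{n(G)-7,\,n(G)-6\}$ and $n_2(G')=n_2(G)-d$, where $d\in\{1,2\}$ is the number of degree-two vertices of $K$, so $\delta(G,G')=7+d\ge 8$, giving $\delta(G,G')/4\ge 2$.  Given a spanning Eulerian subgraph $F'$ of $G'$, construct $F$ by replacing each new edge of $G'$ appearing in $F'$ with the corresponding path through $K$ between the two outside neighbors, and leaving the remaining vertices of $K$ isolated.  A case analysis based on how many new edges (and, in the case $\deg_{G'}(w)=3$, how many edges incident with $w$) appear in $F'$ yields $\exc(F)\le\exc(F')+2=\exc(F')+\delta(G,G')/4$, as required.

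The main obstacle will be the joint bookkeeping for simplicity of $G'$, which requires coordinating the Splitting Lemma choices with the short list of candidate pre-existing edges allowed by properness, together with the Eulerian extension analysis in the case when $K$ has two vertices of degree two and $w$ must be retained in $G'$ as a degree-three vertex.
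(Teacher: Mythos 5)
Your proposal follows essentially the same route as the paper's proof: contract $K$ to a vertex $w$, apply Fleischner's Splitting Lemma once when $K$ has two vertices of degree two (so $w$ survives with degree three) or twice followed by suppression of $w$ when it has one, note that $\delta(G,G')=8$ in both cases, check simplicity via the properness constraints on the edges $x_ix_j$, and lift spanning Eulerian subgraphs back with $\exc(F)\le\exc(F')+2$. The one piece you defer, the Eulerian case analysis, is where the real work sits: in a couple of subcases the obvious completion only gives $i(F)\le i(F')+3$, and one must compare it against an alternative completion with $c(F)\le c(F')+1$ and fewer isolated vertices and take the better of the two, so that comparison needs to be built into your bookkeeping rather than assumed.
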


\begin{proof}
Since $G$ is proper, $K$ is an induced cycle and at most two vertices of $K$ have degree two.
Let $v_1, \ldots, v_k$ be the vertices of $K$ of degree three in order around the cycle;
note that $k$ is five or six.
Further, let $x_i$ be the neighbor of $v_i$ outside of $K$ for $i\in\{1,\ldots,k\}$.
Since $G$ is proper,
the vertices $x_1,\ldots,x_k$ are pairwise distinct.
Moreover,
if $i\neq j$ and either $|i-j|\le 2$, or $|i-j|\ge k-2$, then $x_ix_j$ is not an edge of $G$.
Let $G'$ be the graph obtained from $G$ by contracting the cycle $K$ to a single vertex $w$.
By Lemma~\ref{lemma-fleisch} and symmetry, we can assume that
the graph $G''$ obtained from $G'$ by splitting off $wx_1$ and $wx_2$ is $2$-edge-connected.

We first deal with the case that $k=5$.
Note that $G''$ is a simple $2$-connected subcubic graph and $\delta(G,G'')=8$.
Let $F'$ be a spanning Eulerian subgraph of $G''$.
If the vertex $w$ is isolated in $F'$ and $F'$ does not contain the edge $x_1x_2$,
then there exists a spanning Eulerian subgraph $F$ of $G$ with $c(F)=c(F')+1$ and $i(F)=i(F')-1$.
If either the vertex $w$ is isolated and $F'$ contains the edge $x_1x_2$, or
the vertex $w$ is not isolated and $F'$ does not contain the edge $x_1x_2$,
then there exists a spanning Eulerian subgraph $F$ of $G$ with $c(F)=c(F')$ and $i(F)\le i(F')+1$.
Finally, if the vertex $w$ is not isolated and $F'$ contains the edge $x_1x_2$,
then there exists a spanning Eulerian subgraph $F$ of $G$ with $c(F)=c(F')$ and $i(F)\le i(F')+3$, and
if there is no such subgraph $F$ with $i(F)\le i(F')+2$,
then there is also a spanning Eulerian subgraph $F$ with $c(F)\le c(F')+1$ and $i(F)=i(F')$.
In all the cases, we conclude that there is a spanning Eulerian subgraph $F$ of $G$ with $\exc(F)\le\exc(F')+2$.

We now deal with the case $k=6$.
If $G'$ is not $2$-connected, then $w$ must be a cut-vertex and $G'$ has two blocks,
each containing two neighbors of $w$.
Regardless whether $w$ is a cut-vertex,
Lemma~\ref{lemma-fleisch} implies that splitting off $wx_3$ with either $wx_4$ or $wx_5$ and
suppressing $w$ yields a $2$-connected subcubic graph $G''$.
Note that $G''$ is simple and $\delta(G,G'')=8$.

Let $e$, $e'$ and $e''$ be the edges of $G''$ not contained in $G$.
Consider a spanning Eulerian subgraph $F'$ of $G''$.
If $F'$ uses none of the edges $e$, $e'$ and $e''$ ,
then there exists a spanning Eulerian subgraph $F$ of $G$ with $c(F)=c(F')+1$ and $i(F)=i(F')$.
If $F'$ uses exactly one of the edges $e$, $e'$ and $e''$,
then there exists a spanning Eulerian subgraph $F$ with $c(F)=c(F')$ and $i(F)\le i(F')+2$.
If $F'$ uses exactly two of the edges $e$, $e'$ and $e''$,
then there exists a spanning Eulerian subgraph $F$ with $c(F)\le c(F')+1$ and $i(F)\le i(F')+1$, and
if there is no such subgraph $F$ with $c(F)<c(F')+1$ or $i(F)<i(F')+1$,
then there is also a spanning Eulerian subgraph $F$ with $c(F)=c(F')$ and $i(F)=i(F')+2$.
Finally, if $F'$ uses all the edges $e$, $e'$ and $e''$,
then there exists a spanning Eulerian subgraph $F$ with $c(F)\le c(F')+2$ and $i(F)=i(F')$, and
if there is no such subgraph $F$ with $c(F)\le c(F')+1$,
then there is also spanning Eulerian subgraph $F$ with $c(F)=c(F')$ and $i(F)=i(F')+1$.
In all the cases, there exists a spanning Eulerian subgraph $F$ of $G$ with $\exc(F)\le \exc(F')+2$,
i.e., $G''$ is a reduction of $G$.
\end{proof}

Note that Lemmas~\ref{lemma-no5} and~\ref{lemma-no2in7} yield that
if a proper $2$-connected subcubic graph $G$ contains a cycle of length at most seven that contains a vertex of degree two,
then $G$ has a linear-time reduction.
We next prove the final lemma of this section.

\begin{lemma}\label{lemma-cutsin7}
Let $G$ be a proper $2$-connected subcubic graph and let $K=v_1v_2\ldots v_m$ be a cycle in $G$ of length at most $7$.
If each of the edges $v_1v_m$ and $v_2v_3$ is contained in a $2$-edge-cut
but the edges $v_1v_m$ and $v_2v_3$ themselves do not form a $2$-edge-cut,
then $G$ has a linear-time reduction with respect to $K$.
\end{lemma}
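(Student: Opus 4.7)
The plan is to pin down the structure forced by the two 2-edge-cuts, dispose of the short-cycle case via Lemma~\ref{lemma-6-nocut}, and then construct an explicit reduction in the length-seven case.

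First I would analyze the cuts. Since $v_1v_m$ is an edge of the cycle $K$ and lies in a 2-edge-cut of the 2-edge-connected graph $G$, the other edge of that cut must also belong to $K$; otherwise removing $v_1v_m$ together with a non-$K$ edge would leave the path $v_1v_2\cdots v_m$ intact and fail to disconnect $G$. So I may write the two cuts as $C_1=\{v_1v_m,v_av_{a+1}\}$ and $C_2=\{v_2v_3,v_bv_{b+1}\}$ for some indices $a$ and $b$. A short argument then forces $a\ne b$: otherwise the closure of $\{v_1,v_2\}$ under non-cut edges would be separated from the rest of $G$ by exactly $\{v_1v_m,v_2v_3\}$, contradicting the hypothesis.

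Next I would handle the case $m\le 6$. I claim $K$ cannot be a $\theta$-cycle here: the only 2-edge-cuts of $G$ involving two cycle edges of a $\theta$-cycle are the two cuts isolating one of the consecutive pairs $\{v_1,v_2\}$ or $\{v_4,v_5\}$ from the rest, and the former is exactly $\{v_1v_m,v_2v_3\}$, which the hypothesis rules out. So $K$ is not a $\theta$-cycle, and since $v_1v_m$ is a cycle edge lying in a 2-edge-cut, Lemma~\ref{lemma-6-nocut} supplies the linear-time reduction with respect to $K$.

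The main case is $m=7$. By Lemma~\ref{lemma-no2in7} I may assume every vertex of $K$ has degree three, so each $v_i$ has an outside neighbour $x_i$. Exploiting the symmetries of $K$ and the interchangeability of the two cuts, the admissible pairs $(a,b)\in\{3,4,5,6\}\times\{4,5,6\}$ with $a\ne b$ reduce to a short list of essentially different configurations. For each I would construct $G'$ by removing the two cycle vertices $v_1$ and $v_2$ lying between the two cuts on $K$ and adding two new edges such as $v_mx_1$ and $v_3x_2$. Properness forces every candidate new edge to be absent from $G$ (each would otherwise close a triangle or 4-cycle), so $G'$ is simple; 2-edge-connectivity of $G'$ follows because the new edges supply the missing second crossings of $C_1$ and $C_2$ in $G-\{v_1,v_2\}$, while internal connectivity of each cut side is inherited from $G$.

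The hardest step will be the Eulerian lifting. For a spanning Eulerian $F'$ of $G'$, I would replace each new edge in $F'$ by the short path through $v_1$ or $v_2$ in $G$, choosing the pairing so that the two lifting paths are edge-disjoint and so that Eulerianness at $v_1$ and $v_2$ is automatic. The troublesome sub-case is when $F'$ uses exactly one of the new edges, leaving one of $v_1,v_2$ isolated in $F$ and threatening to push the excess above $\exc(F')+\delta(G,G')/4$. The main obstacle is to ensure this sub-case can always be resolved---either by augmenting $F$ with a short cycle through the uncovered vertex when the parities of $F'$ permit, or by choosing an alternative pairing of new edges or a different pair of removed vertices, tailored to the specific configuration $(a,b)$.
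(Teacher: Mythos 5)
Your setup is sound and your treatment of the short-cycle case is fine (indeed slightly more careful than necessary, since you check explicitly that $K$ cannot be a $\theta$-cycle before invoking Lemma~\ref{lemma-6-nocut}). But the heart of the lemma is the case $m=7$, and there your proposal stops at a plan rather than a proof. The first missing piece is that the configuration is not ``a short list of essentially different configurations'' --- it is unique. Writing $U_{ij},V_{ij}$ for the two sides of the cut containing $v_iv_j$ (with $v_i\in U_{ij}$), one checks that $v_5$ cannot lie in $V_{17}$: if $v_4\in V_{17}$ too, then $v_3x_3$ would be a cut-edge, and if $v_4\in U_{17}$, then $\{v_1v_7,v_2v_3\}$ would itself be a $2$-edge-cut. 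Hence $v_5\in U_{17}$, symmetrically $v_5\in U_{23}$, and the second cut edges are forced to be $e_{17}=v_5v_6$ and $e_{23}=v_4v_5$. This in turn forces $G-V(K)$ to have exactly three components, with $x_1,x_2,x_5$ together in one of them. Every subsequent step --- simplicity and $2$-connectivity of the reduced graph, and above all the excess bound --- leans on this structure, so you cannot defer it to ``a short list''.

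The second, more serious gap is the one you flag yourself: the sub-case in which the lifted subgraph leaves removed vertices isolated and the excess threatens to exceed $\exc(F')+\delta(G,G')/4$. Saying the obstacle ``can be resolved either by augmenting with a short cycle \dots or by choosing an alternative pairing'' is not a resolution, and with the natural construction $\delta(G,G')$ comes out to $0$, so there is no slack at all. The paper's reduction removes $v_1,v_2$, adds the \emph{crossing} edges $v_3x_1$ and $v_7x_2$, and additionally subdivides $v_5x_5$ by a new degree-two vertex $w$ (keeping $\delta(G,G')=0$). The lift then increases the number of isolated vertices by at most one, and when it does increase, all three of $v_1,v_2,v_5$ are isolated; in that situation one replaces the edge set of the lifted subgraph by its symmetric difference with $E(K)$, which absorbs $v_1,v_2,v_5$ into cycles and brings the excess back down to $\exc(F')$. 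Without identifying the forced position of $v_5$ relative to both cuts, this repair move is not available, so the two missing ingredients are really one: you need the exact cut configuration before the Eulerian lifting can be closed.
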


\begin{proof}
Since $G$ is proper, the length of $K$ is at least six, i.e., $m\ge 6$.
If $m=6$, then the existence of a reduction of $G$ follows from Lemma~\ref{lemma-6-nocut}.
Hence, we can assume that $m=7$.
In addition, all the vertices of $K$ have degree three (otherwise, Lemma~\ref{lemma-no2in7} yields the existence of a reduction).
For $i=1,\ldots,7$, let $x_i$ be the neighbor of the vertex $v_i$ outside of $K$.
Let $e_{17}$ be an edge forming a $2$-edge-cut with the edge $v_1v_7$, and
let $e_{23}$ be an edge forming a $2$-edge-cut with the edge $v_2v_3$.
Note that the edges $e_{17}$ and $e_{23}$ must be edges of the cycle $K$.
Moreover, since $G$ is $2$-connected and the edges $v_1v_7$ and $v_2v_3$ do not form a $2$-edge-cut,
it follows that $e_{17}$ is one of the edges $v_3v_4$, $v_4v_5$ and $v_5v_6$ and
$e_{23}$ is one of the edges $v_4v_5$, $v_5v_6$ and $v_6v_7$.

For $(i,j)\in\{(1,7),\,(2,3)\}$,
let $U_{ij}$ and $V_{ij}$ be the two sides of the $2$-edge-cut formed by the edges $v_iv_j$ and $e_{ij}$;
by symmetry, we can assume that $v_i\in U_{ij}$.
Since neither $e_{17}$ nor $e_{23}$ is the edge $v_1v_2$, we have $v_1,v_2\in U_{17}\cap U_{23}$.
Since $e_{17}\neq v_2v_3$ and $e_{23}\neq v_1v_7$, we have $v_3\in U_{17}\cap V_{23}$ and $v_7\in U_{23}\cap V_{17}$.
Finally, since $e_{23}\neq v_3v_4$, we have $v_4\in V_{23}$, and the symmetric arguments yields that $v_6\in V_{17}$.

Suppose that the vertex $v_5$ is contained in $V_{17}$.
If $v_4$ were also contained in $V_{17}$, then the edge $v_3x_3$ would be a cut-edge in $G$, which is impossible.
If $v_4$ were not contained in $V_{17}$, i.e., it were contained in $U_{17}$,
then the edges $v_1v_7$ and $v_2v_3$ would form a $2$-edge-cut, which is also impossible.
Hence, the vertex $v_5$ must be contained in $U_{17}$.
The symmetric argument yields that $v_5$ is contained in $U_{23}$.
Consequently, the edge $e_{17}$ is the edge $v_5v_6$ and the edge $e_{23}$ is the edge $v_4v_5$.
It follows that $G-V(K)$ has three components with vertex sets
$A=(U_{17}\cap U_{23})\setminus V(K)$, $B=(U_{17}\cap V_{23})\setminus V(K)$, and $C=(U_{23}\cap V_{17})\setminus V(K)$.
Note that $x_1,x_2,x_5\in V(A)$, $x_3,x_4\in V(B)$, and $x_6,x_7\in V(C)$.

Let $G'$ be obtained from $G$ by removing $v_1$ and $v_2$,
adding edges $v_3x_1$ and $v_7x_2$, and by subdividing the edge $v_5x_5$ once;
let $w$ be the new vertex of degree two.
The graph $G'$ is a simple $2$-connected subcubic graph and $\delta(G,G')=0$.
Let $F'$ be a spanning Eulerian subgraph of $G'$.
Let $F''$ be the spanning Eulerian subgraph of $G$ obtained from $F'$ as follows.
First, include the vertices $v_1$ and $v_2$ as isolated vertices.
If $F'$ contains the path $v_5wx_5$, then replace it with the edge $v_5x_5$; otherwise, remove the vertex $w$.
If $F'$ contains the edge $v_3x_1$, include the edges $x_1v_1$ and $v_3v_2$, and
if $F'$ contains the edge $v_7x_2$, include the edges $x_2v_2$ and $v_7v_1$.
Finally, include the edge $v_1v_2$ if the vertices $v_1$ and $v_2$ have odd degree so far.
Note that the resulting graph $F''$ is a spanning Eulerian subgraph of $G$,
$c(F'')=c(F')$, $i(F'')\le i(F')+1$, and if $i(F'')=i(F')+1$,
then all the three vertices $v_1$, $v_2$ and $v_5$ are isolated in $F''$.
If $i(F'')\le i(F')$, set $F=F''$; otherwise, set $F$ the be the spanning Eulerian subgraph of $G$
with the edge set equal to the symmetric difference of $E(F)$ and $E(K)$.
In the latter case, $c(F)\le c(F')$ and $i(F)=i(F')-2$.
It follows that $G'$ is a reduction of $G$.
\end{proof}

\subsection{Clean subcubic graphs}
\label{sub-clean}

We now summarize the facts that have been established in this section.
We will call a non-basic $2$-connected subcubic graph $G$ clean
if none of the lemmas that we have proven can be applied to $G$.
Formally, a $2$-connected subcubic graph $G$ is \emph{clean} if it is proper and
\begin{itemize}
\item[(CT1)] no cycle of length at most $7$ in $G$ contains a vertex of degree two,
\item[(CT2)] every cycle of length six in $G$ that is not a $\theta$-cycle is disjoint from all other cycles of length six,
\item[(CT3)] every cycle $K=v_1\ldots v_m$ of length $m\le 7$ in $G$ satisfies that
             if each of the edges $v_1v_m$ and $v_2v_3$ is contained in a $2$-edge-cut,
             then the edges $v_1v_m$ and $v_2v_3$ themselves form a $2$-edge-cut, and
\item[(CT4)] every cycle $K=v_1\ldots v_6$ of length six in $G$ satisfies at least one of the following
\begin{itemize}
\item[(a)] $K$ is a $\theta$-cycle, or
\item[(b)] each edge exiting $K$ is contained in a $2$-edge-cut but no two of them together form a $2$-edge-cut, or
\item[(c)] each edge exiting $K$ is contained in a $2$-edge-cut, and there exists exactly one pair $i$ and $j$ with $1\le i<j\le 6$
           such that the edges $v_ix_i$ and $v_jx_j$ form a $2$-edge-cut, and this pair satisfies $j-i=3$, or,
\item[(d)] precisely one edge exiting $K$, say $v_1x_1$, is not contained in a $2$-edge-cut, and
           there exists a partition $A$ and $B$ of the vertices of $G-V(K)$
	   such that $x_1,x_2,x_6\in A$, $x_3,x_4,x_5\in B$,
	   there is exactly one edge between $A$ and $B$, and
	   both $A$ and $B$ induce connected subgraphs of $G-V(K)$,
\end{itemize}
           where $x_i$ is the neighbor of the vertex $v_i$ outside the cycle $K$, $i\in\{1,\ldots,6,\}$.
\end{itemize}

Summarizing the results of this section, we get the following.

\begin{theorem}
\label{cor-summary}
There exists an algorithm running in time $O(n^3)$ that
constructs for a given $n$-vertex $2$-connected subcubic graph $G$
reduction of $G$ that is either basic or clean.
\end{theorem}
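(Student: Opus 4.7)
The plan is to iterate the reductions of Lemmas~\ref{lemma-c2e}--\ref{lemma-cutsin7}, replacing $G$ at each step by the reduction produced, until the current graph is basic or satisfies all of (CT1)--(CT4). I would first note that reductions compose: if $G'$ is a reduction of $G$ and $G''$ is a reduction of $G'$, then concatenating the two Eulerian-subgraph transformations yields a reduction of $G$ to $G''$, because $\delta(G,G'')=\delta(G,G')+\delta(G',G'')$ and the excess penalties add. So it is enough to build the final graph by a sequence of single-step reductions.

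The body of the algorithm checks the applicability of the lemmas in a fixed order. Given the current graph $G$: return $G$ if it is basic; otherwise test for (i) a cycle with at most four vertices of degree three, or a cycle of length five or six with five vertices of degree three, applying the appropriate one of Lemmas~\ref{lemma-c2e}--\ref{lemma-no5}; (ii) a cycle of length at most seven containing a vertex of degree two, applying Lemma~\ref{lemma-no5} or~\ref{lemma-no2in7}; (iii) two intersecting $6$-cycles not both $\theta$-cycles, applying Lemma~\ref{lemma-no26}; (iv) a cycle $K=v_1\ldots v_m$ with $m\le 7$ where $v_1v_m$ and $v_2v_3$ are each in a $2$-edge-cut but do not form one together, applying Lemma~\ref{lemma-cutsin7}; (v) a $6$-cycle violating (CT4), applying the appropriate one among Lemmas~\ref{lemma-6-nocut}, \ref{lemma-6adjcuts}, \ref{lemma-6oppcuts}, \ref{lemma-6no2e}, and~\ref{lemma-6-mainred}. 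If none of these tests fires, then by definition $G$ is clean and I would return it. The cleanness check at the exit reduces to unwinding definitions: properness is exactly the negation of~(i), (CT1) of~(ii), (CT2) of~(iii), (CT3) of~(iv); and (CT4) follows by a case analysis on each non-$\theta$ $6$-cycle $K$, where if every edge $v_ix_i$ lies in a $2$-edge-cut then Lemmas~\ref{lemma-6adjcuts} and~\ref{lemma-6oppcuts} leave us with subcase~(b) or~(c), while if some $v_ix_i$ is not in a $2$-edge-cut then Lemma~\ref{lemma-6no2e} pins down the opposite edge and Lemma~\ref{lemma-6-mainred} produces subcase~(d).

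For the running time, each reduction strictly decreases $n(G)$, so the main loop runs at most $n$ times. Inside one iteration, every cycle of length at most seven in $G$ corresponds to a short cycle of the cubic multigraph obtained by suppressing degree-two vertices, and as noted at the start of Subsection~\ref{sub-proper} all such cycles can be enumerated in linear time and there are $O(n)$ of them; the $2$-edge-cut and partition checks required per candidate cycle are standard connectivity computations costing $O(n)$, yielding $O(n^2)$ per iteration and $O(n^3)$ overall. The step I expect to be the real obstacle is not the algorithmic bookkeeping but the last part of the cleanness verification: one has to show that the non-applicability of Lemmas~\ref{lemma-6-nocut}--\ref{lemma-6-mainred} collectively forces every $6$-cycle into one of the four subcases (a)--(d) of (CT4), and threading these lemmas so that the residual graphs line up exactly with (a)--(d) is the only place where the argument is not purely mechanical.
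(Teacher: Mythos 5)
Your proposal is correct and follows essentially the same route as the paper: iterate the single-step reductions of Lemmas~\ref{lemma-c2e}--\ref{lemma-cutsin7}, observe that failure of properness or of any of (CT1)--(CT4) triggers one of those lemmas, and bound the running time by $O(n)$ iterations at quadratic cost each. Your explicit remark that reductions compose, and your use of $n(G')<n(G)$ (rather than the paper's appeal to $n(G)+n_2(G)$) to bound the number of iterations, are both sound and if anything slightly cleaner than the paper's phrasing.
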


\begin{proof}
We show that if $G$ is neither basic nor clean,
then one of Lemmas~\ref{lemma-c2e}--\ref{lemma-cutsin7} applies.
As discussed in Subsections~\ref{sub-proper}--\ref{sub-seven},
it is possible to check the existence of a reduction as described in these lemmas,
to find the corresponding subgraph and to perform the reduction in quadratic time.
Since each step results in decreasing the sum $n(G)+n_2(G)$,
the algorithm stops after at most $O(n)$ steps, which yields the claimed running time.

If $G$ is basic, then there is nothing to prove.
If $G$ is not proper, a reduction exists by Lemma~\ref{lemma-no5}.
If $G$ is proper and fails to satisfy (CT1), the existence of a reduction follows from Lemma~\ref{lemma-no2in7} (note that
$G$ cannot have a cycle of length at most six containing a vertex of degree two since it is proper).
If $G$ is proper and does not satisfy (CT2), then a reduction exists by Lemma~\ref{lemma-no26}, and
if it does not satisfy (CT3), then a reduction exists by Lemma~\ref{lemma-cutsin7}.
Finally, if $G$ is proper and fails to not satisfy (CT4),
then a reduction exists by one of Lemmas~\ref{lemma-6-mainred}, \ref{lemma-6no2e}, \ref{lemma-6adjcuts} or \ref{lemma-6oppcuts}.
\end{proof}

\section{Main result}\label{sec-proof}

We need few additional results before we can prove Theorem~\ref{thm-main}.
The first concerns the structure of cycles passing through vertices of a cycle of length six
in a clean $2$-connected subcubic graph.
Let $v$ be a vertex of degree three in a graph $G$, and let $x_1$, $x_2$ and $x_3$ be its neighbors.
The \emph{type} of $v$ is the triple $(\ell_1, \ell_2, \ell_3)$ such that
$\ell_1$, $\ell_2$ and $\ell_3$ are the lengths of shortest cycles containing paths $x_1vx_2$, $x_1vx_3$ and $x_2vx_3$.
In our consideration, the order of the coordinates of the triple will be irrelevant,
so we will always assume that the lengths satisfy that $\ell_1\le\ell_2\le\ell_3$.
A type $(\ell'_1,\ell'_2,\ell'_3)$ \emph{dominates} the type $(\ell_1, \ell_2, \ell_3)$
if $\ell'_i\ge \ell_i$ for every $i=1,2,3$.
If $K$ is a cycle in a graph $G$ and each vertex of $K$ has degree three,
then the \emph{type} of the cycle $K$ is the multiset of the types of the vertices of $K$.
Finally, a multiset $M_1$ of types \emph{dominates} a multiset $M_2$ types
if there exists a bijection between the types contained in $M_1$ and $M_2$ such that
each type of $M_1$ is dominates the corresponding type in $M_2$.

We can now prove the following lemma (note that all vertices of the cycle $K$ in the lemma
must have degrees three since $G$ is assumed to be clean).

\begin{lemma}\label{lemma-6types}
Let $G$ be a clean $2$-connected subcubic graph and let $K=v_1v_2\ldots v_6$ be a cycle of length six in $G$.
If $K$ is not a $\theta$-cycle, then the type of $K$ dominates at least one of the following multisets:
\begin{itemize}
\item $\{(6,7,7),\;(6,7,7),\;(6,8,8),\;(6,8,8),\;(6,8,8),\;(6,8,8)\}$,
\item $\{(6,7,7),\;(6,7,8),\;(6,7,8),\;(6,8,8),\;(6,8,8),\;(6,8,8)\}$, or
\item $\{(6,7,7),\;(6,7,8),\;(6,7,9),\;(6,7,9),\;(6,8,8),\;(6,8,8)\}$.
\end{itemize}
\end{lemma}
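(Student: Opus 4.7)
The plan is to first constrain the allowed vertex types on $K$, then classify the short cycles meeting $K$, and finally perform a case analysis based on (CT4).

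First, by (CT1) every vertex of $K$ has degree three, so each $v_i$ has a unique neighbor $x_i$ outside $K$, and by properness the $x_i$ are pairwise distinct. Properness combined with (CT1) implies that the girth of $G$ is at least six (cycles of length at most seven have all vertices of degree three by (CT1), and such cycles of length three, four, five, or six with at most five degree-three vertices are forbidden by properness). Hence the shortest cycle through the path $v_{i-1}v_iv_{i+1}$ is $K$ itself, so $\ell_1(v_i)=6$. Any shortest cycle through the path $v_{i-1}v_ix_i$ or $v_{i+1}v_ix_i$ contains $x_i\notin V(K)$, hence differs from $K$; by (CT2), since $K$ is not a $\theta$-cycle, every other cycle of length six in $G$ is vertex-disjoint from $K$, so every such shortest cycle has length at least seven. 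Consequently every vertex type is $(6,\ell_2,\ell_3)$ with $\ell_2,\ell_3\ge 7$, which already dominates the ``weakest'' coordinates of the target multisets.

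Next, I would describe the cycles of length seven or eight that intersect $K$. Any such cycle $C$ decomposes as an alternating union of $s$ arcs of $K$ and $s$ \emph{ears}, i.e., maximal subpaths of $C$ with both endpoints on $K$ and interior off $K$. Ears of length one or two are impossible because the $x_i$ are distinct and each $v_i$ has only one non-$K$ neighbor; an ear of length three closed by the shortest $K$-arc between its endpoints creates a cycle of length at most six that violates either properness (if of length at most five, or of length six with fewer than six degree-three vertices) or (CT2) (if of length six and sharing a vertex with the non-$\theta$-cycle $K$). Hence every ear has length at least four, which combined with $|C|\ge 5s$ forces $s=1$. Thus the feasible (ear length, endpoint $K$-distance) pairs are $(6,1),(5,2),(4,3)$ for cycles of length seven and $(7,1),(6,2),(5,3)$ for cycles of length eight. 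An ear of length four between opposite vertices yields two cycles of length seven, marking both slots at each endpoint; an ear of length five or six gives one cycle of length seven via the short $K$-arc and one cycle of length nine or eleven via the long $K$-arc.

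Finally, I would case-analyze (CT4); since $K$ is not a $\theta$-cycle, we are in one of cases (b), (c), (d). In case (d), the unique edge of $G-V(K)$ between the parts $A$ and $B$ must lie on every ear whose endpoints $x_a, x_b$ are split by $A, B$; since the three opposite pairs $\{x_i,x_{i+3}\}$ all cross $A$--$B$, at most one ear of length four can exist (two would force two distinct ears to share the unique $A$--$B$ edge and hence identify two $x_i$'s), and similar reasoning for length-five and length-six ears controls the multiset. In case (c), the opposite pair $\{v_ix_i,v_{i+3}x_{i+3}\}$ forming a $2$-edge-cut induces a three-component structure on $G-V(K)$ analogous to that appearing in Lemma~\ref{lemma-6-opp3}, and enumerating which ears can connect which components again restricts the types. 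In case (b), each exit edge $v_ix_i$ has a $2$-edge-cut partner $e_i$ that is not another exit edge, so $e_i$ lies on $K$ or in $G-V(K)$; every short cycle through $v_ix_i$ must also use $e_i$, and a careful subcase analysis based on where each $e_i$ lies translates into a bound on the multiset. The main obstacle is exactly this detailed bookkeeping, especially in case (b), where the partners $e_i$ may be arbitrary; the key tool throughout is that two short ears forced to share a common edge would create either a forbidden short cycle (by girth, properness, or (CT2)) or a coincidence of two $x_i$'s (contradicting properness), and it is precisely these conflicts that force the type multiset to dominate one of the three listed.
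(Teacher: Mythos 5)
Your preliminary reductions are correct and match the paper's: cleanness forces girth at least six, $K$ is induced with all six exit neighbours $x_i$ distinct, (CT2) gives $\ell_2,\ell_3\ge 7$ at every vertex, and the proof must then split according to which of (CT4)(b), (c), (d) holds. The ear framework is also sound as far as it goes (modulo the omitted pair $(4,4)$ for cycles of length eight). But the proof stops exactly where the lemma begins: the three multisets encode precise quantitative claims --- at least four vertices of type $\ge(6,8,8)$ in the first, the distribution of $(6,7,8)$'s in the second, and the two $(6,7,9)$'s in the third --- and none of these is actually derived. You acknowledge this yourself (``the main obstacle is exactly this detailed bookkeeping''), so what you have is a plan, not a proof.

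Concretely, in case (b) the paper shows that \emph{every} cycle through \emph{every} exit edge has length at least eight, i.e.\ all six vertices dominate $(6,8,8)$; the mechanism is that the $2$-edge-cut containing $v_1x_1$ forces any cycle $K'$ through $v_1x_1$ to traverse a cut-edge of $G-V(K)$ leaving the block containing $x_1$, and since no two exit edges form a $2$-edge-cut, the blocks at the two ends of the shared arc $v_1\ldots v_k$ are disjoint, yielding at least five vertices of $K'$ outside $K$ and hence $|K'|\ge k+5\ge 8$. Your sketch (``every short cycle through $v_ix_i$ must also use $e_i$'') does not get this far, because you have not located $e_i$ or argued why it contributes extra vertices to the ear; in your language one must show every ear has length at least six in case (b), and that step is missing. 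In case (d) the situation is worse: the entries $(6,7,9)$ for $v_2$ and $v_6$ in the third multiset come from a delicate argument that first pins down the unique length-seven cycle meeting $K$ in the path $v_1v_2v_3$ (ruling out $k=4$ by counting, and identifying the unique $A$--$B$ edge as incident with $x_1$), and then shows that any cycle through $v_3v_2x_2$ must contain at least nine vertices by tracking how it can re-enter $A$ and $B$. Nothing in your outline produces the constant $9$, nor the dichotomy between the second and third multisets (which in the paper hinges on whether some $7$-cycle meets $K$ in more than two vertices). Until these counts are carried out, the lemma is not established.
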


\begin{proof}
Let $x_i$ be the neighbor of $v_i$ outside of $K$, $i=1,\ldots,6$.
Since $G$ is clean, the cycle $K$ satisfies one of the four conditions in (CT4).
As $K$ is not a $\theta$-cycle, it must satisfy (CT4)(b), (CT4)(c) or (CT4)(d).
We analyze each of these three cases separately. 

Suppose that the cycle $K$ satisfies (CT4)(b),
i.e., each edge $v_ix_i$, $i=1,\ldots,6$,  is contained in a $2$-edge-cut but no two of them together form a $2$-edge-cut.
Let $K'$ be a cycle in $G$ containing the edge $v_1x_1$.
If the intersection of $K$ and $K'$ is not a path, then the length of $K'$ is at least ten by (CT2).
In the rest, we assume that the intersection of $K$ and $K'$ is a path and that
the cycles $K$ and $K'$ share a path $v_1v_2\ldots v_k$.
If $k=2$, then the length of $K'$ is at least $8$ by (CT3) since both $v_1x_1$ and $v_2x_2$ are contained in a $2$-edge-cut.
If $x_1$ or $x_k$ has degree two, then the length of $K'$ is also at least $8$ by (CT1).
Hence, we assume that $k\ge 3$ and that both the vertices $x_1$ and $x_k$ have degree three.

Let $C_1$ and $C_2$ be the blocks of $G-V(K)$ containing the vertices $x_1$ and $x_k$, respectively, and
let $e_1$ and $e_2$ be the cut-edges of $G-V(K)$ that are contained in $K'$ and
that are incident with $C_1$ and $C_2$, respectively.
Note that $C_1$ and $C_2$ are vertex-disjoint and $e_1\not=e_2$
since the edges $v_1x_1$ and $v_kx_k$ do not form a $2$-edge-cut by (CT4)(b).
In addition, $e_1$ is not incident with $x_1$ and $e_2$ is not incident with $x_k$
since the vertices $x_1$ and $x_k$ have degree three and $G$ is $2$-connected.
We conclude that the cycle $K'$ has at least five vertices outside the cycle $K$:
the vertices $x_1$, $x_k$ and the end vertices of $e_1$ and $e_2$.
Hence, the length of the cycle $K'$ is at least $k+5\ge 8$.

Since $K'$ was an arbitrary cycle containing the edge $v_1x_1$ and
the symmetric argument applies to each of the edges $v_ix_i$, $i=1,\ldots,6$,
we conclude that the type of each vertex of $K$ dominates $(6,8,8)$.
In particular, the type of $K$ dominates the first multiset from the statement of the lemma.

Suppose next that $K$ satisfies (CT4)(c),
i.e., each edge $v_ix_i$, $i=1,\ldots,6$, is contained in a $2$-edge-cut, and
there exists exactly one pair $i$ and $j$ with $1\le i<j\le 6$ such that
the edges $v_ix_i$ and $v_jx_j$ form a $2$-edge-cut, and this pair satisfies $j-i=3$.
By symmetry, we can assume that the edges $v_1x_1$ and $v_4x_4$ form a $2$-edge-cut.
Let $K'$ be an arbitrary cycle containing an edge $v_ix_i$ for $i=1,\ldots,6$.
The length of $K'$ is at least seven by (CT2),
which implies that the type of $v_i$ dominates $(6,7,7)$.
If $i\in\{2,3,5,6\}$, then the arguments presented in the analysis of the case (CT4)(b)
yield that the length of $K'$ is at least eight, i.e., the type of $v_i$ dominates $(6,8,8)$.
We conclude that the type of $K$ dominates the first multiset from the statement of the lemma.

Finally, suppose that $K$ satisfies (CT4)(d), i.e.,
the edge $v_1x_1$ is not contained in a $2$-edge-cut while
each of the edges $v_ix_i$, $i=2,\ldots,6$, is a contained in a $2$-edge-cut,
there exists a partition $A$ and $B$ of the vertices of $G-V(K)$
such that $x_1,x_2,x_6\in A$, $x_3,x_4,x_5\in B$,
there is exactly one edge between $A$ and $B$, and
both $A$ and $B$ induce connected subgraphs of $G-V(K)$.
Note that the structure of $G-V(K)$ implies that
no two of the edges $v_1x_1,\ldots,v_6x_6$ form a $2$-edge-cut.

Let $K'$ be an arbitrary cycle containing an edge incident with the cycle $K$.
The length of $K'$ is at least seven by (CT2).
If the intersection of the cycles $K$ and $K'$ is not a path, then the length of $K'$ is at least eight.
If the cycle $K'$ does not contain the edge $v_1x_1$,
then the analysis of the case (CT4)(b) yields that the length of $K'$ is at least eight.
Hence, the length of $K'$ is at least eight unless
$K'$ contains the edge $v_1x_1$ and the intersection of $K$ and $K'$
is a path $v_1v_2\ldots v_k$ with $k\le 4$ (if $k=5$ or $k=6$, then the length of $K'$ is at least nine by (CT2)).

If every cycle of length seven intersects the cycle $K$ only in two vertices,
then the type of $v_1$ dominates $(6,7,7)$,
the types of $v_2$ and $v_6$ dominate $(6,7,8)$, and
the types of $v_3$, $v_4$, and $v_5$ dominate $(6,8,8)$.
Consequently, the type of $K$ dominates the second multiset from the statement of the lemma.

In the rest of the proof, we assume that there exists a cycle $K'$ of length seven such that
the intersection of $K$ and $K'$ is a path $v_1v_2\ldots v_k$ with $k=3$ or $k=4$.
Let $C$ be the block of $G-V(K)$ containing the vertex $x_k$, and
let $e=zz'$ be the cut-edge incident with $C$ that is contained in $K'$.
Note that $V(C)\subset B$, in particular, $x_1\not\in V(C)$.
Since each of the edges $v_3x_3$, $v_4x_4$ $v_5x_5$ is contained in a $2$-edge-cut and
the subgraph of $G-V(K)$ induced by $B$ is connected, both the end-vertices of $e$ are in $B$,
i.e., $e$ is not the edge between $A$ and $B$.
By (CT1), the degrees of $x_k$ is three, which implies that $e$ is not incident with $x_k$.
We conclude that $k=3$: otherwise, $K'$ contains the four vertices $v_1,\ldots,v_4$,
the vertices $x_1$ and $x_4$, and the two end-vertices of $e$.
Moreover, the cycle $K'$ is the cycle $v_1v_2v_3x_3zz'x_1$.
Note that since $k=3$, the type of $v_4$ dominates $(6,8,8)$.

Since both the end-vertices $z$ and $z'$ of the edge $e$ are contained in $B$,
the edge $z'x_1$ is the unique edge between $A$ and $B$.
Since the edge $v_3x_3$ is contained in a $2$-edge-cut, $G$ is $2$-connected and the degree of $x_3$ is three,
if follows that the edges $v_3x_3$ and $zz'$ form a $2$-edge-cut in $G$.
If $G$ had a cycle of length seven passing through the vertex $v_5$,
then the symmetric argument would yield that the edges $v_5x_5$ and $z''z'$ form a $2$-edge-cut in $G$,
where $z''$ is the neighbor of $z'$ different from $z$ and $x_1$.
Since this is impossible since the edge $v_4x_4$ is contained in a $2$-edge-cut and
the subgraph of $G-V(K)$ induced by $B$ is connected,
we conclude that the vertex $v_5$ is contained in no cycle of length seven.
Hence, we have established that the type of $v_5$ dominates $(6,8,8)$.
Also note that the type of $v_3$ dominates $(6,7,8)$
since any cycle of length seven containing $v_3$ contains the path $v_2v_3x_3$.

Consider now a cycle $K''$ in $G$ containing the path $v_3v_2x_2$.
If $K''$ contains only the vertices from $V(K)\cup A$,
then $K''$ contains at least five vertices of $K$ and at least four vertices of $A$;
otherwise, there would be a cycle of length six intersecting the cycle $K$, which is excluded by (CT2).
If the cycle $K''$ contains some vertices from the set $B$,
then it contains at least two vertices of the cycle $K$,
five vertices of $A$ (otherwise, the path of $K''$ from $x_2$ to $x_1$ together with the path $x_2v_2v_1x_1$
would form a cycle of length six intersecting $K$) and
two vertices in $B$ (the vertices $x_3$ and $z'$ cannot coincide since the edge $v_3x_3$ is contained in a $2$-edge-cut).
In both cases the length of $K''$ is at least nine.
We conclude that the type of the vertex $v_2$ dominates $(6,7,9)$.
The symmetric argument yields that the type of $v_6$ dominates $(6,7,9)$.
Since the type of $v_1$ dominates $(6,7,7)$, it follows that the type of $K$
dominates the third multiset from the statement of the lemma.
\end{proof}

The following lemma follows from the description of the perfect matching polytope by Edmonds~\cite{edmonds1965maximum} and
the fact that the perfect matching polytope has a strong separation oracle~\cite{padberg1982odd};
see e.g.~\cite{grotschel2012geometric} for further details.

\begin{lemma}\label{lemma-2m}
There exists a polynomial-time algorithm that
for a given cubic $2$-connected $n$-vertex graph
outputs a collection of $m\le n/2+2$ perfect matchings $M_1,\ldots,M_m$ and non-negative coefficients $a_1,\ldots,a_m$ such that
$a_1+\cdots+a_m=1$ and
$$\sum_{i=1}^m a_i\chi_{M_i}=(1/3,\ldots,1/3)\in\RR^{E(G)}\;,$$
where $\chi_{M_i}\in\RR^{E(G)}$ is the characteristic vector of $M_i$.
\end{lemma}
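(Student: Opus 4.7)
The plan is to combine Edmonds' linear description of the perfect matching polytope with a dimension count and Caratheodory's theorem, and then appeal to the Padberg--Rao separation oracle for the algorithmic part.

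First I would verify that $(1/3,\ldots,1/3)$ lies in the perfect matching polytope $P(G)$. By Edmonds' theorem, $P(G)$ is the set of non-negative $x\in\RR^{E(G)}$ satisfying $x(\delta(v))=1$ for every $v\in V(G)$ and $x(\delta(S))\ge 1$ for every odd $S\subseteq V(G)$. The first family of constraints holds for $(1/3,\ldots,1/3)$ because $G$ is cubic. For the odd-set constraints, observe that for any $S\subseteq V(G)$ of odd size, $|\delta(S)|$ has the same parity as $\sum_{v\in S}\deg(v)=3|S|$, so $|\delta(S)|$ is odd; since $G$ is $2$-edge-connected, $|\delta(S)|\ge 2$, and being odd forces $|\delta(S)|\ge 3$. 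Hence $\sum_{e\in\delta(S)} 1/3\ge 1$, so $(1/3,\ldots,1/3)\in P(G)$.

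Next I would bound the affine dimension of $P(G)$. The $n$ degree equations $x(\delta(v))=1$ sum to $2\sum_e x_e=n$, so they span an affine subspace of codimension at most $n-1$; since $|E(G)|=3n/2$, we get $\dim(\mathrm{aff}\, P(G))\le 3n/2-(n-1)=n/2+1$. By Caratheodory's theorem applied inside this affine hull, any point of $P(G)$ is a convex combination of at most $n/2+2$ vertices of $P(G)$, and by Edmonds these vertices are exactly the characteristic vectors of perfect matchings. This supplies the counting bound $m\le n/2+2$ claimed in the lemma.

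For the algorithmic statement I would invoke the Padberg--Rao polynomial-time separation oracle for $P(G)$: combined with the ellipsoid method it enables polynomial-time optimization over $P(G)$ and, via the standard reduction described in Gr\"otschel--Lov\'asz--Schrijver, a polynomial-time algorithm that writes any point of $P(G)$ as an explicit convex combination of vertices. Applying this to the point $(1/3,\ldots,1/3)$ yields a combination $\sum a_i\chi_{M_i}$, after which one trims the support by the usual Caratheodory reduction (iteratively solving a linear system to eliminate one matching at a time whenever the current support is affinely dependent) until the support has size at most $\dim(\mathrm{aff}\, P(G))+1\le n/2+2$. The main obstacle is really just the odd-cut verification above; everything algorithmic is off-the-shelf polyhedral combinatorics.
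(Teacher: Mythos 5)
Your proof is correct and takes essentially the same route as the paper, which simply cites Edmonds' linear description of the perfect matching polytope, the Padberg--Rao separation oracle, and the Gr\"otschel--Lov\'asz--Schrijver machinery; your odd-cut parity argument (cubic plus $2$-edge-connected forces $|\delta(S)|\ge 3$ for odd $S$) and the Carath\'eodory count are exactly the right details to fill in. One small wording slip: in the dimension count you need the degree system to have rank \emph{at least} $n-1$ (true for any connected graph, as its incidence matrix has rank $n-1$ or $n$), i.e.\ codimension at least $n-1$ rather than ``at most,'' which is what actually yields $\dim(\mathrm{aff}\,P(G))\le 3n/2-(n-1)=n/2+1$ and hence $m\le n/2+2$.
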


Lemma~\ref{lemma-2m} gives the following.

\begin{lemma}\label{lemma-2fs}
There exists a polynomial-time algorithm that
for a given $2$-connected $n$-vertex subcubic graph
outputs a collection of $m\le n/2+2$ spanning Eulerian subgraphs $F_1,\ldots,F_m$ and
probabilities $p_1,\ldots,p_m\ge 0$, $p_1+\cdots+p_m=1$ that satisfy the following.
If a spanning Eulerian subgraph $F$ is equal to $F_i$ with probability $p_i$, $i=1,\ldots,m$,
then $\Prob[e\in E(F)]=2/3$.
In particular, a vertex of degree three is contained in a cycle of $F$ with probability one and
a vertex of degree two is isolated with probability $1/3$.
\end{lemma}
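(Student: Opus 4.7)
The plan is to deduce Lemma~\ref{lemma-2fs} from Lemma~\ref{lemma-2m} applied to a suitable cubic (multi)graph $\tilde{G}$ obtained from $G$. If $G$ is a cycle, I treat it directly by setting $F_1=G$ with probability $2/3$ and $F_2$ equal to the edgeless spanning subgraph with probability $1/3$. Otherwise, let $\tilde{G}$ be the multigraph obtained from $G$ by suppressing every vertex of degree two; the edges of $\tilde{G}$ are in bijection with the maximal paths of $G$ all of whose internal vertices have degree two. Since $G$ is 2-connected, subcubic, and not a cycle, the multigraph $\tilde{G}$ is cubic, loopless, and 2-edge-connected, and has $|V(\tilde{G})|\le n$ vertices.

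The key point is that Lemma~\ref{lemma-2m} extends to 2-edge-connected cubic multigraphs: the vector $(\tfrac13,\ldots,\tfrac13)\in\RR^{E(\tilde{G})}$ lies in the perfect matching polytope of $\tilde{G}$, since the equality constraints $\sum_{e\ni v}x_e=1$ hold by cubicity and every odd cut $\delta(S)$ satisfies $|\delta(S)|\ge 3$ (its size has the parity of $3|S|$ and is at least two by 2-edge-connectivity), whence $\sum_{e\in\delta(S)}\tfrac13\ge 1$. Edmonds' polytope description together with the Padberg--Rao separation oracle, which underlie Lemma~\ref{lemma-2m}, then produce perfect matchings $\tilde{M}_1,\ldots,\tilde{M}_m$ of $\tilde{G}$ with $m\le |V(\tilde{G})|/2+2\le n/2+2$ and nonnegative weights $a_1,\ldots,a_m$ summing to one with $\sum_i a_i\chi_{\tilde{M}_i}=(\tfrac13,\ldots,\tfrac13)$.

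I would then lift each $\tilde{M}_i$ to a spanning subgraph $F_i$ of $G$ by declaring that an edge $e$ of $G$ belongs to $F_i$ if and only if the edge $\tilde{e}$ of $\tilde{G}$ corresponding to the suppression path containing $e$ is \emph{not} in $\tilde{M}_i$, and set $p_i=a_i$. At each degree-three vertex $v$ of $G$ exactly one of its three edges in $\tilde{G}$ lies in $\tilde{M}_i$, so exactly two of its three edges in $G$ lie in $F_i$; at a degree-two vertex both incident edges correspond to the same edge of $\tilde{G}$ and are therefore both in or both out of $F_i$. Hence every vertex has even degree in $F_i$, so $F_i$ is a spanning Eulerian subgraph of $G$, and every degree-three vertex lies on a cycle of $F_i$. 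For an edge $e$ of $G$ we obtain $\Prob[e\in E(F)]=\Prob[\tilde{e}\notin\tilde{M}]=2/3$, and a degree-two vertex is isolated in $F$ precisely when the corresponding $\tilde{e}$ lies in the random matching, which happens with probability $1/3$.

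The main obstacle is justifying the multigraph extension of Lemma~\ref{lemma-2m}, namely checking that Edmonds' polytope description and the Padberg--Rao separation oracle both go through for cubic 2-edge-connected multigraphs with parallel edges, and that a sparse convex combination with $m\le |V(\tilde{G})|/2+2$ matchings can be extracted via Carath\'eodory's theorem applied to the affine hull of the perfect matching polytope.
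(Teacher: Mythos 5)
Your proof is correct and follows essentially the same route as the paper: suppress the degree-two vertices, apply Lemma~\ref{lemma-2m} to the resulting cubic graph, take the complementary $2$-factors of the perfect matchings, and lift them back to spanning Eulerian subgraphs of $G$ with $p_i=a_i$. You are in fact more careful than the paper on two points it glosses over --- that suppression can produce parallel edges, so one needs the multigraph version of Lemma~\ref{lemma-2m} (which holds since $(\tfrac13,\ldots,\tfrac13)$ lies in the perfect matching polytope of any $2$-edge-connected cubic multigraph), and the degenerate case where $G$ is a cycle.
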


\begin{proof}
Let $G$ be the input $2$-connected $n$-vertex subcubic graph, and
let $G'$ be the $2$-connected cubic graph obtained from $G$ by suppressing all vertices of degree two.
Apply the algorithm from Lemma~\ref{lemma-2m} to $G'$ to get a collection of $m$ perfect matchings $M_1,\ldots,M_m$ and
non-negative coefficients $a_1,\ldots,a_m$ with the properties stated in the lemma. Note that $m\le n/2+2$.
Let $F'_i$ be the $2$-factor of $G'$ consisting of the edges not contained in $M_i$, and
let $F_i$ be the spanning Eulerian subgraph of $G$ consisting of the edges contained in paths corresponding to the edges of $F'_i$,
$i=1,\ldots,m$.
It is easy to see that the lemma holds for $F_1,\ldots,F_m$ with $p_i=a_i$, $i=1,\ldots,m$.
\end{proof}

We now combine Lemmas~\ref{lemma-6types} and~\ref{lemma-2fs}.

\begin{lemma}\label{lemma-expected}
There exists a polynomial-time algorithm that given a clean $2$-connected subcubic graph $G$
outputs a spanning Eulerian subgraph $F$ of $G$ such that $$\exc(F)\le \frac{2n(G)+2n_2(G)}{7}\;.$$
\end{lemma}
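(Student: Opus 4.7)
My plan is to apply Lemma~\ref{lemma-2fs} to $G$ to obtain a distribution over spanning Eulerian subgraphs $F_1,\ldots,F_m$, and to establish
\[\EE[\exc(F)] \le \frac{2\,n(G)+2\,n_2(G)}{7}\]
for $F$ drawn from this distribution. Since $\exc$ is integer-valued, at least one $F_i$ in the support attains the stated bound, and enumerating the support identifies such an $F_i$ in polynomial time.

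Writing $\exc(F) = 2 c(F) + i(F)$, I would treat the two terms separately. By Lemma~\ref{lemma-2fs}, every degree-two vertex is isolated in $F$ with probability exactly $1/3$, so $\EE[i(F)] = n_2(G)/3$. For $\EE[c(F)]$ I use the identity $c(F) = \sum_v \mathbf{1}[v \text{ on a cycle of } F]/|\text{cycle of }F\text{ through } v|$. For a degree-three vertex $v$ of type $(\ell_1,\ell_2,\ell_3)$, the matching underlying Lemma~\ref{lemma-2fs} leaves exactly one edge at $v$ out of $F$, each of the three edges with probability $1/3$; conditionally on this, the cycle of $F$ through $v$ contains the other two edges at $v$ and hence has length at least the corresponding $\ell_i$, giving
\[\EE\bigl[1/|\text{cycle}(v)|\bigr] \le \tfrac{1}{3}\bigl(1/\ell_1+1/\ell_2+1/\ell_3\bigr).\]
For a degree-two vertex $v$, every cycle of $G$ through $v$ has length at least $8$ by (CT1), and $v$ lies on a cycle of $F$ with probability $2/3$, so its contribution to $\EE[c(F)]$ is at most $1/12$. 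Substituting these bounds reduces the lemma to the combinatorial inequality
\[\sum_{v \in V_3(G)}\Big(\tfrac{1}{\ell_1(v)}+\tfrac{1}{\ell_2(v)}+\tfrac{1}{\ell_3(v)}\Big) \;\le\; \frac{12\,n_3(G)+3\,n_2(G)}{28},\]
where $V_3(G)$ denotes the set of degree-three vertices and $n_3(G) := n(G) - n_2(G)$.

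To prove the inequality I split $V_3(G)$ according to the shortest cycles meeting each vertex. A vertex on no cycle of length six has all three $\ell_i \ge 7$, so it contributes at most $3/7$, matching the per-vertex budget. A vertex on a non-$\theta$ cycle $K$ of length six is, by (CT2), on no other cycle of length six, so its contribution is grouped with that of the other vertices of $K$; Lemma~\ref{lemma-6types} restricts the type multiset of $K$ to one of three explicit lists, and an elementary calculation verifies that in each case the group sum of $1/\ell_1+1/\ell_2+1/\ell_3$ is at most $18/7 = 6 \cdot 3/7$, which is the total budget for six vertices. For the remaining degree-three vertices, which lie only on $\theta$-cycles, I exploit the structural definition: the three $2$-edge-cuts attached to a $\theta$-cycle $K$ force the type of a pole to dominate $(6,8,8)$ and the type of a non-pole to satisfy $\ell_1 = 6$ and $\ell_3 \ge 10$ (any cycle through a non-pole's external edge that avoids $K$ must leave the non-pole's adjacent component, traverse a second $2$-edge-cut and return along $K$). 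A direct computation then gives $\sum_{v \in V(K)}(1/\ell_1+1/\ell_2+1/\ell_3) \le 18/7$ for every $\theta$-cycle $K$ as well.

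The main obstacle is that a vertex can lie on several $\theta$-cycles simultaneously, a situation not excluded by (CT2), so a naive sum over $\theta$-cycles double-counts. I would address this by charging each such vertex to a single $\theta$-cycle through it (say, the one that realises two of its shortest-cycle lengths) and verifying that the per-cycle bound $18/7$ is preserved under this restriction, using that a vertex shared by two $\theta$-cycles still satisfies $\ell_3 \ge 10$ because of the shared $2$-edge-cut structure. Combining the three categories of vertices then yields the required inequality and finishes the proof.
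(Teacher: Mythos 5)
Your overall framework is sound and is essentially the paper's: take the distribution from Lemma~\ref{lemma-2fs}, bound $\EE\exc(F)$ by a per-vertex accounting in terms of vertex types, let vertices of type dominating $(7,7,7)$ pay for themselves, and group the vertices of each non-$\theta$ cycle of length six so that Lemma~\ref{lemma-6types} covers them (your three computations there check out, with equality in two of the three multisets). The reduction to the inequality $\sum_{v\in V_3}(1/\ell_1+1/\ell_2+1/\ell_3)\le (12n_3+3n_2)/28$ is also correct.

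The gap is in the $\theta$-cycle case, and it is fatal to the route as you state it. Your claim that the type of a pole dominates $(6,8,8)$ is not justified by the $2$-edge-cut structure and does not follow from cleanness. Take $K=v_1\ldots v_6$ with poles $v_1,v_4$, so that $x_1$ and $x_4$ lie in the \emph{same} component of $G-V(K)$. A cycle through the pole pair $\{v_1x_1,v_1v_2\}$ need not cross any of the three attached $2$-edge-cuts: it can run $x_1v_1v_2v_3v_4x_4$ and return to $x_1$ inside that one component, so its length is $5+\mathrm{dist}(x_1,x_4)$. The properness/cleanness conditions force $\mathrm{dist}(x_2,x_3)\ge 3$ and $\mathrm{dist}(x_5,x_6)\ge 3$ (neighbors of \emph{adjacent} vertices of $K$), but they give no such bound for the opposite pair $x_1,x_4$: if $x_1x_4$ is an edge, the resulting $6$-cycle intersecting $K$ is not excluded by (CT2) because it can itself be a $\theta$-cycle (its external-neighbor pairs $\{v_5,v_6\}$, $\{x_2,x_3\}$, $\{y_1,y_4\}$ realize exactly the $\theta$-cycle pattern), and if $\mathrm{dist}(x_1,x_4)=2$ one gets a $7$-cycle that (CT1) and (CT3) do not obviously forbid. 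So a pole's type can be as bad as $(6,6,6)$ or $(6,7,\cdot)$, and your budget has no slack: already with pole type $(6,7,8)$ the six-vertex sum exceeds $18/7$. This is precisely why the paper does \emph{not} argue via the pole's type. Instead it conditions on whether $F$ contains the length-two paths $P_1=x_6v_6v_1$ and $P_2=x_2v_2v_1$ (events of probability $1/3$ each, determined at $v_6$ and $v_2$ rather than at $v_1$): when $P_1\subseteq F$ the cycle of $F$ through $v_1$ is forced to enter the component of $x_5,x_6$ and leave it through the second edge of that $2$-edge-cut, giving length at least $10$ (and at least $14$ when $P\subseteq F$), while the complementary event of probability $1/3+p$ only uses girth $6$. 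That conditional computation yields $\EE c_2(v_1)\ge 1/63$ for a pole and, combined with $|Q_2|\le 3|Q_1|$, disposes of overlapping $\theta$-cycles cleanly — which also replaces your hand-waved charging scheme for vertices lying on several $\theta$-cycles. To repair your proof you would have to either establish $\mathrm{dist}(x_1,x_4)\ge 3$ for poles from the cleanness conditions (which I do not believe is possible) or import the paper's conditional path argument, which genuinely uses more than the marginal type information.
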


\begin{proof}
We first apply the algorithm from Lemma~\ref{lemma-2fs} to get
a collection of $m\le n/2+2$ spanning Eulerian subgraphs $F_1,\ldots,F_m$ and probabilities $p_1,\ldots,p_m$.
We show that
\begin{equation}
\EE\exc(F)\le \frac{2n(G)+2n_2(G)}{7}\;,\label{eq-2fs}
\end{equation}
which implies the statement of the lemma since the number of the subgraphs $F_1,\ldots,F_m$ is linear in $n$ and
the excess of each them can be computed in linear time.
In particular, the algorithm can output the subgraph $F_i$ with the smallest $\exc(F_i)$.

We now show that (\ref{eq-2fs}) holds.
We apply a double counting argument, which we phrase as a discharging argument.
At the beginning, we assign each vertex of degree three charge of $2/7$ and
to each vertex of degree two charge of $4/7$. Let $c_1(v)$ be the initial charge of a vertex $v$.
Note that the sum of the initial charges of the vertices
is the right side of the inequality (\ref{eq-2fs}).

We next choose a random spanning Eulerian subgraphs $F$ among the subgraphs $F_1,\ldots,F_m$
with probabilities given by $p_1,\ldots,p_m$.
The charge of each vertex that is isolated in $F$ is decreased by one unit, and
the charge of each vertex contained in a cycle of length $k$ by $2/k$ units.
Let $c_2(v)$ be the new charge of a vertex $v$.
Observe that the total decrease of charge of the vertices is equal to $\exc(F)$,
i.e.,
$$\exc(F)=\sum_{v\in V(G)} c_1(v)-c_2(v)\;.$$
Hence, it is enough to prove that
\begin{equation}
\EE\,\sum_{v\in V(G)} c_2(v)\ge 0.
\label{eq:Ec1}
\end{equation}
To prove (\ref{eq:Ec1}), we consider the expectation of $c_2(v)$ for individual vertices $v$ of $G$.

If $v$ is a vertex of $G$ of degree two,
then every cycle of $G$ that contains $v$ has length at least eight by (CT1).
With probability $1/3$, the vertex $v$ is isolated and looses one unit charge;
with probability $2/3$, it is contained in a cycle and looses at most $2/8=1/4$ units of charge.
We conclude that
$$\EE c_2(v)\ge \frac{4}{7}-\frac{1}{3}-\frac{2}{3}\cdot\frac{1}{4}=\frac{1}{14}>0\;.$$
If $v$ is a vertex of $G$ of degree three with type $(\ell_1,\ell_2,\ell_3)$,
we proceed as follows.
Since each edge incident with $v$ is contained in $F$ with probability $2/3$,
$v$ is contained in a cycle of $F$ with a particular pair of its neighbors with probability $1/3$.
It follows that the expected value of $c_1(v)$ is at least
$$\EE c_2(v)\ge\frac{2}{7}-\frac{1}{3}\left(\frac{2}{\ell_1}+\frac{2}{\ell_2}+\frac{2}{\ell_3}\right)\;.$$
Since $G$ is clean, the type of $v$ dominates $(6,6,6)$.
If the type of $v$ dominates $(7,7,7)$, then $\EE c_1(v)\ge 0$.
Hence, we focus on vertices contained in cycles of length six in $G$ in the rest of the proof.

Let $K=v_1\ldots v_6$ be a cycle of length six in $G$.
Since $G$ is clean, each vertex of $K$ has degree three.
Suppose that $K$ is not a $\theta$-cycle.
By (CT2), $K$ is disjoint from all other cycles of length six in $G$.
Observe that
\begin{itemize}
\item if the type of $v_i$ dominates $(6,8,8)$, then $\EE c_2(v_i)\ge \frac{1}{126}$,
\item if the type of $v_i$ dominates $(6,7,7)$, then $\EE c_2(v_i)\ge -\frac{1}{63}$,
\item if the type of $v_i$ dominates $(6,7,8)$, then $\EE c_2(v_i)\ge -\frac{1}{252}$, and
\item if the type of $v_i$ dominates $(6,7,9)$, then $\EE c_2(v_i)\ge \frac{1}{189}$.
\end{itemize}
Since the type of the cycle $K$ dominates one of the three multisets listed in Lemma~\ref{lemma-6types},
it holds that
$$\EE c_2(v_1)+\cdots+c_2(v_6)\ge 0\;.$$

It remains to analyze the case that $K$ is a $\theta$-cycle.
By symmetry, we can assume that the vertices $v_1$ and $v_4$ are its poles.
Let $x_i$ be the neighbor of $v_i$ outside of $K$, $i=1,\ldots,6$.
Further, let $P=x_6v_6v_1v_2x_2$, $P_1=x_6v_6v_1$ and $P_2=x_2v_2v_1$.
Since each of the paths $P_1$ and $P_2$ is contained in $F$ with probability $1/3$,
the subgraph $F$ contains the path $P$ with probability at most $1/3$;
let $p$ be this probability.
Since $G$ is clean (and so proper), the distance between $x_2$ and $x_3$ in $G-V(K)$ is at least three;
likewise, the distance between $x_5$ and $x_6$ in $G-V(K)$ is at least three.
Hence, any cycle containing $P_1$ or $P_2$ has length at least $10$, and
any cycle containing $P$ has length at least $14$.
Since $F$ contains the path $P$ with probability $p$,
the path $P_1$ but not $P$ with probability $1/3-p$,
the path $P_2$ but not $P$ with probability $1/3-p$, and
neither $P_1$ nor $P_2$ with probability $1/3+p$,
it follows that
$$\EE c_2(v_1)=\frac{2}{7}-p\cdot\frac{1}{7}-2\left(\frac{1}{3}-p\right)\cdot\frac{1}{5}-\left(\frac{1}{3}+p\right)\cdot\frac{1}{3}=
  \frac{13}{315}-\frac{8}{105}p\ge\frac{1}{63}\;.$$ 
The symmetric argument yields that $\EE c_2(v_4)\ge\frac{1}{63}$.
Since every cycle in $G$ containing the path $P_2$ has length at least $10$,
the type of $v_2$ dominates $(6,6,10)$ and thus $\EE c_2(v_2)\ge -\frac{1}{315}$.
The same holds for vertices $v_3$, $v_5$ and $v_6$.

Let $Q_1$ be the set of all poles of $\theta$-cycles in $G$, and
let $Q_2$ be the set of vertices contained in $\theta$-cycle that are not a pole of a (possibly different) $\theta$-cycle.
Since each vertex of $Q_2$ has a neighbor in $Q_1$, it follows $|Q_2|\le 3|Q_1|$.
The previous analysis yields that
$$\EE\sum_{v\in Q_1\cup Q_2} c_2(v)\ge |Q_1|\left(\frac{1}{63}-3\cdot\frac{1}{315}\right)=\frac{2}{315}|Q_1| \ge 0.$$
Since the set $Q_1\cup Q_2$ and the vertex set of cycles of length six that are not $\theta$-cycles are disjoint,
the inequality (\ref{eq:Ec1}) follows.
\end{proof}

We are ready to prove Theorems~\ref{thm-main} and~\ref{thm-alg}.

\begin{proof}[Proof of Theorem~\ref{thm-main}]
By Observation~\ref{obs-exc},
it is enough to construct a spanning Eulerian subgraph $F$ of $G$ with
$$\exc(F)\le \frac{2(n(G)+n_2(G))}{7}+1.$$
If $G$ is basic, such a subgraph $F$ exists by Observation~\ref{obs-nontarg}, and
can easily be constructed in polynomial time.
If $G$ is not basic,
we can find a reduction $G'$ of $G$ that is either basic or clean in polynomial time by Theorem~\ref{cor-summary}.

If $G'$ is basic, then we find a spanning Eulerian subgraph with
$$\exc(F')\le \frac{2(n(G')+n_2(G'))}{7}+1$$
as in the case when $G$ itself is basic.
If $G'$ is clean, then Lemma~\ref{lemma-expected} yields that
we can construct in polynomial time a spanning Eulerian subgraph $F'$ of $G'$ such that
$$\exc(F')\le \frac{2(n(G')+n_2(G'))}{7}.$$
Since $G'$ is a reduction of $G$, we can find in polynomial time a spanning Eulerian subgraph $F$ of $G$ such that
$$\exc(F)\le \exc(F')+\frac{\delta(G,G')}{4}\le \exc(F')+\frac{2\delta(G,G')}{7}\le \frac{2(n(G')+n_2(G'))}{7}+1\;,$$
which finishes the proof of the theorem.
\end{proof}

\begin{proof}[Proof of Theorem~\ref{thm-alg}]
Let $G$ be an input cubic graph and let $n$ be the number of its vertices.
We assume that $G$ is connected since $G$ would not have a TSP walk otherwise.
Let $F$ be the set of bridges of $G$, which can be found in linear time using the standard algorithm based on DFS.
Further, let $G'$ be the graph obtained from $G$ by removing the edges of $F$, and
let $n_0$ and $n_2$ be the number of its vertices of degree zero and two, respectively.
Note that $G'$ has no vertices of degree one
since if two edges incident with a vertex $v$ in a cubic graph are bridges,
then the third edge incident with $v$ is also a bridge.
Finally, let $k$ be the number of non-trivial components of $G'$,
i.e., the components of $G'$ that are not formed by a single vertex.
Observe that the number of vertices of degree two in $G'$ is at most $2k-2$,
i.e., $n_2\le 2k-2$.

We next apply the algorithm from Theorem~\ref{thm-main} to each non-trivial component of $G'$, and
obtain a collection of $k$ TSP walks such that the sum of their lengths is at most
$$\frac{9}{7}(n-n_0)+\frac{2}{7}n_2-k\;\mbox{.}$$
These $k$ TSP walks can be connected by traversing each of the edges of $F$ twice,
which yields a TSP walk in $G$ of total length at most
\begin{equation}
\frac{9}{7}(n-n_0)+\frac{2}{7}n_2-k+2|F|\;\le\;\frac{9}{7}(n-n_0)+2|F|\;\mbox{.}\label{eq-alg-1}
\end{equation}
The inequality in (\ref{eq-alg-1}) follows from the inequality $n_2\le 2k-2$,
which we have observed earlier in the proof.
Since any TSP walk in $G$ must have length at least $(n-n_0)+2|F|$,
the upper bound in (\ref{eq-alg-1}) on the length of the constructed TSP walk
is at most the multiple of $9/7$ of the length of the optimal TSP walk in $G$,
which yields the desired approximation factor of the algorithm.
\end{proof}

\section{Lower bounds}\label{sec-lb}

In this section, we provide two constructions of 2-connected subcubic graphs that illustrate that
the bound claimed in Conjecture~\ref{conj-main} would be the best possible.
The constructions are based on two operations that we analyze in Lemmas~\ref{lemma-drepl} and~\ref{lemma-qrepl}.

\begin{figure}
\begin{center}
\includegraphics{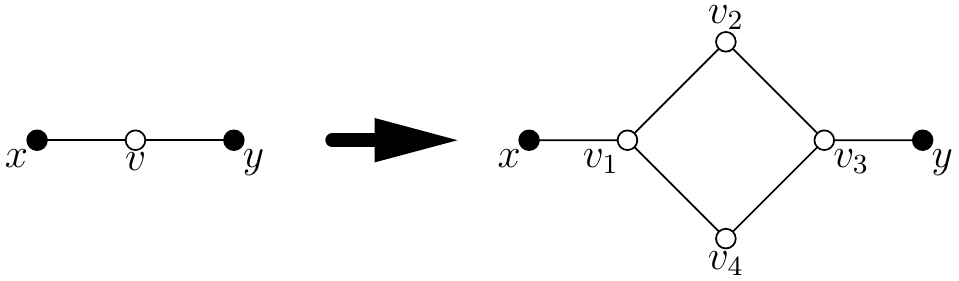}
\end{center}
\caption{Replacing a vertex of degree two with a cycle of length four in Lemma~\ref{lemma-drepl}.}\label{fig-DGv}
\end{figure}

\begin{lemma}\label{lemma-drepl}
Let $G$ be a 2-connected subcubic graph,
let $v$ be a vertex of $G$ that has exactly two neighbors, and
let $x$ and $y$ be its two neighbors.
Further, let $G'$ be the graph obtained from $G$ by removing the vertex $v$,
adding a cycle $v_1v_2v_3v_4$ and edges $xv_1$ and $yv_3$ as in Figure~\ref{fig-DGv}.
The graph $G'$ is a 2-connected subcubic graph and
it holds that $n(G')=n(G)+3$, $n_2(G')=n_2(G)+1$ and $\minexc(G')=\minexc(G)+1$.
\end{lemma}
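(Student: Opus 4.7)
My plan is to establish the structural assertions first, then prove the two inequalities $\minexc(G') \le \minexc(G) + 1$ and $\minexc(G') \ge \minexc(G) + 1$ separately. For the structural part, I would verify by direct degree inspection that $G'$ is simple and subcubic: the vertices $v_1, v_3$ have degree three, $v_2, v_4$ have degree two, and $x, y$ retain their degrees from $G$ (the edge $vx$ is replaced by $v_1x$, and similarly for $y$). Two-connectedness of $G'$ follows because $G-v$ is connected (as $G$ is $2$-connected), and within the gadget any single removed vertex still leaves a path between $x$ and $y$ passing through the remaining gadget vertices. The counts $n(G')=n(G)+3$ and $n_2(G')=n_2(G)+1$ are immediate: we lose the degree-two vertex $v$ and gain the two degree-two vertices $v_2,v_4$.

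For the upper bound, I would take $F$ with $\exc(F)=\minexc(G)$ and extend it to $G'$ based on the status of $v$. If $v$ is isolated in $F$, let $F'=(F-v)\cup \{v_1v_2,v_2v_3,v_3v_4,v_4v_1\}$; this creates one new non-trivial component (the $4$-cycle, with $v_1,\ldots,v_4$ non-isolated) and removes the isolated $v$, so $\exc(F')=\exc(F)+2-1=\exc(F)+1$. If $v$ lies on a cycle of $F$, so that both $xv$ and $yv$ belong to $F$, let $F'$ be obtained from $F$ by replacing the path $xvy$ with $xv_1v_2v_3y$ while keeping $v_4$ as an isolated vertex; this preserves the cycle count and adds one isolated vertex, so $\exc(F')=\exc(F)+1$ again.

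For the lower bound, I would take any spanning Eulerian $F'$ of $G'$ and build $F$ in $G$ by case analysis on the status of $v_2$ and $v_4$ in $F'$. Since each of $v_2,v_4$ has degree two in $G'$, each is either isolated in $F'$ or else both of its incident edges lie in $F'$; the parity of the degree at $v_1$ and $v_3$ (each of degree three in $G'$) then dictates whether $xv_1$ and $yv_3$ lie in $F'$. If both $v_2,v_4$ are non-isolated, then $v_1v_2,v_1v_4\in F'$ forces $xv_1\notin F'$, symmetrically $yv_3\notin F'$, and the gadget is a $4$-cycle component of $F'$; deleting it and isolating $v$ yields $\exc(F)=\exc(F')-1$. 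If exactly one of $v_2,v_4$ is isolated, say $v_4$, then parity forces $xv_1,yv_3\in F'$ and the path $xv_1v_2v_3y$ lies on a cycle of $F'$; replacing the gadget by the path $xvy$ preserves the cycle count and drops one isolated vertex, giving $\exc(F)=\exc(F')-1$. If both $v_2,v_4$ are isolated, then $xv_1,yv_3\notin F'$ by parity and all four gadget vertices are isolated; replacing them by a single isolated $v$ yields $\exc(F)=\exc(F')-3$. In every case $\exc(F)\le\exc(F')-1$, which combined with the upper bound gives the claimed equality.

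The main obstacle is keeping the case analysis for the lower bound clean; once the parity constraints at $v_1$ and $v_3$ are used to collapse the a priori possibilities into the four exhaustive cases above, each reduction is immediate.
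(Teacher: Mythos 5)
Your proof is correct and follows essentially the same approach as the paper: both inequalities are established by translating spanning Eulerian subgraphs between $G$ and $G'$ through the gadget, with the same case split on whether $v$ (resp.\ the gadget) carries a cycle or is isolated. Your lower-bound case analysis is in fact slightly more complete than the paper's, which dismisses the all-isolated configuration with a "by symmetry, we can assume" (implicitly using minimality of $F'$), whereas you handle it explicitly and observe it only helps.
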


\begin{proof}
It is clear that $G'$ is a 2-connected subcubic graph such that
$n(G')=n(G)+3$ and $n_2(G')=n_2(G)+1$.
So, we need to show that $\minexc(G')=\minexc(G)+1$.
We start with showing that $\minexc(G')\le\minexc(G)+1$.
Let $F$ be a spanning Eulerian subgraph of $G$ with $\exc(F)=\minexc(G)$.
We now construct a spanning Eulerian subgraph $F'$ of $G'$.
If $v$ is an isolated vertex in $F$, then $F'$ contains all the edges of $F$ and the cycle $v_1v_2v_3v_4$.
Note that $c(F')=c(F)+1$ and $i(F')=i(F)-1$,
Otherwise, $v$ has degree two in $F$ and we let $F'$ to contain the edges $xv_1$, $v_1v_2$, $v_2v_3$, $v_3y$ and
all the edges of $F$ except for $vx$ and $vy$.
In this case, we have that $c(F')=c(F)$ and $i(F')=i(F)+1$.
In both cases, we get that $\exc(F')=\exc(F)+1=\minexc(G)+1$,
which implies that $\minexc(G')\le\minexc(G)+1$.

We next prove that $\minexc(G)\le\minexc(G')-1$.
Consider a spanning Eulerian subgraph $F'$ of $G'$ with $\exc(F')=\minexc(G')$.
We reverse the transformation described in the previous paragraph.
By symmetry, we can assume that $F'$ contains either the cycle $v_1v_2v_3v_4$ or the path $xv_1v_2v_3y$.
In the former case, let $F$ be the spanning Eulerian subgraph of $G$ containing all the edges of $F'$
except for the edges of the cycle $v_1v_2v_3v_4$.
In the latter case, let $F$ be the spanning Eulerian subgraph of $G$ containing the edges $vx$, $vy$ and
all the edges of $F'$ except for the edges $xv_1$, $v_1v_2$, $v_2v_3$, $v_3y$.
In both cases, it holds that $\exc(F)=\exc(F')-1$,
which implies that $\minexc(G)\le\minexc(G')-1$ as desired.
\end{proof}

Repeated applications of the operation described in Lemma~\ref{lemma-drepl}
starting with the graph $K_{2,3}$ yields the following.

\begin{proposition} \label{prop-drepl}
For every integer $n\ge 5$, $n\equiv 2\;{\rm mod}\; 3$,
there exists a $2$-connected subcubic $n$-vertex graph $G$ such that
$$\minexc(G)=\frac{n(G)+n_2(G)}{4}+1.$$
\end{proposition}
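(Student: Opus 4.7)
The plan is to construct the required family of graphs inductively by repeatedly applying the operation of Lemma~\ref{lemma-drepl}, starting from the graph $K_{2,3}$. Since $n \equiv 2 \pmod 3$ and $n \ge 5$, we can write $n = 5 + 3k$ for some integer $k \ge 0$, so the strategy is to produce a graph $G_k$ with $n(G_k) = 5 + 3k$ vertices satisfying the stated equality.

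First I would verify the base case $k = 0$. Take $G_0 = K_{2,3}$, which is a $2$-connected subcubic graph with $n(G_0) = 5$ and $n_2(G_0) = 3$ (the two vertices of the smaller part have degree three, the three vertices of the larger part have degree two). The graph $G_0$ has a spanning Eulerian subgraph consisting of a $4$-cycle through four of its vertices together with the remaining vertex as an isolated vertex, yielding excess $2 + 1 = 3$, and a short case analysis shows no spanning Eulerian subgraph achieves smaller excess. Hence $\minexc(G_0) = 3 = \frac{5 + 3}{4} + 1$, confirming the base case.

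For the inductive step, given $G_k$ satisfying the claim, choose any vertex $v$ of degree two in $G_k$ (such a vertex exists since $n_2(G_k) = 3 + k \ge 3$) and let $G_{k+1}$ be the graph obtained from $G_k$ by applying the operation of Lemma~\ref{lemma-drepl} to $v$. By that lemma, $G_{k+1}$ is a $2$-connected subcubic graph with
$$n(G_{k+1}) = n(G_k) + 3, \quad n_2(G_{k+1}) = n_2(G_k) + 1, \quad \minexc(G_{k+1}) = \minexc(G_k) + 1.$$
Combining these three equalities with the inductive hypothesis gives $n(G_{k+1}) = 5 + 3(k+1)$, $n_2(G_{k+1}) = 3 + (k+1)$, and
$$\minexc(G_{k+1}) = \minexc(G_k) + 1 = \frac{n(G_k) + n_2(G_k)}{4} + 1 + 1 = \frac{n(G_{k+1}) + n_2(G_{k+1})}{4} + 1,$$
where the last equality uses that $n + n_2$ increases by exactly $4$ at each step.

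There is no real obstacle in this argument: Lemma~\ref{lemma-drepl} does all the heavy lifting by simultaneously giving the increments of $n$, $n_2$, and $\minexc$, and the base case is a direct check for $K_{2,3}$. The only thing to confirm along the way is that the operation can always be applied, which follows because $n_2(G_k) \ge 3$ for every $k \ge 0$, so a degree-two vertex is always available.
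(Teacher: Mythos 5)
Your proposal is correct and is exactly the argument the paper intends: the paper's one-sentence justification ("repeated applications of the operation described in Lemma~\ref{lemma-drepl} starting with the graph $K_{2,3}$") is precisely your induction, and your base-case check that $\minexc(K_{2,3})=3$ and your observation that $n+n_2$ increases by $4$ while $\minexc$ increases by $1$ at each step fill in the details correctly.
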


The second operation is more involved.
A \emph{diamond} in a graph $G$ is an induced subgraph isomorphic to $K_4^-$,
i.e., the graph $K_4$ with one edge removed.

\begin{figure}
\begin{center}
\includegraphics[width=120mm]{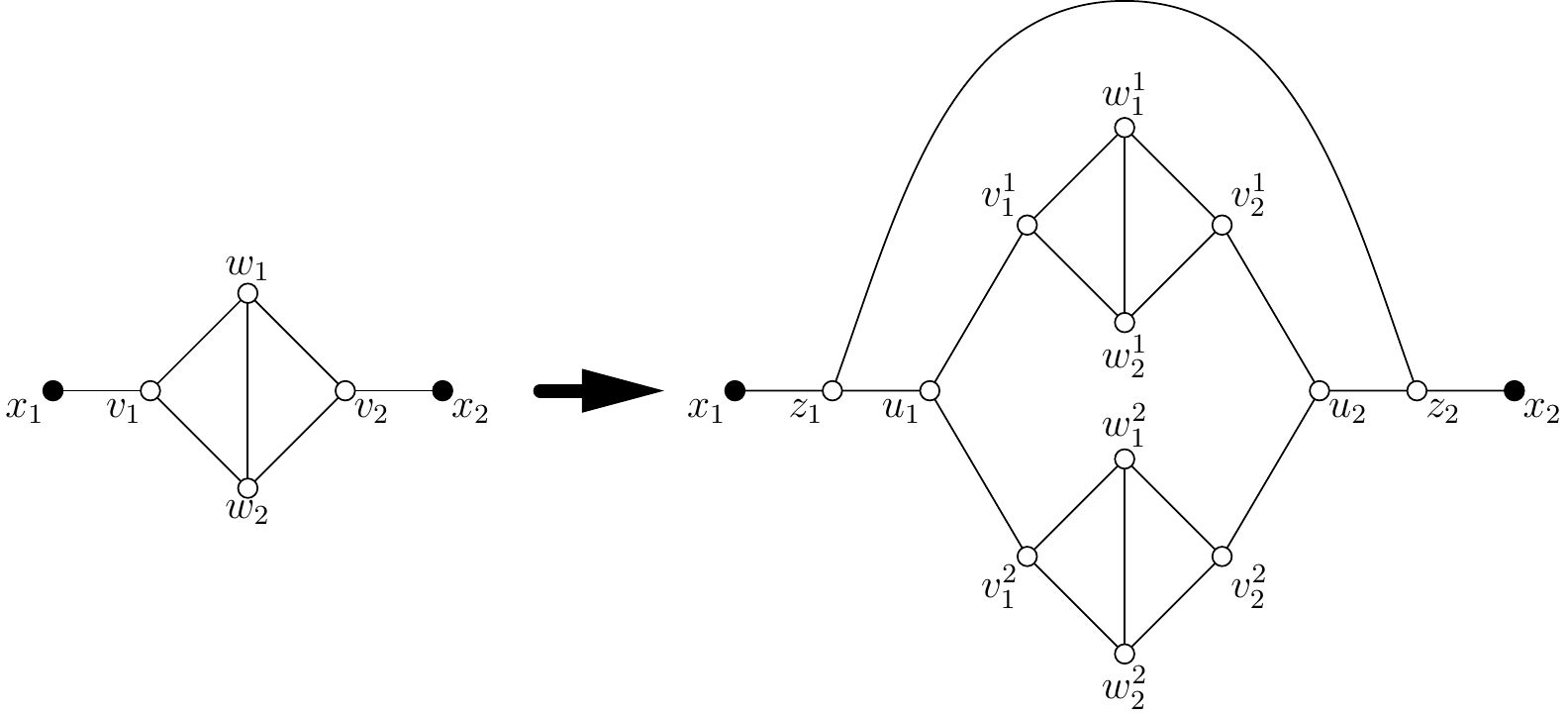}
\end{center}
\caption{The operation of replacing a diamond analyzed in Lemma~\ref{lemma-qrepl}.}\label{fig-QGK}
\end{figure}

\begin{lemma}\label{lemma-qrepl}
Let $G$ be a 2-connected cubic graph containing a diamond $D$.
Let $v_1$, $v_2$, $w_1$ and $w_2$ be the vertices of the diamond as depicted in Figure~\ref{fig-QGK}, and
let $x_1$ and $x_2$ be the neighbors of $v_1$ and $v_2$ outside of the diamond $D$.
Further, let $G'$ be the graph obtained from $G$ by removing the vertices of the diamond $D$ and
inserting the subgraph depicted in Figure~\ref{fig-QGK}.
The graph $G'$ is a 2-connected cubic graph with $n(G')=n(G)+8$ and $\minexc(G')=\minexc(G)+2$.
Moreover, the graph $G'$ contains at least two diamonds.
\end{lemma}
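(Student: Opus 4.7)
The plan is to first dispose of the structural claims and then establish the equality $\minexc(G')=\minexc(G)+2$ via matching upper and lower bounds. The structural facts — that $G'$ is $2$-connected cubic with $n(G')=n(G)+8$ and contains at least two diamonds — follow immediately by inspection of Figure~\ref{fig-QGK}: each internal vertex of the replacement subgraph has degree three, the replacement attaches to the rest of $G$ only at $x_1$ and $x_2$ (preserving the cubic degrees at those vertices), $2$-connectivity is preserved because the replacement subgraph minus the attachment points is still connected, and two diamonds are visibly present inside the replacement.

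For the upper bound $\minexc(G')\le\minexc(G)+2$, I would take a spanning Eulerian subgraph $F$ of $G$ with $\exc(F)=\minexc(G)$ and lift it to a spanning Eulerian subgraph $F'$ of $G'$ with $\exc(F')=\exc(F)+2$. The key observation is the parity argument on the $2$-edge cut $\{x_1v_1,x_2v_2\}$: any Eulerian subgraph intersects an edge-cut in an even number of edges, so either $F$ contains both of these edges or neither. In the first case $F\cap(D\cup\{x_1v_1,x_2v_2\})$ is one of a small number of paths from $x_1$ to $x_2$ through $D$ (possibly leaving one vertex of $D$ isolated); in the second case, $F\cap D$ is an Eulerian subgraph of the diamond — either empty (all four vertices isolated), a triangle through $w_1w_2$ with the opposite $v_i$ isolated, or the four-cycle $v_1w_1v_2w_2$. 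In each case I would read off Figure~\ref{fig-QGK} a compatible Eulerian configuration on the replacement subgraph whose contribution to $\exc$ exceeds that of $F\cap D$ by exactly two (for example, by producing one extra non-trivial cycle or two extra isolated vertices inside the replacement).

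For the lower bound $\minexc(G)\le\minexc(G')-2$ I would reverse this procedure: take an $F'$ with $\exc(F')=\minexc(G')$, apply the parity argument to the $2$-edge cut separating the replacement subgraph from the rest of $G'$ (so that both attachment edges are in $F'$ or neither is), and contract the portion of $F'$ inside the replacement to an analogous Eulerian configuration on $D$, producing $F$ with $\exc(F)\le\exc(F')-2$. The main obstacle — and the most delicate step — is precisely this inequality: one must show in every sub-case that the replacement subgraph restricted to $F'$ contributes at least two more to $\exc(F')$ than the chosen Eulerian pattern on $D$ contributes to $\exc(F)$. The two-diamond structure of the replacement is what makes this gap unavoidable: each of the two inner diamonds, by exactly the same parity constraint just used, forces an unavoidable baseline excess contribution (either at least two isolated vertices on its $w$-vertices or an extra short cycle), and the presence of two such baselines in place of the single one for $D$ accounts for the extra two units of excess. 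Since the replacement has only constantly many vertices, the verification reduces to a finite but routine enumeration of how $F'$ can traverse it, parameterised by the joint behaviour on the two attachment edges.
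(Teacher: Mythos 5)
Your proposal is correct and follows essentially the same route as the paper: lift the two canonical configurations on the diamond (the path $x_1v_1w_1w_2v_2x_2$ or the cycle $v_1w_1v_2w_2$) to configurations on the replacement gadget that gain excess exactly two, and reverse this via a finite case analysis for the lower bound, which the paper likewise dispatches as ``a simple case analysis using that $F'$ has the minimum possible excess.'' Your explicit parity argument on the $2$-edge-cut $\{x_1v_1,x_2v_2\}$ is the observation the paper leaves implicit behind ``by symmetry, we can assume,'' and is a clean way to justify that normalization.
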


\begin{proof}
As in the proof of Lemma~\ref{lemma-drepl},
the only non-trivial assertion of the lemma is that $\minexc(G')=\minexc(G)+2$.
Let the labels of the vertices be as in Figure~\ref{fig-QGK}. 
We start with showing $\minexc(G')\le\minexc(G)+2$.
Consider a spanning Eulerian subgraph $F$ of $G$ with $\exc(F)=\minexc(G)$.
By symmetry, we can assume that the subgraph $G$ contains either the path $x_1v_1w_1w_2v_2x_2$ or the cycle $v_1w_1v_2w_2$.
In the former case, let $F'$ be the spanning subgraph of $G'$ that
contains the path $x_1z_1u_1v^1_1w^1_1w^1_2v^1_2u_2z_2x_2$ and the cycle $v^2_1w^2_1v^2_2w^2_2$
instead of the path $x_1v_1w_1w_2v_2x_2$.
In the latter case, $F'$ is the spanning subgraph of $G'$ that
contains the cycles $z_1u_1v^1_1w^1_1w^1_2v^1_2u_2z_2$ and $v^2_1w^2_1v^2_2w^2_2$
instead of the cycle $v_1w_1v_2w_2$.
In both cases, it holds that $c(F')=c(F)+1$ and $i(F')=i(F)$,
which implies that $\minexc(G')\le\exc(F')=\exc(F)+2=\minexc(G)+2$.

We next prove the opposite inequality $\minexc(G)\le\minexc(G')-2$.
Let $F'$ be a spanning Eulerian subgraph of $G'$ with $\exc(F')=\minexc(G')$.
A simple case analysis using that $F'$ has the minimum possible excess yields that
we can assume that $F'$ contains either the path $x_1z_1u_1v^1_1w^1_1w^1_2v^1_2u_2z_2x_2$ and the cycle $v^2_1w^2_1v^2_2w^2_2$ or
the cycles $z_1u_1v^1_1w^1_1w^1_2v^1_2u_2z_2$ and $v^2_1w^2_1v^2_2w^2_2$.
In both cases, we can reverse the operation described in the previous paragraph
to get a spanning Eulerian subgraph $F$ of $G$ with $\exc(F)=\exc(F')-2$.
It follows that $\minexc(G)\le\exc(F)=\exc(F')-2=\minexc(G')-2$ as desired.
\end{proof}

Consider the cubic graph formed by two diamonds and two edges joining the vertices of degree two in different diamonds, and
repeatedly apply the operation described in Lemma~\ref{lemma-qrepl}.

\begin{proposition} \label{prop-qrepl}
For every integer $n\ge 8$, $n\equiv 0\;{\rm mod}\; 8$,
there exists a $2$-connected cubic $n$-vertex graph $G$
with $\minexc(G)=n/4$.
\end{proposition}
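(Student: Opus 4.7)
The plan is to construct, by induction on $k\ge 0$, a sequence of $2$-connected cubic graphs $G_k$ satisfying $n(G_k)=8(k+1)$, $\minexc(G_k)=2(k+1)$, and containing at least one diamond; setting $k=n/8-1$ then produces a graph of the type required by the proposition for every admissible $n$.

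For the base case, let $G_0$ be the graph on vertex set $\{v_1,v_2,w_1,w_2,v_1',v_2',w_1',w_2'\}$ formed by two diamonds -- one on $\{v_1,v_2,w_1,w_2\}$ with $v_1,v_2$ being the degree-two vertices of the diamond, another on $\{v_1',v_2',w_1',w_2'\}$ with $v_1',v_2'$ being the degree-two vertices -- together with the two ``connecting'' edges $v_1v_1'$ and $v_2v_2'$. First I would check directly that $G_0$ is cubic, $2$-connected, and contains (two) diamonds. Then I would verify $\minexc(G_0)=2$: the lower bound $\minexc(G_0)\ge 2$ is immediate since any non-empty spanning Eulerian subgraph $F$ satisfies $\exc(F)=2c(F)+i(F)\ge 2$, while the upper bound is witnessed by the Hamilton cycle $v_1w_1w_2v_2v_2'w_1'w_2'v_1'v_1$, whose corresponding spanning $2$-factor has excess exactly $2$.

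For the inductive step, assuming $G_k$ satisfies the listed properties and in particular contains a diamond, I would apply Lemma~\ref{lemma-qrepl} to any such diamond to obtain $G_{k+1}$. By that lemma, $G_{k+1}$ is $2$-connected and cubic, with $n(G_{k+1})=n(G_k)+8$ and $\minexc(G_{k+1})=\minexc(G_k)+2$; moreover, $G_{k+1}$ contains at least two diamonds, which in particular ensures that the inductive construction can be continued indefinitely. The desired equalities $n(G_{k+1})=8(k+2)$ and $\minexc(G_{k+1})=2(k+2)=n(G_{k+1})/4$ then follow immediately from the inductive hypothesis.

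I do not foresee a substantial obstacle here: essentially all of the work has been absorbed into Lemma~\ref{lemma-qrepl}, so the only remaining content is to verify the base case -- chiefly the Hamilton cycle in $G_0$ that yields $\minexc(G_0)=2$ -- and to run the straightforward induction.
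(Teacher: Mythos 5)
Your proposal is correct and matches the paper's construction exactly: the paper also starts from the $8$-vertex graph formed by two diamonds joined by two edges and repeatedly applies Lemma~\ref{lemma-qrepl}. Your write-up is in fact slightly more careful than the paper's, since you explicitly verify the base case $\minexc(G_0)=2$ via the Hamilton cycle and note that the diamonds guaranteed by the lemma keep the induction running.
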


Propositions~\ref{prop-drepl} and~\ref{prop-qrepl}, and Observation~\ref{obs-exc} yield that
neither the coefficient $5/4$ nor the coefficient $1/4$ in Conjecture~\ref{conj-main} can be improved.
Indeed, for every $\alpha<5/4$,
there exist infinitely many 2-connected cubic graphs $G$ with $\tsp(G)>\alpha n(G)+o(n(G))$ by Proposition~\ref{prop-qrepl}.
Likewise, for every $\beta<1/4$,
there exist infinitely many 2-connected subcubic graphs $G$ with $\tsp(G)>\frac{5}{4}n(G)+\beta n_2(G)+o(n(G))$.
While neither of the two coefficient in Conjecture~\ref{conj-main} can be improved in general,
it may be possible to prove better bounds under some additional structural assumptions.
In particular, Conjecture~\ref{conj-main} asserts that $\tsp(G)\le 3n(G)/2$ for 2-connected subcubic graph
while Boyd et al.~\cite{boyd} proved that $\tsp(G)\le 4n(G)/3$ for such graphs $G$,
which is tight up to an additive constant.

\bibliographystyle{siam}
\bibliography{cubictsp}

\end{document}